\documentclass[11pt]{article}

\usepackage[letterpaper, portrait, margin=1in]{geometry}
\usepackage[usenames,dvipsnames,svgnames,table]{xcolor}
\usepackage[colorlinks=true,linkcolor=RawSienna,citecolor=ForestGreen]{hyperref}
\usepackage{algorithm}
\usepackage[noend]{algpseudocode}
\usepackage{url}
\usepackage{amsmath,amssymb,amsthm}
\usepackage{nicefrac}
\usepackage{thmtools,thm-restate}
\usepackage[noabbrev,capitalise,nameinlink]{cleveref}
\usepackage{mathtools}
\usepackage{xspace}
\usepackage{verbatim}
\usepackage{mathrsfs}
\usepackage{pgf}

\usepackage{todo}
\usepackage{tabularx}
\usepackage{enumitem}
\usepackage{tikz}
\usepackage{array}
\usepackage{dsfont}

\newcommand*\ie{i.\kern.1em e.\ }
\newcommand*\eg{e.\kern.1em g.\ }

\theoremstyle{plain}
\newtheorem{theorem}{Theorem}[section]
\newtheorem{lemma}[theorem]{Lemma}

\newtheorem{proposition}[theorem]{Proposition}
\newtheorem{claim}[theorem]{Claim}
\newtheorem{corollary}[theorem]{Corollary}
\newtheorem{question}[theorem]{Question}

\newtheorem{observation}[theorem]{Observation}

\theoremstyle{definition}

\newtheorem{definition}[theorem]{Definition}
\newtheorem{remark}[theorem]{Remark}
\newtheorem{example}[theorem]{Example}

\theoremstyle{plain}


\newcommand{\ignore}[1]{}

\DeclareMathOperator{\poly}{poly}

\newcommand{\dist}{\mathsf{dist}}






\newcommand{\define}{\vcentcolon=}



\newcommand{\inn}[1]{\langle #1 \rangle}


\newcommand{\zo}{\{0,1\}}


\newcommand{\col}{\mathsf{col}}

%
\newcommand{\fixedovalbox}[1]{%
  \tikz[baseline=(X.base)]\node(X)[draw, rounded corners, inner sep=3pt ]{#1};%
}

\newcommand{\domino}[2]{\small %
  \raisebox{-0.5em}{%
\fixedovalbox{$\stackrel{\stackrel{\scriptstyle #1}{\text{---}}}{\scriptstyle #2}$}}%
}



\newcommand{\cM}{\ensuremath{\mathcal{M}}}

\newcommand{\cP}{\ensuremath{\mathcal{P}}}
\newcommand{\cQ}{\ensuremath{\mathcal{Q}}}

\newcommand{\cY}{\ensuremath{\mathcal{Y}}}
\newcommand{\cW}{\ensuremath{\mathcal{W}}}
\newcommand{\cX}{\ensuremath{\mathcal{X}}}


\newcommand{\bN}{\ensuremath{\mathbb{N}}}

\newcommand{\bR}{\ensuremath{\mathbb{R}}}

\newcommand{\bZ}{\ensuremath{\mathbb{Z}}}



\newcommand{\sD}{\mathsf D}

\usepackage{fancybox}
\usepackage{soul}
\usepackage{parskip}

\newtoggle{anonymous}
\togglefalse{anonymous}

\newcommand{\GT}{\mathsf{GT}}


\newcommand{\CCC}{\mathsf{BPP}^0}
\newcommand{\BPP}{\mathsf{BPP}}
\newcommand{\EHD}{\mathsf{EHD}}
\newcommand{\THD}{\mathsf{THD}}
\newcommand{\IIP}{\mathsf{IIP}}

\title{No Complete Problem for Constant-Cost Randomized Communication}

\iftoggle{anonymous}{
\author{Anonymous Author(s)}
}{%
\author{
Yuting Fang \\ Ohio State University, USA \\ \texttt{fang.564@osu.edu}
\and
Lianna Hambardzumyan\thanks{Research partially supported by ISF grants 921/22 and 2635/19.} \\ Hebrew University of Jerusalem, Israel \\ \texttt{lianna.hambardzumyan@mail.huji.ac.il}
\and
Nathaniel Harms\thanks{Supported by an NSERC postdoctoral fellowship and the Swiss
State Secretariat for Education, Research, and Innovation (SERI) under contract number MB22.00026.} \\ EPFL, Switzerland \\ \texttt{nathaniel.harms@epfl.ch}
\and
Pooya Hatami\thanks{Supported by NSF grant CCF-1947546.} \\ Ohio State University, USA \\ \texttt{hatami.2@osu.edu}
}
}

\date{}

\begin{document}

\maketitle

\begin{abstract}
We prove that the class of communication problems with public-coin randomized constant-cost 
protocols, called $\CCC$, does not contain a complete problem. In other words, there is no
randomized constant-cost
problem $Q \in \CCC$, such that all other problems $P \in \CCC$
can be computed by a constant-cost \emph{deterministic} protocol with access to an oracle for
$Q$. We also show that the \textsc{$k$-Hamming Distance} problems form an infinite hierarchy within $\CCC$. Previously, it was known only that \textsc{Equality}
is not complete for $\CCC$.
We introduce a new technique, using Ramsey theory, that
can prove lower bounds against arbitrary oracles in $\CCC$, and more generally, we show that \textsc{$k$-Hamming Distance} matrices cannot be expressed as a Boolean combination of any constant number of matrices which forbid large \textsc{Greater-Than} subproblems.
\end{abstract}

\thispagestyle{empty}
\setcounter{page}{0}
\newpage

\section{Introduction}

One of the main goals in communication complexity is to understand the power of
randomized communication. The standard example is the \textsc{Equality} problem,
where two parties Alice and Bob are given strings $x,y \in \zo^n$, respectively,
and must decide if $x = y$. Given a shared source of randomness, Alice and Bob
can solve this problem with probability $\nicefrac 3 4$ using only 2 bits of
communication, regardless of input size, whereas a deterministic protocol
requires $n$ bits of communication. The \textsc{Equality} problem is therefore one of the most
extreme possible examples of the power of randomized communication, and to understand the power of randomness it is important to understand such extremes. For this purpose we define the class $\CCC$ of communication problems that, like
\textsc{Equality}, have constant-cost randomized public-coin protocols
(hereafter called merely \emph{constant-cost protocols}, see \cref{def:ccc}). The focused study of $\CCC$ was initiated by
\cite{HHH23eccc,HWZ22}, because:
\begin{itemize}[leftmargin=*, topsep=0pt, itemsep=0pt]

\item[-] There are many connections to other areas, including operator theory and Fourier analysis \cite{HHH23eccc}, learning \cite{LS09,FX14,HHPTZ22,HHM23,HZ24}, graph sparsity \cite{HWZ22,EHK22}, and implicit graph representations \cite{Har20,HWZ22,EHK22,EHZ23,HH22,NP23,HZ24}. 
    \item[-] Communication complexity is often applied to find lower bounds for
    other problems, and constant vs.~non-constant is the most basic lower bound
    question that one can ask, and yet it is often challenging --  several
    surprisingly non-trivial and natural communication problems are in $\CCC$
    (\eg computing small distances in planar graphs \cite{Har20,HWZ22,EHK22}, deciding incidence of
    certain low-dimensional point-halfspace arrangements \cite{HWZ22,HZ24},
    etc.). There are many lower bound techniques
    in the literature, but they often do not help answer questions about
    constant-cost communication, so we must develop new techniques (as in this
    paper).
    \item[-] Constant-cost
communication is a more ``fine-grained'' approach to understanding randomized
communication, which makes distinctions between different uses of randomness
(\eg public vs.~private, or \textsc{Equality} vs.~\textsc{Greater-Than}) that
are usually not differentiated, and allows for better understanding of ``dimension-free'' relations between
matrix parameters \cite{HHH23eccc}.
\item[-] If we wish to identify the structure of problems which allow for
efficient randomized communication, then we expect this structure to be most
evident in the constant-cost problems. Some open problems about randomized
communication  remain open even when restricted to their $\CCC$ versions,
including the size of monochromatic rectangles \cite{CLV19,HHH23eccc}, the role of
one- vs.~two-sided error and the existence of a complete problem.
Answering these questions for $\CCC$ is a first step towards the more general
answers. Constant-cost communication may be restrictive enough that one might
even hope to find a complete characterization of the problems in this class to
answer these questions.
\end{itemize}
See also the recent survey \cite{HH24} for more details.
Relevant to all of these motivations is the idea that $\CCC$ might contain a
\emph{complete} problem, \ie one ``truly randomized'' constant-cost protocol
$P$, such that all other constant-cost protocols can be rewritten as
deterministic protocols using $P$ as a subroutine (see
\cref{section:intro-reductions} for formal definitions). Identifying a complete
problem for $\CCC$ would answer almost all questions about $\CCC$ and provide a
nearly complete understanding of the most extreme examples of the power of
randomized communication. Earlier work \cite{HWZ22,HHH23eccc} proved that
\textsc{Equality} is \emph{not} complete for $\CCC$. We prove that there is
\emph{no} complete problem:

\begin{theorem}
\label{thm:intro-no-complete-problem}
There is no complete problem for $\CCC$.
\end{theorem}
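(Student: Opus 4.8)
The plan is to argue by contradiction, using the $k$-Hamming Distance problems $\THD_k$ as the family that no single oracle can simulate. Suppose some $Q\in\CCC$ were complete, with a public-coin protocol of constant cost $c_Q$. The first step is to record that $Q$ forbids large \textsc{Greater-Than} subproblems in a \emph{uniform} way: since the public-coin randomized complexity of \textsc{Greater-Than} on $[s]$ is $\Theta(\log\log s)$, any subproblem of $Q$ isomorphic to $\GT_s$ would be solvable with cost $\le c_Q$, forcing $s\le s_0:=2^{2^{O(c_Q)}}$. As $s_0$ does not depend on the input length, the communication matrix $Q_n$ of $Q$ at \emph{every} input length $n$ is $\GT_{s_0}$-free.

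Next I would unpack what completeness buys us for $\THD_k$ (which lies in $\CCC$ for every $k$). A constant-cost deterministic protocol with an oracle for $Q$, of cost $c=c(k)$, computing $\THD_k$ expresses, for each input length $n$, the matrix $M_{\THD_k,n}$ as $f\bigl(M^{(1)}_n,\dots,M^{(t)}_n\bigr)$ for a fixed Boolean $f$ and $t=2^{O(c)}$ many matrices $M^{(i)}_n$, where each $M^{(i)}_n$ is either a relabeling $(x,y)\mapsto Q_m(\alpha(x),\beta(y))$ of $Q$ at some input length $m$ (an oracle call) or has constant rows or constant columns (a communication bit); see \cref{section:intro-reductions}. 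A constant-row or constant-column matrix is $\GT_2$-free, and a relabeling of $Q_m$ is $\GT_{s_0}$-free because $\GT_s$ has pairwise distinct rows and pairwise distinct columns, so the relabeling maps must be injective on any set of witnesses and a $\GT_s$ in the relabeling pulls back to a $\GT_s$ in $Q_m$. Hence, if $Q$ is complete, then for arbitrarily large $k$ the family $\THD_k$ is, at every input length $n$, a Boolean combination of a bounded number of $\GT_{s_0}$-free matrices on $\zo^n\times\zo^n$.

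The final step is to invoke the general structural theorem of the paper: for $k$ large enough relative to $s_0$ (and hence relative to $c_Q$), the $k$-Hamming Distance matrices cannot be written, for all sufficiently large input lengths $n$, as a Boolean combination of any constant number of matrices that all forbid $\GT_{s_0}$. Applying this with such a $k$ contradicts the previous paragraph, and \cref{thm:intro-no-complete-problem} follows. The hard part is precisely this last ingredient, which is where Ramsey theory is needed: one cannot argue by a direct ``$\GT$-freeness is preserved'' transfer, because a Boolean combination of $\GT_{s_0}$-free matrices can itself contain arbitrarily large \textsc{Greater-Than} subproblems; instead one must show that the constantly-many $\GT_{s_0}$-free pieces, combined by an \emph{arbitrary} Boolean function, are structurally too poor to realize $\THD_k$ once $k\gg s_0$, and taming the interaction of these pieces is what the Ramsey-type machinery is for. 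A secondary subtlety is the order of quantifiers: a reduction to $Q$ yields a number of pieces $t$ that may grow with $k$, so the structural theorem must guarantee that \emph{no} constant number of pieces suffices for $k$ large relative only to $s_0$, rather than fixing the piece-count in advance.
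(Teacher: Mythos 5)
Your proposal matches the paper's proof: both identify that a complete problem $Q$ would have to be stable (forbid $\GT_{s_0}$ for some constant $s_0$ depending only on $Q$'s protocol cost), observe that stability is preserved under the query-set operations, and then invoke the paper's main structural theorem (its \cref{thm:vc}, built on \cref{lemma:two-tally} and \cref{prop:two-sig-vc}) that for $k$ large relative to $s_0$, the $k$-Hamming Distance matrices cannot be expressed as a Boolean combination of any constant number of $\GT_{s_0}$-free matrices. You have also correctly placed the quantifiers — $k$ is chosen as a function of $s_0$ alone, and the impossibility must hold for every constant query budget — which is exactly how the paper's \cref{thm:vc} is formulated (stated via bounded VC dimension, an equivalent consequence of $\GT$-freeness).
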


We also prove the
following hierarchy of \textsc{$k$-Hamming Distance} problems in $\CCC$. The \textsc{$k$-Hamming Distance} problem
asks two players to decide whether the Hamming distance between $x,y \in \zo^n$ is $k$; for constant $k$, this is known to be 
in $\CCC$ (see \cref{section:intro-khd}).

\begin{theorem}
\label{thm:intro-hierarchy}
There are infinitely many constants $k$ such that \textsc{$k$-Hamming Distance} cannot be reduced to
\textsc{$(k-1)$-Hamming Distance}.
\end{theorem}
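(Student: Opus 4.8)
The plan is to unfold the notion of a reduction into a statement about Boolean combinations of matrices and then quote the paper's main structural result. Recall from \cref{section:intro-reductions} that a constant-cost reduction from $P$ to $Q$ is a constant-cost deterministic protocol for $P$ with oracle access to $Q$. The transcript of such a protocol has constant length, and each transcript bit is either a bit sent by one of the players -- which, read as a communication matrix, is constant on rows or constant on columns -- or the answer to a $Q$-query, which is a submatrix of some matrix of $Q$ (with rows and columns allowed to repeat). Making the protocol non-adaptive at the cost of a constant blow-up in the number of queries, we obtain: if $k$-HD reduces to $(k-1)$-HD, then for every $n$ the communication matrix of $k$-HD on $n$-bit inputs equals $\phi(M_1,\dots,M_c)$ for a fixed Boolean function $\phi\colon\zo^c\to\zo$, a constant $c$, and matrices $M_1,\dots,M_c$ each of which is either a one-player matrix (constant on rows or on columns) or a submatrix of some $(k-1)$-HD matrix.

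Next I would check that every such $M_i$ forbids large \textsc{Greater-Than} subproblems, uniformly over $n$. A one-player matrix has no $\GT_2$ submatrix at all. For a submatrix of a $(k-1)$-HD matrix it suffices to bound the largest $\GT$ subproblem occurring in any $(k-1)$-HD matrix, because repeating rows or columns cannot create a copy of $\GT_t$, whose rows are pairwise distinct. Such a uniform bound $t = t(k-1)$ must exist: otherwise \textsc{Greater-Than} would inherit, from the constant-cost randomized protocols for $(k-1)$-HD, a constant-cost randomized protocol of its own, contradicting that the randomized communication complexity of \textsc{Greater-Than} is unbounded (indeed $\Theta(\log\log m)$ on instances of size $m$). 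One could alternatively extract an explicit value of $t(k-1)$ from the structure of the sets realizing such a subproblem, but that is not needed here.

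With these two ingredients, the theorem follows by invoking the paper's main structural result, which (in the form we need) states: for the infinitely many relevant values of $k$, and for all constants $t$ and $c$, the communication matrix of $k$-HD on $n$-bit inputs, for $n$ large enough as a function of $t$ and $c$, is not a Boolean combination of $c$ matrices each of which forbids $\GT_t$ subproblems. Applying this with $t := \max\{2,\, t(k-1)\}$ and $c$ equal to the constant cost of the hypothesized reduction contradicts the decomposition from the first step. Hence $k$-HD does not reduce to $(k-1)$-HD for those infinitely many $k$, which is the statement of \cref{thm:intro-hierarchy}.

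The hard part is the structural result quoted above; this is precisely where the new Ramsey-theoretic technique is required, and a naive argument tracking only the ``$\GT$-dimension'' (the largest $t$ with $\GT_t$ appearing as a subproblem) does not suffice. Indeed, a Boolean combination of $c$ matrices each of which is $\GT_t$-free is $\GT_{F(t,c)}$-free for some finite $F(t,c)$, but $F(t,c)$ grows unboundedly with $c$ -- already $\GT_N$ is an OR of only $O(\log N)$ matrices that are $\GT_2$-free, via the ``most significant differing bit'' identity $\GT_N(a,b) = [a=b]\vee\bigvee_i [a_i=1,\ b_i=0,\ a_{>i}=b_{>i}]$ -- whereas $k$-HD, being a member of $\CCC$, has only bounded $\GT$-dimension; so once $c$ is large no dimension count can rule out the decomposition. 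The argument must instead track finer structure via Ramsey theory: pass to a large, highly regular subconfiguration of $k$-HD on which each oracle matrix $M_i$ is forced into one of a bounded repertoire of simple $\GT_t$-free patterns, and then show that no fixed Boolean combination of a bounded number of such simple patterns can realize what $k$-HD does on that subconfiguration. Identifying such a separating subconfiguration -- and pinning down for which $k$ it exists, which is what limits the conclusion to infinitely many (rather than evidently all) values of $k$ -- is the crux of the proof.
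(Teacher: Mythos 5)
Your decomposition and stability observations are fine, but the step where you invoke the structural result has a genuine gap. The paper's \cref{thm:vc} gives $\sD^\cQ(\EHD_k)=\omega(1)$ only when $k$ is at least the VC dimension of $\cQ$; but for $\cQ=\EHD_{k-1}$ the VC dimension is already at least $k$ (\cref{prop:two-sig-vc}: the first $k$ standard basis vectors are shattered in $\EHD_{k-1}$), so one cannot instantiate the structural result at $\cQ=\EHD_{k-1}$ to obtain $\sD^{\EHD_{k-1}}(\EHD_k)=\omega(1)$. The same problem appears in your quantifiers: you plug in $t:=\max\{2,t(k-1)\}$, but $t(k-1)$ grows with $k$ (because $\EHD_{k-1}$ contains ever larger \textsc{Greater-Than} subproblems), so the requirement ``$k$ large relative to $t$'' built into the structural lemma is never certifiably met by $k$ itself. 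Your stated form -- ``for the infinitely many relevant $k$, and for all constants $t$ and $c$, ...'' -- is not true for any fixed infinite set of $k$'s: the lemma provides a threshold on $k$ \emph{as a function of} $t$ and $c$, not an infinite set of $k$'s good simultaneously for all $t$, $c$.

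The missing idea is the pigeonhole/transitivity step in \cref{thm:k-hd-hierarchy}: fix any $t$; since $\EHD_t$ is stable (\cref{prop:ccc-stable}), \cref{thm:vc} gives some $t'>t$ with $\sD^{\EHD_t}(\EHD_{t'})=\omega(1)$; if every single step $t\leq k<t'$ satisfied $\sD^{\EHD_k}(\EHD_{k+1})=O(1)$, composing the reductions would give $\sD^{\EHD_t}(\EHD_{t'})=O(1)$, a contradiction; so some $k\in[t,t')$ witnesses a single-step separation, and letting $t\to\infty$ yields infinitely many such $k$. This also corrects your diagnosis of why the conclusion is ``infinitely many'' rather than ``all'': the Ramsey-based structural lemma does not fail to find a separating subconfiguration for some $k$ (it works for every $k$ at or above the relevant VC threshold); the limitation is simply that we cannot pin $k$ against the VC dimension of $\EHD_{k-1}$, which forces the indirect chain. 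You also defer the entire structural core (\cref{lem:permutation_invariant}, \cref{lemma:two-tally}, \cref{thm:vc}) -- which you acknowledge -- but even granting it as a black box, the pigeonhole step above is necessary and is absent from your argument.
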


In particular, a simple application of our argument shows that this is true for $k \in \{1,2\}$,
recovering the result of \cite{HHH23eccc,HWZ22} that \textsc{$1$-Hamming Distance} does not
reduce to \textsc{Equality} (\ie \textsc{$0$-Hamming Distance}), while also separating 1- and
\textsc{2-Hamming Distance}, whereas previously it was not known whether \textsc{$1$-Hamming Distance}
is complete for $\CCC$.

\begin{theorem}
\label{thm:intro-2-vs-1}
\textsc{2-Hamming Distance} cannot be reduced to \textsc{1-Hamming Distance}.
\end{theorem}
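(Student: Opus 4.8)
My plan is to reduce the statement to the Boolean-combination language and then invoke (the simplest case of) the structural lower bound stated in the abstract. Suppose for contradiction that \textsc{2-Hamming Distance} reduces to \textsc{1-Hamming Distance}. Unfolding the constant-cost deterministic oracle protocol (\cref{section:intro-reductions}) produces a constant $t$, a fixed Boolean function $f\colon\zo^t\to\zo$, and, for every $n$, maps $\alpha_1,\beta_1,\dots,\alpha_t,\beta_t$ from $\zo^n$ into Boolean cubes such that the \textsc{2-Hamming Distance} matrix $M_n$ on $\zo^n\times\zo^n$ satisfies
\[
  M_n(x,y)\;=\;f\bigl(N^{(1)}_n(x,y),\dots,N^{(t)}_n(x,y)\bigr),\qquad
  N^{(i)}_n(x,y)\;=\;\ind{d_H\bigl(\alpha_i(x),\beta_i(y)\bigr)=1}.
\]
Each $N^{(i)}_n$ is obtained from a \textsc{1-Hamming Distance} matrix by relabeling rows and columns. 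Since $\GT_m$ has randomized communication complexity $\Omega(\log\log m)$ it is not in $\CCC$, so \textsc{1-Hamming Distance} --- being constant-cost --- contains no $\GT_m$ subproblem beyond some absolute size bound, uniformly in the input length; and relabeling cannot introduce a \textsc{Greater-Than} subproblem that was not already present, because $\GT_m$ has pairwise distinct rows and pairwise distinct columns and hence pulls back to a $\GT_m$ in the original matrix. Thus every $N^{(i)}_n$ inherits the property of forbidding large \textsc{Greater-Than} subproblems, uniformly in $n$.

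It therefore suffices to prove that the \textsc{2-Hamming Distance} matrices cannot be written as a fixed Boolean combination of a constant number of matrices all of which forbid large \textsc{Greater-Than} subproblems --- the $k=2$ case of the result advertised in the abstract. The plan for that step is a Ramsey argument. First, I would isolate, for every $n$, a large and highly structured set of inputs on which \textsc{2-Hamming Distance} realizes a growing ``order-like'' gadget that no matrix forbidding large \textsc{Greater-Than} subproblems can imitate; the natural candidate exploits the recursive description of ``distance exactly $2$'' (``flip one coordinate here and one there''), which --- unlike ``distance exactly $1$'' --- lets one nest an increasing number of Equality-type tests. Then I would apply a product Ramsey theorem to the $2^t$-coloring of this input set by the joint profile $\bigl(N^{(1)}(x,y),\dots,N^{(t)}(x,y)\bigr)$, passing to a still-growing homogeneous sub-rectangle on which every $N^{(i)}$ is ``indiscernible'', taking one of finitely many canonical forms. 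Because each $N^{(i)}$ forbids large \textsc{Greater-Than} subproblems, on such a sub-rectangle it cannot realize any long staircase pattern, so it is forced into a ``trivial'' form (essentially constant, or a possibly-complemented identity pattern); consequently $f(N^{(1)},\dots,N^{(t)})$ is also trivial there --- contradicting the fact that the chosen gadget keeps \textsc{2-Hamming Distance} nontrivial on every large sub-rectangle.

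The hard part will be precisely this structural statement: choosing the right hard gadget inside \textsc{2-Hamming Distance}, and formulating and proving the matching Ramsey structure theorem. The delicate point is that \textsc{2-Hamming Distance} itself is in $\CCC$ and so, exactly like \textsc{1-Hamming Distance}, contains only bounded \textsc{Greater-Than} submatrices; what must grow with $n$, survive homogenization, and remain out of reach of a constant Boolean combination of such matrices is therefore a more robust kind of ``\textsc{Greater-Than} substructure'' than a literal submatrix, and pinning down this invariant --- together with the quantifier bookkeeping (fixing $f$, the number of matrices, and the \textsc{Greater-Than}-forbidding parameter before letting $n\to\infty$) --- is where the genuine content lies. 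The same scheme, run against \textsc{$(k-1)$-Hamming Distance} oracles for suitable constants $k$, is what would yield \cref{thm:intro-hierarchy}.
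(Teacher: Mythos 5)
Your plan diverges from the paper's proof at the two load-bearing steps, and as sketched the gaps would not close. First, the Ramsey mechanism: you want to color input pairs $(x,y)$ by the joint query profile and pass to a large homogeneous sub-rectangle where each $N^{(i)}$ becomes ``canonical.'' But a monochromatic sub-rectangle just makes every query constant, hence makes $f(N^{(1)},\dots,N^{(t)})$ constant there --- and that is \emph{not} a contradiction, since $\EHD_2$ does have arbitrarily large monochromatic rectangles (take two large sets of inputs all of pairwise distance $\gg 2$). Nor does canonical Ramsey save you: stable matrices are far richer than ``constant or identity-like'' (indeed $\EHD_1$ itself is stable), so no Ramsey theorem will force them into so small a list of canonical forms on a growing sub-rectangle. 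The paper's Ramsey argument (\cref{claim:main-step-1}) is applied not to pairs of inputs but to \emph{subsets of coordinates} of a high-dimensional embedding $\phi_T:\zo^n\hookrightarrow\zo^N$, and its output is that the queries can be taken \emph{permutation-invariant on all of $\zo^n\times\zo^n$} (\cref{lem:permutation_invariant}) --- a global, not local, homogenization. This is the missing idea; the stability/Greater-Than hypothesis is used only inside that lemma (the second claim of \cref{section:proof}) to kill any residual dependence on the \emph{order} of the $\domino{0}{1}$, $\domino{1}{0}$ dominoes.

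Second, the hard gadget: once permutation invariance is in hand, the contradiction is \emph{not} a growing ``Greater-Than-like'' substructure of $\EHD_2$; it is a fixed, constant-size two-tally partial matrix $M'$ of $\EHD_2$ (\cref{lemma:two-tally}) that contains $K_{2,3}$, combined with the fact that $\EHD_1$ is $K_{2,3}$-free. You also overstate what the abstract's general statement delivers: the uniform theorem (\cref{thm:vc}) requires $k$ to exceed the VC dimension of the oracle family, and $\EHD_1$ has VC dimension $3$, so ``the $k=2$ case'' of that theorem would only give $\mathsf{D}^{\EHD_1}(\EHD_3)=\omega(1)$. The $2$-versus-$1$ separation needs the sharper, $K_{2,3}$-based argument, and you would not recover it by pushing your plan through even in idealized form.
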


This hierarchy echoes a similar hierarchy of \textsc{Integer Inner Product} functions within 
$\mathsf{BPP}$, established in \cite{CLV19}, though our proof is necessarily
very different. Those functions are denoted $\IIP_d$ for constant $d \in \bN$
and are defined on $dn$-bit integer vectors $x,y \in \bZ^d$ with $\IIP_d^n(x,y)
= 0$ if and only if $\inn{x,y} = 0$ (\cref{def:iip}). They are in the
communication complexity class $\BPP$ but are conjectured to have non-constant
cost (see \eg~\cite{CHHS23}). We use $\IIP_d$ as an example to show that, even
if a problem might have non-constant cost, and may therefore be ``more complex''
than any \textsc{$k$-Hamming Distance} problem, we can still easily separate
\textsc{$k$-Hamming Distance} from them using our technique:

\begin{theorem}
\label{thm:intro-iip}
For any constant $d$, there exists a constant $k$ such that \textsc{$k$-Hamming Distance} cannot
be reduced to $\IIP_d$.
\end{theorem}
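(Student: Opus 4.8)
The plan is to deduce \Cref{thm:intro-iip} from the paper's main structural result, which says that \textsc{$k$-Hamming Distance} matrices cannot be written as a Boolean combination of any constant number of matrices that forbid large \textsc{Greater-Than} subproblems. The first step is the standard translation between oracle reductions and Boolean combinations (as formalized in \cref{section:intro-reductions}): if \textsc{$k$-Hamming Distance} on $\zo^n$ reduces to $\IIP_d$ via a constant-cost deterministic protocol with an oracle for $\IIP_d$, then unfolding the constant-depth protocol tree expresses the \textsc{$k$-Hamming Distance} communication matrix, for every $n$, as a Boolean combination $f(A_1,\dots,A_c)$ where the number of parts $c$ is bounded in terms of the (constant) cost of the reduction, and each $A_i$ is either a combinatorial rectangle (from a communicated bit) or a matrix obtained from the communication matrix of $\IIP_d$ on some $\zo^{n'}$ by relabeling rows and columns and possibly complementing (from an oracle query).

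Next I would show that $\IIP_d$, and hence every part $A_i$, forbids a \textsc{Greater-Than} subproblem of size depending only on $d$. Suppose the communication matrix of $\IIP_d$ on $\zo^{n'}$ contained $\GT_N$: then there are $u_1,\dots,u_N,v_1,\dots,v_N\in\bZ^d$ with $\inn{u_i,v_j}=0$ exactly when $i\le j$. The integer matrix $M$ with $M_{ij}=\inn{u_i,v_j}$ equals $UV^{\top}$ for the $N\times d$ matrices $U,V$ of rows $u_i,v_j$, so $\rank(M)\le d$; on the other hand $M$ vanishes on and above the diagonal and is nonzero strictly below it, so its submatrix on rows $\{2,\dots,N\}$ and columns $\{1,\dots,N-1\}$ is lower triangular with nonzero diagonal, hence has rank $N-1$. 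Thus $N\le d+1$, and this holds for every $n'$ since the rank bound is independent of $n'$. Forbidding a fixed \textsc{Greater-Than} size is inherited (up to an additive constant) under taking submatrices, relabeling rows/columns, and complementation, and combinatorial rectangles forbid $\GT_2$, so each $A_i$ forbids $\GT_{N_0}$ for $N_0=d+O(1)$. Now apply the main theorem with this $N_0$: it yields a constant $k=k(d)$ such that for all large $n$ the \textsc{$k$-Hamming Distance} matrix is not a Boolean combination of any constant number of matrices that forbid $\GT_{N_0}$. Combined with the first step, this contradicts the existence of a reduction from \textsc{$k$-Hamming Distance} to $\IIP_d$, proving the theorem.

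The bulk of the difficulty lies inside the main theorem, which I am treating as a black box (the Ramsey-theoretic argument against arbitrary $\GT$-free oracles); within this corollary the only genuinely new content is the elementary rank argument of the second step. Two points deserve care. First, the \textsc{Greater-Than} bound for $\IIP_d$ must be uniform over all input lengths $n'$ at which the oracle might be queried, which is precisely why it matters that $\rank(UV^{\top})\le d$ regardless of $n'$. Second, the number $c$ of parts in the Boolean combination is not determined by $d$ — it depends on the cost of the hypothetical reduction — so one must invoke the main theorem in its stated form, ruling out Boolean combinations of \emph{any} constant number of matrices that forbid $\GT_{N_0}$, rather than a version quantified by a fixed $c$. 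Neither of these is a real obstacle, but both should be made explicit in the write-up.
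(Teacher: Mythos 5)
Your proof is correct and has the same skeleton as the paper's (unfold the constant-cost oracle protocol into a Boolean combination of query matrices, establish that $\IIP_d$ is stable, and then invoke the main Ramsey-theoretic lower bound against stable oracles), but your argument for the key stability lemma is genuinely different and worth noting. The paper proves stability of $\IIP_d$ by induction on the dimension $t$: given $x_1,\dots,x_m$, $y_1,\dots,y_m$ with $\inn{x_i,y_j}=0$ iff $i\le j$, it projects the $y_j$ onto the $(t-1)$-dimensional subspace orthogonal to $y_m$, checks the nonvanishing hypothesis is preserved, and applies the inductive hypothesis, handling the $\vec 0$ vector as a separate corner case. Your argument instead observes that the integer matrix $M_{ij}=\inn{u_i,v_j}$ factors as $UV^{\top}$ with $U,V$ of width $d$, so $\rank(M)\le d$; but the prescribed zero/nonzero pattern contains an $(N-1)\times(N-1)$ triangular submatrix with nonzero diagonal, forcing $N-1\le d$. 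This is a cleaner and more self-contained argument: it handles $\vec 0$ automatically (the problematic all-zero row or column is simply excluded from the triangular block), it makes the independence from the input length $n'$ manifest, and it requires no geometric induction. The paper's inductive version does give the marginally tighter constant $m\le d$ (versus $N\le d+1$) and is also reused to bound the VC dimension, but for this theorem either constant suffices since only existence of some constant $k=k(d)$ is claimed. One small bookkeeping point you should make explicit in a write-up: the paper's \cref{thm:vc} is stated in terms of the VC dimension of the oracle set, not directly in terms of forbidden $\GT$ size, so you should add the one-line observation (which the paper itself records after \cref{prop:two-sig-vc}) that forbidding $\GT_{N_0}$ implies VC dimension $<N_0$, before plugging into \cref{thm:vc}.
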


To prove the theorems above, especially \cref{thm:intro-no-complete-problem}, it
is necessary to prove lower bounds on communication protocols with access to an oracle computing an arbitrary problem in $\CCC$.
There are many techniques in the literature for oracle lower bounds in communication, including
two lower bound techniques against the \textsc{Equality} oracle in $\CCC$ \cite{HHH23eccc,HWZ22,HZ24} and 
several techniques for lower bounds against \textsc{Equality} in
other communication complexity classes (\eg \cite{CLV19,PSW21,CHHS23,PSS23}).
However, none of these techniques have succeeded in proving separations within
$\CCC$ against oracles other than \textsc{Equality} -- not even against the
\textsc{$1$-Hamming Distance} oracle. An added challenge for proving lower
bounds against arbitrary oracles in $\CCC$ is that little is known about the
structure of problems in $\CCC$, including the basic question of whether they
have large monochromatic rectangles \cite{HHH23eccc}. We introduce a new
Ramsey-theoretic lower bound technique that gives separations against
\emph{arbitrary} oracles in $\CCC$. We give a proof overview and comparison to
prior work in \cref{section:intro-proof-overview}, after introducing definitions
in \cref{section:intro-prelim}. The proofs of the theorems above are in
\cref{section:applications}, and follow from a general lemma proved in
\cref{section:permutation-invariance} about the structure of oracle protocols
for computing \textsc{$k$-Hamming Distance}.

Our final result, proved in \cref{section:khd-submatrices}, is of a different type and deals with $\BPP$ reductions. An \emph{unbounded-size} $\BPP$ reduction, for a problem with $n$-bit inputs, allows $\poly\log n$-many oracle queries of arbitrary query size. It was proved in \cite{CLV19} (see also \cite{CHHS23}) that \textsc{Equality} is not complete for $\BPP$, because $\IIP_d^n$ (for $d \geq 3$) is irreducible to \textsc{Equality}, and more generally the $\IIP_d$ problems form an infinite hierarchy under unbounded-size $\BPP$ reductions.
This leaves open the question of how the $\IIP_d$ hierarchy interacts with \textsc{$k$-Hamming Distance}; in particular, whether $\IIP_d^n$ reduces to \textsc{$k$-Hamming Distance} for $k = \poly\log n$ under unbounded-size reductions. We prove this is not so, by
showing certain ``dimension-free'' relations: the \textsc{Equality} oracle complexity and $\gamma_2$-norm
of arbitrary $N \times N$ submatrices of \textsc{$k$-Hamming Distance} depends on $N$ but \emph{not} on
the underlying dimension. This implies:

\begin{theorem}[Informal]
\label{thm:intro-bpp}
    For every $d \geq 3$, $\IIP_d^n$ does not reduce to \textsc{$k$-Hamming Distance}, under unbounded-size $\BPP$ reductions, for any $k = k(n) \leq \nicefrac{n}{(\log n)^{\omega(1)}}$.
\end{theorem}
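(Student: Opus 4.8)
The plan is to prove a \emph{dimension-free structural lemma} about \textsc{$k$-Hamming Distance} and then push it through the reduction. The lemma asserts: there is a function $f$ so that for \emph{every} dimension $m$ and every $k$, every $N\times N$ submatrix $M$ of the \textsc{$k$-Hamming Distance} matrix on $\zo^m\times\zo^m$ satisfies $\DEQ(M)\le f(N,k)$ and $\gamma_2(M)\le f(N,k)$, where $f$ does not depend on $m$ and depends on $k$ only mildly (polynomially, and ideally only polylogarithmically for the $\gamma_2$ bound). Granting this, \cref{thm:intro-bpp} follows: an unbounded-size $\BPP$ reduction of $\IIP_d^n$ to \textsc{$k$-Hamming Distance}, having only $t=\poly\log n$ queries, forces every $N\times N$ submatrix of $\IIP_d^n$ to have small (approximate) $\gamma_2$-norm and small $\DEQ$, and once $k\le\nicefrac{n}{(\log n)^{\omega(1)}}$ this contradicts the known lower bounds witnessing the $\IIP_d$ hierarchy.

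\emph{Step 1: the dimension-free lemma.} Fix the $N$ rows $x_1,\dots,x_N\in\zo^m$ and $N$ columns $y_1,\dots,y_N\in\zo^m$ defining $M$. Partition $[m]$ into \emph{coordinate classes} according to the length-$2N$ signature $(x_{1,\ell},\dots,x_{N,\ell},y_{1,\ell},\dots,y_{N,\ell})$ of each coordinate $\ell$, giving at most $4^N$ classes, say of sizes $n_1,\dots,n_t$ with $t\le 4^N$. Then $\abs{x_i\oplus y_j}=\sum_{s\in D(i,j)}n_s$, where $D(i,j)$ is the (combinatorially rectangular) set of classes on which $x_i$ and $y_j$ disagree. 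The key observation is that truncating every class of size $n_s>k+1$ down to size exactly $k+1$ leaves $M$ unchanged: such a class, when it lies in $D(i,j)$, pushes the distance above $k$ both before and after truncation, and otherwise contributes $0$ to both. Hence $M$ is itself a submatrix of the \textsc{$k$-Hamming Distance} matrix in dimension $m'\le(k+1)\,4^N$, which depends only on $N$ and $k$. It then remains to bound $\DEQ$ and $\gamma_2$ of the \emph{full} \textsc{$k$-Hamming Distance} matrix in this controlled dimension $m'$. For $\DEQ$ one may even invoke the dimension-free estimate that \textsc{$k$-Hamming Distance} has Equality-oracle cost $\poly(k)$ in any dimension, via the standard protocol that isolates and checks the $\le k$ differing coordinates. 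For $\gamma_2$, any bound showing that \textsc{$k$-Hamming Distance} in dimension $m'$ has $\gamma_2$-norm at most $\poly(k,\log m')$ yields $\gamma_2(M)\le\poly(N,k)$ (indeed $\poly(N,\log k)$ if that bound is polylogarithmic in the dimension). This proves the lemma.

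\emph{Step 2: lifting through the reduction.} Suppose toward a contradiction that $\IIP_d^n$ reduces to \textsc{$k$-Hamming Distance} by an unbounded-size $\BPP$ reduction with $t=\poly\log n$ queries, for some $k=k(n)\le\nicefrac{n}{(\log n)^{\omega(1)}}$. Any $N\times N$ submatrix $M'$ of $\IIP_d^n$ arises by fixing $N$ choices for Alice and $N$ for Bob; these determine, for each of the $t$ oracle queries, an $N\times N$ submatrix of \textsc{$k$-Hamming Distance} (of whatever dimension that query uses --- irrelevant, by Step 1), and $M'$ is then a randomized decision on the $t$ query-answer bits. By Step 1 each query submatrix has (approximate) $\gamma_2$-norm at most $\poly(N,k)$ and $\DEQ$ at most $\poly(k)$; standard composition estimates --- the approximate $\gamma_2$-norm of a sign/Boolean combination or randomized decision over $t$ bits, each computed up to small error with $\gamma_2$-norm $\le\Gamma$, is at most $2^{O(t)}\poly(\Gamma)$, and $\DEQ$ over $t$ queries is additive --- then give approximate $\gamma_2(M')\le 2^{O(t)}\poly(N,k)$ and $\DEQ(M')\le O(t)\cdot\poly(k)$. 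With $t=\poly\log n$ and $N$ chosen as required by the $\IIP_d$ lower bound, both are of the form $k^{O(1)}(\log n)^{O(1)}$, where the exponent of $\log n$ is a constant multiple of the unspecified exponent in ``$\poly\log n$''.

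\emph{Step 3: contradiction, and the main obstacle.} By the lower bounds of \cite{CLV19,CHHS23} that witness the $\IIP_d$ hierarchy --- which produce, for every $d\ge 3$ and a suitable size $N=N(n)$, an $N\times N$ submatrix of $\IIP_d^n$ whose approximate $\gamma_2$-norm (resp.\ $\DEQ$) is asymptotically larger than any $k^{O(1)}(\log n)^{O(1)}$ throughout the range $k\le\nicefrac{n}{(\log n)^{\omega(1)}}$ --- the Step~2 bound is violated, a contradiction that proves \cref{thm:intro-bpp}. The quotient by $(\log n)^{\omega(1)}$ in the hypothesis is exactly what is forced here, since the exponent hidden in ``$t=\poly\log n$'' is arbitrary and the separation must survive multiplying $k$ by an arbitrary fixed power of $\log n$. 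I expect the main difficulty to be quantitative bookkeeping: making the $k$-dependence of the dimension-free lemma as mild as possible (the truncation-to-$(k+1)$ step is precisely what keeps the effective dimension, hence the $\gamma_2$ bound, at $\poly(N,k)$ rather than scaling with the true, possibly enormous, input length), and verifying that the randomized and possibly adaptive reduction costs only the harmless $2^{O(t)}=2^{\poly\log n}$ overhead --- handled by working throughout with approximate/distributional versions of $\gamma_2$ and $\DEQ$ --- while aligning the thresholds so that the admissible regime for $k$ comes out to be exactly $\nicefrac{n}{(\log n)^{\omega(1)}}$.
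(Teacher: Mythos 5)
Your high-level plan coincides with the paper's: prove a dimension-free bound on $\DEQ$ (and hence $\gamma_2$) for $N \times N$ submatrices of \textsc{$k$-Hamming Distance}, and push it through the reduction against the known $\sD^{\EQ}(\IIP_d^n) = \Omega(n)$ lower bound. But the technical heart of your Step~1 has a genuine gap, and the quantitative bound you end up with is far too weak to close the argument.

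The claim that ``\textsc{$k$-Hamming Distance} has Equality-oracle cost $\poly(k)$ in any dimension, via the standard protocol that isolates and checks the $\leq k$ differing coordinates'' is false. Isolating a differing coordinate among $d$ positions by binary search costs $\Theta(\log d)$ \textsc{Equality} queries, so the naive protocol gives $O(k\log d)$, not $\poly(k)$. And this dimension-dependence is real, not an artifact: already $\sD^{\EQ}(\EHD_1^d) = \Theta(\log d)$. Your coordinate-class truncation argument is correct as far as it goes (merging identical coordinate columns and capping class sizes at $k+1$ does preserve $\THD_k$ and $\EHD_k$), but it only reduces the dimension to $m' \leq (k+1)4^N$, and then the binary-search protocol gives $\DEQ(M) = O(k\log m') = O(k(\log k + N))$, which is \emph{linear} in $N$. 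For the application $N = 2^{dn}$, this is $2^{\Theta(n)}$ --- exponentially larger than the $\Omega(n)$ lower bound you need to beat, so no contradiction. Similarly $\gamma_2(M) \leq \poly(N,k) = 2^{O(n)}$ does not contradict $\gamma_2(\IIP_d^n) = 2^{\Omega(n)}$ once constants are tracked.

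What the paper proves (\cref{prop:eqprotocol-for-kHD}) is the exponentially stronger bound $\sD^{\EQ}(Q) = O(k\log\log N)$ for any $N \times N$ submatrix $Q$ of $\THD_k$, which is the right scaling: plugging in $N = 2^{dn}$ gives $O(k\log n)$ per query, and $q$ queries yield $\sD^{\EQ}(\IIP_d^n) = O(qk\log n)$, forcing $q = \Omega\left(\nicefrac{n}{k\log n}\right)$. The idea that achieves $\log\log N$ rather than $\poly(N)$ is missing from your proposal: the paper uses a probabilistic-existence lemma (\cref{prop:diameter-partition}) showing that for any set $Z$ of $N$ strings one can split the coordinates $[d] = A \cup B$ so that whenever $x_A = y_A$ the restrictions $x_B, y_B$ differ in at most $3\log N$ bits (and symmetrically). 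This lets the protocol recurse into a sub-instance whose effective support is a set of Hamming diameter $O(\log N)$, on which binary search over the \emph{positions of disagreement} (not over $[d]$) costs only $O(\log\log N)$ per differing bit, and the recursion depth is controlled by the budget $k$. If you want to keep your truncation idea you would need a second compression step taking the $4^N$-class bound down to something polylogarithmic in $N$, and no such step is apparent; the partition lemma is a different and sharper tool. One further note: the paper's reduction composition is simply additive in $\DEQ$ (the reduction notion here is deterministic with oracle queries, per the definition of $\sD^{\cQ}$), so the ``$2^{O(t)}$ overhead from approximate/distributional $\gamma_2$'' bookkeeping you worry about in Step~2 is not needed; the paper never composes $\gamma_2$ bounds directly, only $\DEQ$, and the $\gamma_2$ statement is a corollary via $\sD^{\EQ}(M) \geq \frac12\log\|M\|_{\gamma_2}$.
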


A careful examination of \cite{CLV19} gives the incomparable statement that, for every constant $k$, there exists an unspecified constant $d = d(k) \geq 6$ such that $\IIP_d^n$ requires $\Theta(n)$ \textsc{$k$-Hamming Distance} oracle queries. Our lower bound is $\Omega(\nicefrac{n}{k\log n})$ queries, but it applies to $d=3$ and non-constant $k$.

We conclude the paper with a discussion and open problems in \cref{section:discussion}.

\subsection{Preliminaries: Reductions, \texorpdfstring{$k$}{k}-Hamming Distance, and Greater-Than} 
\label{section:intro-prelim}

Let us now formalize the notions of reductions and completeness within $\CCC$, and state
some required facts about the \textsc{$k$-Hamming Distance} and \textsc{Greater-Than} problems.

\subsubsection{Constant-Cost Reductions}
\label{section:intro-reductions}

A \emph{communication problem} $\cP$ is a sequence $\cP = (P_N)_{N \in \bN}$ where $P_N \in \zo^{N
\times N}$ is an $N \times N$ Boolean matrix. We will use $N$ to denote the size of the matrix and,
when it is natural, we may define another parameter (\eg sometimes writing $n$ for the number of bits in the input). For any Boolean matrix $M$, we
write $\mathsf{R}(M)$ for the minimum cost of a two-way public-coin randomized communication
protocol computing $M$ with success probability at least $2/3$, and we refer to standard texts \cite{KN96,RY20} for an introduction to randomized communication. For a communication problem $\cP$,
we write $\mathsf{R}(\cP)$ for the function $N \mapsto \mathsf{R}(P_N)$.

\begin{definition}[$\CCC$]
\label{def:ccc}
A problem $\cP$ has \emph{constant cost} if there exists a constant $c$ such that $\mathsf{R}(\cP)
\leq c$, \ie for all $N \in \bN$, it holds that $\mathsf{R}(P_N) \leq c$. We define $\CCC$ as the
set of all problems $\cP$ which have constant cost.
\end{definition}

Our notation follows in spirit the notation for complexity classes like $\mathsf{AC}^i$ and
$\mathsf{NC}^i$: For any $i$, we think of $\BPP^i$ as being the class of communication problems $\cP$ with $\mathsf{R}(\cP) = O(\log^i \log N)$, \ie $O(\log^i n)$ where $n = \lceil \log
N \rceil$ is the number of bits required to represent the inputs. Therefore the standard
communication complexity class $\BPP$ is $\BPP = \bigcup_{i=0}^\infty \BPP^i$.

\begin{remark}
Unlike $\BPP$, the class $\CCC$ remains unchanged
regardless of whether it is defined in terms of two-way, one-way, or simultaneous communication
protocols \cite{HWZ22}, but it is \emph{not} equivalent to replace public randomness with private randomness.
\end{remark}

We must now define communication protocols with oracle queries. It is common to study communication
with oracles (see \eg \cite{BFS86,GPW18,CLV19,PSW21,CHHS23,PSS23}), but for $\CCC$ the most natural
definition of oracle queries is different from the standard one. Usually, the size of the input to
the oracle query is restricted -- on $n$-bit inputs, the oracle should be queried only on
$\poly(n)$-bit inputs,  because making $\poly\log n$ queries to problems with complexity $\poly\log
m$ on query inputs of $m = \poly(n)$ bits will produce a protocol of complexity $\poly\log n$,
preserving the usual notion of ``efficiency''. For $\CCC$, the oracles should allow queries of
\emph{arbitrary} size, for the same reason that this preserves our notion of efficiency.  This leads
to the following definition which is implicit in prior works \cite{HWZ22,EHK22,HHH23eccc} and
explicit in \cite{HZ24}:

\newcommand{\QS}{\mathsf{QS}}
For any set $\cM$ of Boolean matrices, we will define
the \emph{query set} $\QS(\cM)$ of $\cM$ as the set of all matrices $Q$ obtained from matrices in
$\cM$ by the following operations:
\begin{enumerate}
\item Arbitrary row and column permutations;
\item Taking arbitrary submatrices; and
\item Duplicating arbitrary rows or columns.
\end{enumerate}
The difference between our reductions and the standard reductions for
communication complexity comes from Item (2). Problems in $\CCC$ are hereditary,
in a way that problems in standard $\BPP$ are not: if $\cP$ has a randomized
protocol with constant cost $c$, then any matrix $P \in \QS(\cP)$ also has a
randomized protocol with cost $c$ (because $P$ is obtained by choosing a problem
$P' \in \cP$, taking a submatrix $P''$ of $P'$, which cannot increase the
communication cost, and then duplicating rows and columns and permuting them,
which does not change the communication cost). On the other hand, a problem $\cP$ in the standard
$\BPP$ class with complexity $\poly(\log\log N)$ can have matrices $P \in \QS(\cP)$ with cost
$\Theta(\log N)$, as in the following example:

\begin{example}
The \textsc{$\log(n)$-Hamming Distance} problem, defined on binary strings
$\zo^n$, has cost $\Theta( \log(n) \log\log(n) ) = \Theta(\log\log (N)
\log\log\log (N))$ on matrices of size $N \times N = 2^n \times 2^n$ (which
follows from optimal bounds on the communication complexity of
\textsc{$k$-Hamming Distance} for non-constant $k = \log n$
\cite{HSZZ06,Sag18}). But since the VC dimension of the \textsc{$k$-Hamming
Distance} matrices depends on $k$ (see \cref{def:vc-dimension,prop:two-sig-vc}),
the set $\QS$ contains \emph{all} matrices.
\end{example}

\begin{definition}[Communication with Oracle Queries]
Let $\cM$ be any set of Boolean matrices and let $P \in \zo^{N \times N}$. Then we write
$\mathsf{D}^{\cM}(P)$ for the minimum cost of a deterministic communication protocol with access to
an oracle for $\cM$, defined as follows. A \emph{communication protocol} with oracle access to $\cM$ is a binary tree $T$ with
inner nodes $V$ and leaves $L$. Each inner node $v \in V$ is associated with an $N \times N$ matrix
$Q_v \in \QS(\cM)$ and each leaf $\ell \in L$ is associated with an output bit $b_\ell \in \zo$.
An input $x,y\in [N]$ then naturally defines a path from the root to a leaf $\ell\in
L$, wherein at each node the $(x,y)$ entry of the corresponding query matrix is used to decide
whether to travel left or right. The output of the protocol is then the label of the reached leaf,
$b_\ell$.
\end{definition}

For simplicity of the definition, we force every round of communication to be via an oracle query -- the players cannot send messages directly to each other.
As observed by \cite{CLV19}, a standard round of communication can be simulated by an oracle query as long as the set $\cM$ is non-trivial, \ie it
contains at least one of the matrices $\big[\begin{smallmatrix}1 & 0 \\ 0 &
1\end{smallmatrix}\big]$, $\big[\begin{smallmatrix}1 & 1 \\ 0 & 1\end{smallmatrix}\big]$, or their
permutations or Boolean negations.

The following proposition is easy to prove using standard
error-boosting techniques:

\begin{proposition}
Let $\cP, \cQ$ be communication problems such that $\cQ \in \CCC$ and
$\mathsf{D}^{\cQ}(\cP) = O(1)$. Then $\cP \in \CCC$.
\end{proposition}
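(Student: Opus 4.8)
The plan is to convert the constant-cost deterministic oracle protocol for $\cP$ into a constant-cost public-coin randomized protocol by replacing each oracle call with a randomized subroutine whose error is driven down far enough that a union bound over the whole (constant-depth) computation tree still leaves success probability at least $2/3$.

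First I would record the heredity of $\CCC$: since $\cQ \in \CCC$, fix a constant $c$ with $\mathsf{R}(Q_N) \le c$ for all $N$. Every matrix in $\QS(\cQ)$ is obtained from some $Q_N$ by taking a submatrix (which cannot increase randomized cost), permuting rows and columns, and duplicating rows and columns (neither of which changes the cost), so $\mathsf{R}(Q) \le c$ for every $Q \in \QS(\cQ)$. In particular, at every inner node $v$ of the oracle protocol tree, the associated query matrix $Q_v \in \QS(\cQ)$ admits a public-coin protocol $\Pi_v$ of cost at most $c$ and error at most $1/3$.

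Next, fix $N$ and let $T$ be the depth-$d$ deterministic oracle protocol computing $P_N$, where $d = \mathsf{D}^{\cQ}(P_N) = O(1)$. I would amplify each $\Pi_v$: running $\Pi_v$ independently $O(\log d)$ times with fresh public coins and taking the majority answer yields a protocol $\Pi_v'$ of cost $O(c\log d) = O(1)$ that computes the entry $Q_v(x,y)$ with error at most $\tfrac{1}{3d}$. The randomized protocol for $P_N$ then simulates $T$: starting at the root, at each visited node $v$ it runs $\Pi_v'$ with fresh public coins and branches according to the returned bit, finally outputting the label of the leaf it reaches. On any input $(x,y)$ the simulation visits at most $d$ inner nodes, and if $\Pi_v'$ returns $Q_v(x,y)$ correctly at every visited node, then the simulation follows exactly the path $T$ would follow and outputs $T(x,y) = P_N(x,y)$. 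By a union bound over the at most $d$ visited nodes, this happens with probability at least $1 - d\cdot\tfrac{1}{3d} = 2/3$. The total cost is at most $d\cdot O(c\log d) = O(1)$, independent of $N$, so $\mathsf{R}(\cP) = O(1)$ and hence $\cP \in \CCC$.

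The only point requiring any care — and the reason the union bound ranges over $d$ rather than over the full size of the tree — is that the simulation makes at most $d$ oracle queries on any fixed input, since $d$ bounds the depth of $T$; using independent public randomness for each simulated query makes the per-query error bound hold unconditionally, so the union bound applies directly. Everything else is the standard error-boosting argument, and I do not expect a genuine obstacle.
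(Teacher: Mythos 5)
Your proof is correct and is exactly the argument the paper has in mind: the paper simply remarks that this proposition ``is easy to prove using standard error-boosting techniques'' without writing it out, and your write-up (amplify each oracle call to error $1/(3d)$ via $O(\log d)$ repetitions, union bound over the at most $d$ nodes visited) is precisely that standard argument, carried out carefully.
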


Then the following is the most natural definition of reductions within $\CCC$.

\begin{definition}[Constant-Cost Reductions]
\label{def:reductions}
Let $\cP, \cQ$ be communication problems. Then we say that $\cP$ has a \emph{constant-cost
reduction} to $\cQ$ (or simply it \emph{reduces to $\cQ$}) if $\mathsf{D}^\cQ(\cP) = O(1)$.
\end{definition}

We now have the natural notion of completeness for $\CCC$:

\begin{definition}[Completeness in $\CCC$]
A communication problem $\cQ$ is \emph{complete} for $\CCC$ if $\cQ \in \CCC$ and, for all $\cP \in
\CCC$, $\cP$ is constant-cost reducible to $\cQ$.
\end{definition}

We will use an equivalent and often more convenient definition of constant-cost
reductions in terms of Boolean combinations of query matrices. We refer the
reader to \cite{HZ24} for a simple proof\footnote{The idea is that
$f(Q_1(x,y), \dotsc, Q_c(x,y))$ is the function that simulates the communication
protocol using the answers to all of the queries.}. 

\begin{proposition}
\label{prop:functional-rank}
Let $\cM$ be a set of Boolean matrices and let $\cP$ be a communication problem. Then
$\mathsf{D}^\cM(\cP) = O(1)$ if and only if there exists a constant $c \in \bN$ and a function $f :
\zo^c \to \zo$ such that, for all $N \in \bN$, there exist $Q_1, \dotsc, Q_c \in \QS(\cM)$ such that
\begin{equation}
\label{eq:constant-cost-reduction}
  \forall x,y \in [N] :\qquad P_N(x,y) = f(Q_1(x,y), Q_2(x,y), \dotsc, Q_c(x,y)) \,.
\end{equation}
\end{proposition}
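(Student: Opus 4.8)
To prove the proposition I would handle the two implications separately; the direction $(\Leftarrow)$ is routine and $(\Rightarrow)$ carries all the content. For $(\Leftarrow)$, given the constant $c$, the function $f$, and, for each $N$, matrices $Q_1^N,\dots,Q_c^N\in\QS(\cM)$ satisfying \cref{eq:constant-cost-reduction}, I would take the protocol to be the complete binary tree of depth $c$ in which every inner node at level $i$ queries $Q_i^N$ and the leaf reached by a string $b\in\zo^c$ outputs $f(b)$. On input $(x,y)$ the $i$-th query returns $Q_i^N(x,y)$, so the protocol reaches the leaf indexed by $\big(Q_1^N(x,y),\dots,Q_c^N(x,y)\big)$ and outputs $f\big(Q_1^N(x,y),\dots,Q_c^N(x,y)\big)=P_N(x,y)$; hence $\mathsf{D}^{\cM}(\cP)\le c=O(1)$.

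For $(\Rightarrow)$, observe first that $\mathsf{D}^{\cM}(\cP)=O(1)$ supplies a single constant $c$ with $\mathsf{D}^{\cM}(P_N)\le c$ for all $N$, witnessed by oracle protocol trees $T_N$. The naive attempt is to let $f$ ``simulate $T_N$'', but the resulting function depends on $N$ through the shape of $T_N$ and its leaf labels, whereas the statement demands a single universal $f$; moving this $N$-dependence into the query matrices is the heart of the argument. The one external ingredient is that, assuming $\cM$ is non-trivial (the degenerate cases being analyzed directly), $\QS(\cM)$ contains for every $N$ the $N\times N$ all-ones matrix $\mathbf 1_N$ and the $N\times N$ all-zeros matrix $\mathbf 0_N$: restrict a matrix of $\cM$ having a $1$-entry, resp.\ a $0$-entry, to a $1\times 1$ submatrix and duplicate that row and column up to size $N$.

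Next I would normalize the protocols to a common shape. Pad each $T_N$ into the \emph{complete} binary tree of depth exactly $c$, replacing each leaf at depth $d<c$ and label $\beta$ by a complete depth-$(c-d)$ subtree all of whose inner nodes query $\mathbf 1_N$ and all of whose leaves carry label $\beta$; this does not change the computed function. Now every $T_N$ has the same shape: $m:=2^{c}-1$ inner nodes, which I list in a fixed order as $Q_1^N,\dots,Q_m^N$, and $2^c$ leaves carrying labels $\beta^N\colon\zo^c\to\zo$. Moreover, since the common shape depends only on $c$, there is a fixed navigation function $\ell\colon\zo^m\to\zo^c$ such that on any input $(x,y)$ the padded protocol reaches the leaf $\ell\big(Q_1^N(x,y),\dots,Q_m^N(x,y)\big)$, and therefore $P_N(x,y)=\beta^N\big(\ell(Q_1^N(x,y),\dots,Q_m^N(x,y))\big)$.

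The final step eliminates the $N$-dependence of $\beta^N$ by encoding it as extra constant queries: for each $b\in\zo^c$, set $R_b^N\in\QS(\cM)$ equal to $\mathbf 1_N$ if $\beta^N(b)=1$ and to $\mathbf 0_N$ otherwise, so that $R_b^N(x,y)=\beta^N(b)$ for all $x,y$. Taking the $c':=m+2^c=2^{c+1}-1$ query matrices $Q_1^N,\dots,Q_m^N,(R_b^N)_{b\in\zo^c}$ together with the function $f\big((z_i)_{i\le m},(w_b)_{b\in\zo^c}\big):=w_{\ell(z_1,\dots,z_m)}$ — which depends only on $c$ — a direct check shows that, with $z_i=Q_i^N(x,y)$, the value of $f$ at these arguments equals $R_{\ell(z)}^N(x,y)=\beta^N(\ell(z))=P_N(x,y)$, which is exactly \cref{eq:constant-cost-reduction}. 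I expect this normalization to be the only real obstacle: the crux is to verify that \emph{both} the varying shape of $T_N$ and its varying leaf labels can be absorbed into $\QS(\cM)$ using only a constant number of monochromatic queries, so that a single $f$ suffices for all $N$; the remaining verifications are routine bookkeeping.
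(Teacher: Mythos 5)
Your proof is correct and follows the same approach the paper sketches (and largely delegates to a citation): simulate the protocol tree by a Boolean function of the query answers, then make $f$ uniform in $N$ by encoding the $N$-dependent leaf labels into extra monochromatic queries drawn from $\QS(\cM)$ via restriction and duplication --- exactly the device in the paper's remark and footnote about absorbing the choice of $f_N$ into constant queries. Your write-up is merely more explicit and self-contained, with the minor caveat (which you flag) that padding with $\mathbf 1_N$ and encoding $\beta^N$ with $\mathbf 0_N,\mathbf 1_N$ requires $\cM$ to be non-degenerate, the remaining cases being trivial.
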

\begin{remark}
    The above proposition claims that we have a ``uniform'' function $f : \zo^c \to \zo$, that works for \emph{every} problem size $N$.
    Swapping the quantifiers to allow a different
    function $f_N : \zo^c \to \zo$ for each $N$ does not increase the power of the protocol, because there is only a
    constant number of functions $\zo^c \to \zo$,
    so the choice of $f_N$ can be
    encoded\footnote{The first $r$ queries can
    be used to select a function $f_N$
    out of a space of $2^r$ functions} in $f$.
\end{remark}

Constant-cost reductions are also natural and useful in the study of \emph{implicit graph
representations}; see
\cite{Har20,HWZ22,EHK22,NP23,HZ24} for more on this connection, and \cite{Chan23} for reductions
motivated directly from implicit graph representations.

\subsubsection{\texorpdfstring{$k$}{k}-Hamming Distance}
\label{section:intro-khd}

Let $\dist(x,y)$ denote the Hamming distance between two strings.
There are two variations of the \textsc{$k$-Hamming Distance} problem. The \textsc{Exact $k$-Hamming
Distance} problem is defined as $\EHD_k = (\EHD^n_k)_{n \in \bN}$ where
\[
    \forall x,y \in \zo^n : \qquad \EHD_k^n(x,y) = 1 \text{ if and only if } \dist(x,y) = k \,.
\]
The \textsc{Threshold $k$-Hamming Distance}
problem is defined as $\THD_k = (\THD_k^n)_{n \in \bN}$ where
\[
    \forall x,y \in \zo^n : \qquad \THD_k^n(x,y) = 1 \text{ if and only if } \dist(x,y) \leq k \,.
\]
Observe that $\THD_0 \equiv \EHD_0$ is the \textsc{Equality} problem. For other
constant values of $k\geq 1$, the problems are also equivalent under
constant-cost reductions: it is easy to show that $\mathsf{D}^{\THD_k}(\EHD_k)
\leq 2$ (since $\EHD_k \equiv \THD_k \wedge \neg \THD_{k-1}$, and $\THD_{k-1}$
can be computed by one query to $\THD_k$ by padding the input), and
$\mathsf{D}^{\EHD_k}(\THD_k) \leq k$ (since $\THD_k \equiv \bigvee_{t = 0}^k
\EHD_t$ and $\EHD_t$ can be computed by one query to $\EHD_k$ by padding the
input). The two-way public-coin randomized communication cost of these problems
is $O(k \log k)$ \cite{Yao03,HSZZ06} (with a matching lower bound when $k <
\sqrt n$ \cite{Sag18}) so for every constant $k$, $\EHD_k$ and $\THD_k$ are in
$\CCC$.

\subsubsection{Greater-Than and Stability}

The \textsc{Greater-Than} problem is $\GT = (\GT_t)_{t \in
\bN}$, where the matrix $\GT_t$ is defined on $i,j \in [t]$ as
\[
  \GT_t(i,j) = 1 \text{ if and only if } i \leq j \,.
\]
Following the terminology of \cite{HWZ22}, we say a problem $\cP$ is
\emph{stable} if the largest \textsc{Greater-Than} subproblem within $\cP$ has constant size.
Formally:
\begin{definition}[Stability]
\label{def:stable}
A set $\cM$ of matrices is \emph{stable} if there exists a constant $t$ such that $\GT_t \notin \QS(\cM)$.  
Equivalently, $\cM$
is stable if there exists a constant $t$ such that, for any matrix $M \in \cM$, and any set of rows
$x_1, \dotsc, x_m$ and columns $y_1, \dots, y_m$ of $M$ which satisfy $M(x_i, y_j) = 1$ iff $i \leq
j$, it holds that $m \leq t$.
\end{definition}
It is equivalent to require $\neg \GT_t \notin \cM$ instead of $\GT_t \notin \cM$, where $\neg \GT_t$ denotes
Boolean negation, because $\neg \GT_t$ is a submatrix of $\GT_{t+1}$.
We will require the following observation, which follows from the known fact
that $\mathsf{R}(\GT_t) = \Theta(\log\log t)$ \cite{Nis93,Vio15,RS15,BW16} and
therefore $\GT \notin \CCC$.
\begin{observation}
\label{prop:ccc-stable}
Every problem $\cQ \in \CCC$ is stable.
\end{observation}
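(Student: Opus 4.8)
The plan is to prove the contrapositive: if a problem $\cQ$ is \emph{not} stable, then $\cQ \notin \CCC$. By \cref{def:stable}, non-stability of $\cQ$ means that for every constant $t \in \bN$ we have $\GT_t \in \QS(\cQ)$. So the entire \textsc{Greater-Than} family appears inside the query set of $\cQ$, and it suffices to show that this forces the randomized cost of $\cQ$ to be unbounded.

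The single point to check is that the three operations defining $\QS(\cM)$ — row/column permutations, taking submatrices, and duplicating rows/columns — never increase the two-way public-coin randomized communication cost. For a submatrix $M'$ of $M$, any protocol for $M$ restricts to a protocol for $M'$ of the same cost (the players simply pretend their inputs are the corresponding rows/columns of $M$). A permutation of rows or columns is just a relabeling of inputs, which does not affect communication cost at all. Duplicating a row or column likewise does not change the cost, since the duplicate is treated identically to the original. Hence for any matrix $Q' \in \QS(\cQ)$ we have $\mathsf{R}(Q') \leq \max_N \mathsf{R}(Q_N)$, i.e.\ if $\cQ \in \CCC$ with $\mathsf{R}(\cQ) \leq c$, then every matrix in $\QS(\cQ)$ has randomized cost at most $c$.

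Now combine the two observations. If $\cQ \in \CCC$ with $\mathsf{R}(\cQ) \leq c$ and $\cQ$ is not stable, then $\GT_t \in \QS(\cQ)$ for every $t$, so $\mathsf{R}(\GT_t) \leq c$ for every $t$. But $\mathsf{R}(\GT_t) = \Theta(\log\log t)$ \cite{Nis93,Vio15,RS15,BW16}, which tends to infinity with $t$; taking $t$ large enough that $\mathsf{R}(\GT_t) > c$ yields a contradiction. Therefore every $\cQ \in \CCC$ is stable.

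I do not anticipate any real obstacle here: the only substantive content is the (routine) monotonicity of $\mathsf{R}$ under the $\QS$ operations, and the quoted $\Theta(\log\log t)$ bound on $\mathsf{R}(\GT_t)$, both of which are standard. The argument is essentially a one-line application of ``$\GT \notin \CCC$'' together with the hereditary nature of $\CCC$ already noted in \cref{section:intro-reductions}.
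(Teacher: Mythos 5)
Your proof is correct and matches the paper's intended argument exactly: the paper records this observation as following directly from $\mathsf{R}(\GT_t)=\Theta(\log\log t)$ together with the hereditary nature of $\CCC$ under the $\QS$ operations (already noted in \cref{section:intro-reductions}), and you have simply spelled out those two ingredients and the contrapositive in full.
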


\begin{remark}
    Stability is a necessary but not sufficient condition for a problem $\cQ$ to
    belong to $\CCC$. For example, any problem $\cQ \in \CCC$ must contain at
    most $2^{O(N \log N)}$ $N \times N$ matrices \cite{HWZ22}, but the family of
    all $K_{2,2}$-free matrices (\ie matrices with no $2 \times 2$ rectangle of
    1s), which is stable, contains more matrices than this \cite{LZ15}. Even
    restricting to problems $\cQ$ that are both stable and have at most $2^{O(N
    \log N)}$ $N \times N$ matrices is insufficient to guarantee membership in
    $\CCC$ \cite{HHH22counter}.
\end{remark}

\begin{remark}
    The size of the largest \textsc{Greater-Than} inside a matrix $M$ is also
    called the Littlestone dimension, which characterizes the number of mistakes
    made by an online learning algorithm \cite{Lit88,ALMM19}. Any stable set of
    matrices describes a hypothesis class that is learnable in a bounded
    number of mistakes, while $\CCC$ is the family of hypothesis
    classes that are learnable in a bounded number of mistakes with the
    \emph{perceptron} algorithm, due to the bounded-margin embedding of
    \cite{LS09}.
\end{remark}

\subsection{Proof Overview and Comparison to Prior Work}
\label{section:intro-proof-overview}

\subsubsection{Prior Techniques}
\newcommand{\EQ}{\textsc{Eq}}
We wish to prove lower bounds for the $\EHD_k$ problem, against arbitrary oracles.
By \cref{prop:functional-rank}, if we assume $\EHD_k$ reduces to $\cQ$, we can write,
for all $x,y \in \zo^n$,
\begin{equation}
\label{eq:proof-overview}
  \EHD_k^n(x,y) = f(Q_1(x,y), Q_2(x,y), \dotsc, Q_c(x,y)) \,,
\end{equation}
where $Q_1, \dotsc, Q_c \in \QS(\cQ)$.
A natural approach to show that $\EHD_k$ cannot be reduced to $\cQ$ is to define a complexity
measure $\kappa$ which is bounded on $\cQ$, such that, say, $\kappa\left( f(Q_1, \dotsc, Q_c)
\right) \leq g(\kappa(Q_1), \dotsc, \kappa(Q_c))$, for some function $g$ independent of $n$, while
$\kappa(\EHD_k^n) = \omega(1)$. This is the approach taken in several prior works~\cite{HHH23eccc, CLV19,
CHHS23, PSS23} to prove lower bounds against the \textsc{Equality} oracle.
The $\gamma_2$
norm, used in \cite{HHH23eccc, CHHS23}, cannot separate $\EHD_k$ from $\EHD_1$, for any constant $k > 1$.
Another measure, \emph{$\eta$-area}, was introduced in \cite{CLV19} to show separations within $\BPP$. It is unclear whether this could be used to show separations within $\CCC$, which would require a technical analysis of the monochromatic rectangles
within all submatrices of $\EHD_k$, and indeed all problems in $\CCC$, whereas the rectangle analyses in \cite{CLV19} fail for $\EHD_1$. It is not even known whether every problem in $\CCC$ has large (linear-size) monochromatic rectangles  \cite{HHH23eccc}, which would be the first step in proving \cref{thm:intro-no-complete-problem} using the technique of \cite{CLV19}.

A more structural (but non-quantitative) approach was taken in \cite{HWZ22,HZ24}, which depended
fundamentally on the fact that the \textsc{Equality} oracle
partitions the inputs into very simple monochromatic rectangles. Every step of that proof fails when
\textsc{Equality} is replaced with $\EHD_1$. The challenge with a structural approach is that it is
difficult to understand the structure of an arbitrary Boolean combination of matrices $Q_1, \dotsc,
Q_c$, even if the structure of $Q_1, \dotsc, Q_c$ are themselves well-understood, and the structure
of matrices belonging to problems in $\CCC$ is \emph{not} well-understood.

\subsubsection{Proof Overview}
We overcome these challenges in a way that is conceptually
simpler than the previous bounds against only the \textsc{Equality} oracle, and, unlike the prior work, does not involve any argument about monochromatic rectangles. Observe that
$\EHD_k$ is permutation-invariant in the following way. For every
pair of inputs $x,y \in \zo^n$, we think of $(x,y)$ as defining a sequence of ``dominoes''\!, \ie
pairs $ab \in \zo^2$, where $x_1y_1$ is the first domino, $x_2y_2$ is the second, and so on:
\begin{align*}
  \raisebox{-0.5em}{$\stackrel{\stackrel{\scriptstyle x}{\phantom{-}}}{\scriptstyle y}$}
  = \domino{x_1}{y_1} \, \domino{x_2}{y_2} \dots \domino{x_n}{y_n} 
\end{align*}
Then the output of $\EHD_k$ is invariant under permutations on these dominoes.

Our next ingredient is the
basic observation that every problem $\cQ$ in $\CCC$ has a fixed constant $t$ such that no
\textsc{Greater-Than} communication problem of size larger than $t \times t$ exists in $\QS(\cQ)$,
\ie problems $\cQ \in \CCC$ are ``stable'' (\cref{prop:ccc-stable}).

The function $\EHD_k(x,y) = f(Q_1(x,y), \dotsc, Q_c(x,y))$
is, as a whole, invariant under ``domino permutations'' on the input, but \emph{a priori}
we have no similar guarantee on 
the queries $Q_i$. Our goal is to show that each
query $Q_i$ must \emph{also} be permutation-invariant. We accomplish this (in \cref{section:proof}) by thinking of the query
responses as a coloring of a hypergraph whose vertices are the coordinates $[n]$ of the input.
Using only the permutation invariance of $\EHD_k$ and stability of the queries, we apply the
hypergraph Ramsey theorem in stages, in each stage increasing the number of permutations under which
the \emph{queries} $Q_i$ are invariant, until we achieve our goal.

From here, we see that, if there are two classes $A$ and $B$ of inputs $(x,y)$, where $A$ and $B$
are equivalence classes under domino permutations, and furthermore the output of $\EHD_k$ is
different on inputs $A$ than on inputs $B$, then there must be a query
$Q \in \QS(\cQ)$ that distinguishes \emph{all} inputs in $A$ from \emph{all} inputs in $B$. In this way,
we transform the
task of finding a lower bound for $\EHD_k$ against any constant number of arbitrary queries in $\QS(\cQ)$,
into the task of finding a lower bound for a partial subproblem of $\EHD_k$ against a
\emph{single} query from $\QS(\cQ)$, which is done in \cref{section:applications}.

Our main idea, forcing the algorithm to behave a certain way using Ramsey theory, was unexpectedly inspired by unrelated works
in \emph{property testing}, which use a more direct application of Ramsey theory to force 
testing algorithms to process random samples in a certain way \cite{Fis04,DKN15}.

\section{Permutation Invariance of \texorpdfstring{$k$}{k}-Hamming Distance Protocols}
\label{section:permutation-invariance}

We now prove the main lemma, which shows that any constant-cost reduction
from $\EHD_k$ to an arbitrary stable set of matrices $\cM$ (\cref{def:stable}) can be forced to use oracle queries that
are invariant under permutations on the input.

\begin{definition}[Permutation Invariance]
For a matrix $Q : \zo^n \times \zo^n \to \zo$, we say that $Q$ is \emph{permutation-invariant} if
for every $x, y \in \zo^n$ and every permutation $\pi : [n] \to [n]$,
\[
  Q(x,y) = Q(x_\pi, y_\pi) \,,
\]
where $x_\pi \define (x_{\pi(1)}, x_{\pi(2)}, \dotsc, x_{\pi(n)})$, and $y_\pi$ is defined
similarly.
\end{definition}

\begin{lemma}
\label{lem:permutation_invariant}
Let $k \in \bN$ and let $\cQ$ be any stable set of Boolean matrices. Suppose $\sD^\cQ(\EHD_k) = O(1)$. Then there exists a
constant $c$ and a function $f : \zo^c \to \zo$,  such that, for all
$n \in \bN$, there are $c$
\emph{permutation-invariant} queries $Q_1, \ldots, Q_c \in \QS(\cQ)$
satisfying
\[
  \forall x,y \in \zo^n \quad  \EHD_k^n(x,y) = f(Q_1(x,y), \dotsc, Q_c(x,y)) .
\]
\end{lemma}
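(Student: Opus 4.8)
The plan is to start from a constant-cost reduction $\EHD_k^n(x,y) = f(Q_1(x,y),\dotsc,Q_c(x,y))$ with $Q_i \in \QS(\cQ)$ (guaranteed by \cref{prop:functional-rank}), and show that the $Q_i$ can be replaced by permutation-invariant queries at the cost of increasing $c$. The key observation driving everything is that $\EHD_k^n$ itself is invariant under the diagonal action of $S_n$ on the ``dominoes'' $(x_i,y_i)$, and each input pair $(x,y)$ is determined, up to this $S_n$-action, by the counts $(n_{00},n_{01},n_{10},n_{11})$ of each domino type; only $n_{01}+n_{10}=\dist(x,y)$ matters for the output. I would think of the query answers as a coloring: fix $n$ and the domino-type counts (so we have fixed an $S_n$-orbit of inputs), and for an input $(x,y)$ in this orbit record the vector $(Q_1(x,y),\dotsc,Q_c(x,y)) \in \zo^c$ as a ``color.'' The goal is to find, inside a large sub-instance, a sub-collection of coordinates on which this coloring is genuinely symmetric.

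The main technical engine is iterated hypergraph Ramsey. I would proceed coordinate-block by coordinate-block, or more precisely by increasing the size of the symmetric part: embed $\zo^m \hookrightarrow \zo^n$ for $m \ll n$ by placing the ``free'' coordinates of a smaller instance into a chosen $m$-subset $S \subseteq [n]$ and fixing the rest, and ask that the queries, restricted to inputs supported (in the relevant sense) on $S$, become invariant under $S_m$ acting on $S$. To get this, color the set of ordered tuples of coordinates (or suitable configurations) by how the queries respond, invoke the hypergraph Ramsey theorem to extract a large monochromatic subset, and use stability of $\cQ$ (\cref{prop:ccc-stable}/\cref{def:stable}) to rule out the ``unbounded chain'' obstruction: if a query were sensitive to the \emph{order} within a Ramsey-homogeneous block in a way that persisted, one could read off an arbitrarily large $\GT_t$ inside that query matrix, contradicting stability. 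Repeating this in stages—each stage fixing one more ``symmetry axis'' while shrinking $n$ to a still-growing value and possibly enlarging $c$ to account for case splits—eventually forces every $Q_i$ (on the surviving sub-instance, which we then rename to a full instance of size $n'\to\infty$) to be permutation-invariant. Since $f$ is fixed and $c$ stays constant, and $\EHD_k$ is hereditary under the $\QS$ operations, this yields the claimed representation for infinitely many (hence, after padding, all) $n$.

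I would organize the write-up as: (i) reduce to a single $S_n$-orbit and set up the coloring; (ii) a ``one-step'' lemma—given queries invariant under a subgroup $H \le S_n$, pass to a sub-instance on which they are invariant under a strictly larger group, using Ramsey plus stability; (iii) iterate the one-step lemma a bounded number of times (the number of steps should depend only on $k$ and $c$, not on $n$), tracking how $n$ must grow at the outset so that the final sub-instance is still large; (iv) conclude, handling the bookkeeping that replaces ``for infinitely many $n$'' with ``for all $n$'' by padding coordinates with a fixed domino type.

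The hard part will be step (ii): making precise what ``configuration'' the Ramsey theorem should color so that (a) homogeneity really does buy an extra symmetry of the \emph{queries} (not just of $f$ applied to them), and (b) the stability hypothesis is exactly what excludes the bad homogeneous classes. The subtlety is that an individual $Q_i$ need not be permutation-invariant a priori—only the combination is—so the argument must extract symmetry one query at a time (or one ``coordinate-type'' at a time) while controlling the blow-up in $c$; getting the quantifiers and the order of Ramsey applications right, so that the symmetry gained in earlier stages is not destroyed in later ones, is where the real care is needed.
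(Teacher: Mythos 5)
Your proposal captures essentially the same approach as the paper: iterated hypergraph Ramsey to extract homogeneity of the query responses, combined with stability of $\cQ$ to rule out the Greater-Than obstruction, organized as a ``one-step'' lemma that enlarges the group of invariance and is applied a bounded number of times. The one place where your sketch wavers and the paper is cleaner is the bookkeeping around $c$: you hedge about ``possibly enlarging $c$'' and extracting symmetry ``one query at a time,'' but in fact $c$ never changes, because each query on the small instance $\zo^n$ is simply the restriction of the corresponding query on the large instance $\zo^N$ to an embedded copy of $\zo^n$, and $\QS(\cQ)$ is closed under taking submatrices. The paper's concrete realization of your ``one more symmetry axis'' is by domino type, in exactly three stages: two Ramsey stages (no stability needed) that enforce invariance under swaps moving $\domino{0}{0}$ and then $\domino{1}{1}$ dominoes, followed by a final stability-only stage (no Ramsey) handling reorderings of the $\domino{0}{1}/\domino{1}{0}$ dominoes, where sensitivity to their relative order would directly produce an arbitrarily large $\GT_t$ submatrix in one of the queries.
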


We state some notation and the Ramsey theorem in \cref{section:proof-setup} and prove
the lemma in \cref{section:proof}.

\subsection{Setup: Dominoes and the Hypergraph Ramsey Theorem}
\label{section:proof-setup}

\begin{definition}(Domino)
We call a pair $ab \in \zo^2$ a \emph{domino} and we denote it as $\domino{a}{b}$.
For any $n \in \bN$ and for any pair $(x,y) \in \zo^n \times \zo^n$, the \emph{dominoes} of $(x, y)$ is the sequence 
\[
\domino x y \define \left( \domino{x_1}{y_1}, \domino{x_2}{y_2}, \dotsc, \domino{x_n}{y_n}\right).
\]
For a set of dominoes $\Delta \subseteq \zo^2$, we denote the complement of $\Delta$ by $\overline{\Delta} = \zo^2 \setminus \Delta$.
\end{definition}

\newcommand{\type}[1]{\left\langle #1 \right\rangle}
\begin{definition}[Type] 
\label{def:signature}
Let $\Delta \subseteq \zo^2$ be a set of dominoes. A \emph{$\Delta$-type}
is a tuple $\type{\Gamma_\Delta, \tau}$ containing a \emph{$\Delta$-signature} $\Gamma_\Delta \in \Delta^*$,
which is an ordered sequence of dominoes in $\Delta$, and a \emph{tally} $\tau = [\tau_{ab}]_{a,b \in \zo}$, which is a sequence with $\tau_{ab} \in \bZ$.
For a pair $(x,y) \in \{0,1\}^n \times \{0,1\}^n$ and a set of dominoes $\Delta \subseteq \zo^2$, the
\emph{$\Delta$-type} of $(x,y)$, denoted $\chi_\Delta(x,y)$,
is the tuple $\type{\Gamma_\Delta(x,y), \tau(x,y)}$, where 
\begin{itemize}
    \item $\Gamma_\Delta(x,y)$ is the $\Delta$-signature of $(x,y)$: the subsequence 
    of dominoes of $(x,y)$ that belong to $\Delta$,
    \[
        \Gamma_\Delta(x,y) = \left( \domino{x_i}{y_i} \mid \text{ for all } i \in [n], \hspace{0.5em}  \domino{x_i}{y_i}
            \in \Delta\right);
    \]
    \item $\tau(x,y) = [\tau_{ab}(x,y)]_{a,b \in \zo}$ denotes the \emph{tally} of the dominoes of $(x,y)$, where $\tau_{ab}(x,y)$ is the number of times $\domino a b$ occurs in the dominoes of $(x,y)$.
\end{itemize}
For example, with $\Delta = \left\{\domino{0}{1},\domino{1}{0}\right\}$:
\[
    \chi_\Delta\left(\domino{0110000}{0101001}\right) = \type{\Gamma_\Delta=\left(\domino{1}{0}, \domino{0}{1}, \domino{0}{1}\right), [\tau_{00}= 3,\tau_{01}=2, \tau_{10}=1 , \tau_{11}= 1] }.
\]
\end{definition}

\begin{definition}[Shuffle Invariance]
For any matrix $Q : \zo^n \times \zo^n \to \zo$ and a set of dominoes $\Delta \subseteq \zo^2$, we
say that $Q$ is \emph{$\Delta$-shuffle invariant} if
\[
    \forall x,y,u,v \in \zo^n : \qquad \chi_{\overline \Delta}(x,y) = \chi_{\overline \Delta}(u,v) \implies Q(x,y) = Q(u,v) \,.
\]
In other words, $Q$ is \emph{$\Delta$-shuffle invariant} if its value on $(x,y)$ is invariant under any
swap of consecutive dominoes in the sequence 
$\domino{x_1}{y_1} \domino{x_2}{y_2} \dotsm \domino{x_n}{y_n}$ which involve a domino in $\Delta$; any permutation of the dominoes achieved by a sequence of such swaps preserves the relative order of dominoes \emph{outside} $\Delta$,
and therefore preserves the $\overline{\Delta}$-type. 
$Q$ is permutation-invariant if it is $\Delta$-shuffle invariant for the full set of dominoes $\Delta = \zo^2$.
\end{definition}

We require the well-known hypergraph Ramsey theorem. For any set $T$ and any $0 \leq t \leq |T|$, write ${T \choose t}$ for the family of subsets of $T$ of cardinality $t$.

\begin{theorem}[Hypergraph Ramsey theorem]
\label{thm:hypergraph-ramsey}
For any $\alpha,\beta\in \bN$ and $\sigma \geq \alpha$, there exists $R = R(\alpha,\beta,\sigma)$ such that for any coloring $\kappa :\binom{[R]}{\alpha} \to [\beta]$, there exists a subset $T \subseteq [R]$ of size $\sigma$ such that $\kappa$ is constant on ${T \choose \alpha}$.
\end{theorem}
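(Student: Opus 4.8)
The plan is to prove \cref{thm:hypergraph-ramsey} by the classical argument: induction on the uniformity $\alpha$, keeping the number of colors $\beta$ and the target size $\sigma$ arbitrary throughout. All of the substance is in tracking how large $R$ must be; the combinatorial content is just iterated pigeonhole.

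For the base case $\alpha = 1$, a coloring $\kappa : \binom{[R]}{1} \to [\beta]$ is merely a $\beta$-coloring of the $R$ points, so pigeonhole gives a color class of size $\sigma$ as soon as $R \geq \beta(\sigma - 1) + 1$; thus $R(1,\beta,\sigma) \define \beta(\sigma-1)+1$ works. For the inductive step, assume \cref{thm:hypergraph-ramsey} for uniformity $\alpha - 1$ (all $\beta$, all targets) and fix $\alpha \geq 2$, $\beta$, and $\sigma \geq \alpha$. Set $\sigma' \define \beta(\sigma-1)+1$ and define integers backwards by $r_{\sigma'} \define \alpha - 1$ and $r_{i-1} \define R(\alpha-1,\beta,r_i)+1$ for $i = \sigma',\sigma'-1,\dotsc,1$, and put $R(\alpha,\beta,\sigma) \define r_0$; note $r_0 > r_1 > \dotsb > r_{\sigma'} = \alpha - 1$, so every $r_i$ is at least the uniformity $\alpha - 1$ and the inductive hypothesis is always in range.

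Given $\kappa : \binom{[R]}{\alpha} \to [\beta]$, I would build greedily a nested chain $[R] = S_0 \supseteq S_1 \supseteq \dotsb \supseteq S_{\sigma'}$ with $|S_i| = r_i$, together with vertices $v_i \in S_{i-1}$ and colors $c_i \in [\beta]$, as follows: given $S_{i-1}$, pick any $v_i \in S_{i-1}$, consider the $(\alpha-1)$-uniform coloring $A \mapsto \kappa(\{v_i\}\cup A)$ on $\binom{S_{i-1}\setminus\{v_i\}}{\alpha-1}$, and apply the inductive hypothesis --- valid since $|S_{i-1}\setminus\{v_i\}| = r_{i-1}-1 = R(\alpha-1,\beta,r_i)$ --- to extract $S_i \subseteq S_{i-1}\setminus\{v_i\}$ of size $r_i$ on which that coloring is constantly some $c_i$. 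After $\sigma'$ rounds, pigeonhole on $c_1,\dotsc,c_{\sigma'} \in [\beta]$ yields a color $c$ occurring at indices $i_1 < \dotsb < i_\sigma$, and I claim $T \define \{v_{i_1},\dotsc,v_{i_\sigma}\}$ is monochromatic for $\kappa$. Indeed any $\{v_{i_{j_1}},\dotsc,v_{i_{j_\alpha}}\} \in \binom{T}{\alpha}$ with $j_1 < \dotsb < j_\alpha$ satisfies $\{v_{i_{j_2}},\dotsc,v_{i_{j_\alpha}}\} \subseteq S_{i_{j_1}}$ (because every $v_\ell$ with $\ell > i_{j_1}$ was drawn from $S_{\ell-1} \subseteq S_{i_{j_1}}$), so its $\kappa$-color equals $c_{i_{j_1}} = c$ by the defining property of $S_{i_{j_1}}$.

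The one place demanding care is the backward recursion defining $R(\alpha,\beta,\sigma)$: the target sizes $r_i$ must be chosen so that the greedy construction survives all $\sigma'$ rounds and still terminates with $|S_{\sigma'}| \geq \alpha - 1$, so that the inductive hypothesis is never invoked below its range of validity. I expect this bookkeeping (rather than any conceptual difficulty) to be the main thing to get right; everything else is pigeonhole and unwinding the definition of the induced coloring $A \mapsto \kappa(\{v_i\} \cup A)$.
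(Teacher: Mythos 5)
The paper states this result simply as ``the well-known hypergraph Ramsey theorem'' and supplies no proof of it; the only argument given nearby is the derivation of \cref{cor:hypergraph-ramsey} \emph{from} it. So there is no in-paper proof to compare yours against. Your proof is correct, and it is the classical argument: induction on the uniformity $\alpha$, with base case $\alpha = 1$ by pigeonhole; in the inductive step you build a nested chain $S_0 \supseteq S_1 \supseteq \cdots \supseteq S_{\sigma'}$ with pivots $v_i \in S_{i-1}$, each step applying the $(\alpha-1)$-uniform case to the link coloring $A \mapsto \kappa(\{v_i\}\cup A)$; you then pigeonhole the recorded colors $c_1,\dotsc,c_{\sigma'} \in [\beta]$ with $\sigma' = \beta(\sigma-1)+1$. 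The bookkeeping you flag is handled correctly: defining the target sizes $r_i$ backwards from $r_{\sigma'} = \alpha-1$ via $r_{i-1} = R(\alpha-1,\beta,r_i)+1$ gives a strictly decreasing sequence all of whose terms are at least $\alpha-1$, so the inductive hypothesis never leaves its range of validity, and $|S_{i-1}\setminus\{v_i\}| = R(\alpha-1,\beta,r_i)$ is exactly what the hypothesis needs. The final monochromaticity check (that for ordered $j_1<\dotsb<j_\alpha$ the tail $\{v_{i_{j_2}},\dotsc,v_{i_{j_\alpha}}\}$ lies in $S_{i_{j_1}}$, so the $\kappa$-value equals $c_{i_{j_1}} = c$) is also right, and the pivots $v_i$ are automatically distinct since $v_\ell \notin S_\ell$ while later pivots lie in $S_\ell$. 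In short: a correct, self-contained proof of a fact the paper treats as a citation.
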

We use an easy corollary of this theorem and provide a proof for the sake of completeness. For any set $T$ and any $0 \leq t \leq |T|$, write
${T \choose \leq t}$ for the set of subsets of $T$ of cardinality at most $t$.

\begin{corollary}
\label{cor:hypergraph-ramsey}
For any $\alpha,\beta \in \bN$ and $\sigma \geq \alpha$, there exists $N = N(\alpha,\beta,\sigma)$ such that for any coloring $\kappa : { [N] \choose \leq \alpha } \to [\beta]$, there exists $T \subseteq [N]$ of size $\sigma$ such that $\kappa$ is constant on ${T \choose \alpha'}$ for every $\alpha' \leq \alpha$.
\end{corollary}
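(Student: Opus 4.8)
The plan is to derive \cref{cor:hypergraph-ramsey} from \cref{thm:hypergraph-ramsey} by a standard iterated-Ramsey argument: we will handle each arity $\alpha' \in \{1,2,\dots,\alpha\}$ one at a time, applying the hypergraph Ramsey theorem repeatedly and shrinking the ground set, so that after the last stage we have a single set $T$ of size $\sigma$ on which $\kappa$ is constant on $\binom{T}{\alpha'}$ simultaneously for \emph{every} $\alpha' \le \alpha$.

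Here are the steps in order. First I would fix $\alpha,\beta,\sigma$ and define a decreasing sequence of target sizes $\sigma = \sigma_\alpha \le \sigma_{\alpha-1} \le \dots \le \sigma_0$, chosen from the top down: set $\sigma_\alpha = \sigma$, and having defined $\sigma_{i}$, set $\sigma_{i-1} = R(i, \beta, \sigma_{i})$ where $R$ is the function from \cref{thm:hypergraph-ramsey} (we also need $\sigma_{i-1} \ge i$, which holds automatically since $R(i,\beta,\sigma_i) \ge \sigma_i \ge i$ when $\sigma_i \ge i$; by downward induction $\sigma_i \ge \sigma \ge \alpha \ge i$ throughout). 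Then I would set $N = N(\alpha,\beta,\sigma) \define \sigma_0$. Second, given a coloring $\kappa : \binom{[N]}{\le \alpha} \to [\beta]$, restrict $\kappa$ to $\binom{[N]}{1}$ (a coloring of singletons, i.e.\ of elements) and apply \cref{thm:hypergraph-ramsey} with $\alpha \mapsto 1$, $\sigma \mapsto \sigma_1$ to obtain $T_1 \subseteq [N]$ with $|T_1| = \sigma_1$ on which $\kappa$ is constant on $\binom{T_1}{1}$. Third, iterate: at stage $i$ (for $i = 2, \dots, \alpha$) we have $T_{i-1}$ with $|T_{i-1}| = \sigma_{i-1} = R(i,\beta,\sigma_i)$; apply \cref{thm:hypergraph-ramsey} to the coloring $\binom{T_{i-1}}{i} \to [\beta]$ induced by $\kappa$, getting $T_i \subseteq T_{i-1}$ with $|T_i| = \sigma_i$ on which $\kappa$ is constant on $\binom{T_i}{i}$. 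Crucially, since $T_i \subseteq T_{i-1} \subseteq \dots \subseteq T_1$, the set $T_i$ inherits monochromaticity on $\binom{T_i}{i'}$ for all $i' \le i$ as well (a constant coloring on $\binom{T_{i'}}{i'}$ restricts to a constant coloring on $\binom{T_i}{i'}$). Finally, take $T = T_\alpha$, which has size $\sigma_\alpha = \sigma$ and on which $\kappa$ is constant on $\binom{T}{\alpha'}$ for every $\alpha' \le \alpha$, as required.

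There is essentially no obstacle here — the only thing to be slightly careful about is the bookkeeping of the size parameters and the nesting $T_\alpha \subseteq \cdots \subseteq T_1$, which is what lets the monochromaticity from earlier (lower-arity) stages survive into the final set. One minor point worth noting explicitly in the writeup: the corollary as stated ranges $\alpha'$ over $\{1,\dots,\alpha\}$ (or one could include $\alpha'=0$ trivially, since $\binom{T}{0}$ is a single element and any coloring is constant on it), so no special case is needed there. The bound obtained is a tower-type function of $\alpha$, which is irrelevant for the application since $\alpha$ will be a constant.
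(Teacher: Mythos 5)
Your proof is correct and takes essentially the same route as the paper: apply the hypergraph Ramsey theorem once per arity, nesting and shrinking the ground set at each stage so that monochromaticity accumulates. The only cosmetic difference is that you iterate from arity $1$ up to $\alpha$, whereas the paper phrases it as an induction on $\alpha$ that peels off the top arity first.
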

\begin{proof}
    For any $\alpha',\beta',\sigma' \in \bN$, write $R(\alpha',b',\sigma')$ for the number obtained from \cref{thm:hypergraph-ramsey}. 

    We prove the statement by induction on $\alpha$. For $\alpha=1$ the conclusion is easy to obtain. Now assume $\alpha > 1$ and write $M \define N(\alpha-1,\beta,\sigma)$ for the number obtained by induction for parameters $\alpha-1,\beta,\sigma$. We define $N \define N(\alpha,\beta,\sigma) \define R(\alpha,\beta, N(\alpha-1,\beta,\sigma))$.
    
    Let $\col : { [N] \choose \leq \alpha } \to [\beta]$.
    Let $\col_\alpha : { [N] \choose \alpha } \to [\beta]$ be the function $\col$ restricted to domain ${ [N] \choose \alpha }$.
    Then by \cref{thm:hypergraph-ramsey}, there exists a set $T \subseteq [N]$ of size $M$ such that $\col_\alpha$
is constant on domain ${ T \choose \alpha }$. Relabel the elements of $T$ as $[M]$ and define the
function $\col' : { [M] \choose \leq \alpha-1 } \to [\beta]$ as the function $\col$ restricted to the domain ${T \choose \leq \alpha-1}$ with the elements of $T$ relabeled as $[M]$.
    $T$ has cardinality $M = N(\alpha-1,\beta,\sigma)$, so by induction $\col'$ is constant on ${ [M] \choose \alpha'}$
for each $\alpha' \leq \alpha-1$. Then $\col$ is constant on ${ T \choose \alpha'}$ for $\alpha' \leq \alpha-1$, and since
$\col_\alpha$ is constant on ${T \choose \alpha}$, this implies the conclusion.
\end{proof}

\subsection{Proof of \texorpdfstring{\cref{lem:permutation_invariant}}{Structural Lemma}}
\label{section:proof}

We now prove the permutation invariance lemma. Let $\cQ$ be any
stable set of matrices.
For any set $\Delta \subseteq \zo^2$ of dominoes, consider the following 
statement:

\noindent
\textbf{$\Delta$-Shuffle Property:}
\emph{There exist a constant $c$ and a function 
$f : \zo^c \to \zo$ such that for every $n \in \bN$
there are $Q_1, \dotsc, Q_c \in \QS(\cQ)$ such that
\begin{equation}
\label{eq:invariance}
    \forall x,y \in \zo^n : \qquad  
    \EHD_k^n(x,y) = f(Q_1(x,y), \dotsc, Q_c(x,y)) \,,
\end{equation}
and each $Q_i$ is $\Delta$-shuffle invariant.}

\cref{prop:functional-rank} guarantees that the $\Delta$-shuffle property 
holds for $\Delta = \emptyset$. Our goal is to show that it holds for $\Delta = \zo^2$.

\begin{claim}
\label{claim:main-step-1}
Let $\Delta \subseteq \zo^2$ be any set of dominoes, and suppose the
$\Delta$-shuffle property holds. Then for any $a \in \zo$,
the $\left(\Delta \cup  \left\{\domino a a \right\}\right)$-shuffle property also holds.
\end{claim}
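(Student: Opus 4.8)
The plan is to start from the hypothesis that the $\Delta$-shuffle property holds, with witnessing queries $Q_1,\dots,Q_c \in \QS(\cQ)$ and combining function $f$, and upgrade the queries so that they become invariant under swaps involving the new domino $\domino a a$ as well, without losing invariance under swaps involving dominoes in $\Delta$. Fix $n$ and an input length; let $d = \tau_{aa}(x,y)$ be the number of occurrences of the domino $\domino a a$. The key point is that moving a $\domino a a$ domino past a domino outside $\Delta \cup \{\domino a a\}$ changes nothing about the $\overline{(\Delta\cup\{\domino a a\})}$-type, so what we must control is the behavior of each $Q_i$ under the relative positions of the $\domino a a$ dominoes among the dominoes in $\overline\Delta \setminus \{\domino a a\}$. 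Since the total tally is fixed by the $\overline\Delta$-type of any fixed input, and since $\EHD_k^n$ depends only on tallies, the only freedom left is \emph{where} the $d$ copies of $\domino a a$ sit relative to the ``frame'' dominoes outside $\Delta\cup\{\domino a a\}$; there are at most $(m+1)^d$ such placements if $m$ is the number of frame dominoes, but crucially the number of frame dominoes is bounded because the $\overline\Delta$-type is fixed — wait, it is not; instead the right move is to use stability of $\cQ$.

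Here is the argument I would carry out. Fix the $\overline{(\Delta\cup\{\domino a a\})}$-type, which fixes everything except the interleaving of the $\domino a a$'s with the other $\overline\Delta$-dominoes. Consider the positions $p_1 < p_2 < \dots < p_d$ (in the subsequence of $\overline\Delta$-dominoes) occupied by the $\domino a a$ copies. For each query $Q_i$, define a coloring of $d$-subsets (equivalently, of the configuration of $\domino a a$-placements among $N$ available slots, where $N$ is large) recording the value of $Q_i$. By \emph{stability} of $\cQ$: if $Q_i$ took too many distinct values as we slide a single $\domino a a$ domino left-to-right past a long block of identical frame dominoes, we would embed a large \textsc{Greater-Than} inside $Q_i \in \QS(\cQ)$ — a monotone staircase pattern — contradicting $\GT_t \notin \QS(\cQ)$. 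More precisely, I would apply \cref{cor:hypergraph-ramsey} with $\alpha = d$ (or induct on $d$, handling one $\domino a a$ at a time), coloring each subset of slot-positions by the tuple $(Q_1(x,y),\dots,Q_c(x,y))$ of responses, to extract a large set $T$ of slots on which every $Q_i$ is constant across all placements of the $\domino a a$'s within $T$. On inputs whose $\domino a a$'s all lie in $T$, each $Q_i$ is now $(\Delta\cup\{\domino a a\})$-shuffle invariant; but we need this for \emph{all} inputs, not just those with $\domino a a$'s confined to $T$.

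To finish and handle all inputs: the standard device is to pass to a restricted input length and pad. Work with inputs of a smaller length $n'$, chosen via \cref{cor:hypergraph-ramsey} so that $R = R(\dots) \geq n$ slots are available; embed every length-$n'$ input into a length-$R$ input by a fixed injection that places all $\domino a a$-relevant coordinates into the Ramsey-selected set $T$ and pads the rest with, say, frame dominoes $\domino 0 1$ in fixed positions (adjusting so that $\EHD_k$ values are preserved, or more simply observing that $\QS$ is closed under taking submatrices so we may restrict the $Q_i$'s to the relevant subgrid). After this embedding, the restricted queries $Q_i' \in \QS(\cQ)$ are $(\Delta\cup\{\domino a a\})$-shuffle invariant on all length-$n'$ inputs, the combining function $f$ is unchanged, and since this works for every $n$ we obtain the $(\Delta\cup\{\domino a a\})$-shuffle property. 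The main obstacle I expect is precisely this last bookkeeping step — making the Ramsey extraction uniform in $n$ and arranging the padding so that it neither corrupts $\EHD_k$ values nor the $\overline\Delta$-type — together with correctly setting up the coloring so that \emph{stability} (not membership in $\CCC$) is the only property of $\cQ$ used, since the lemma is stated for arbitrary stable $\cQ$. The inductive structure of \cref{cor:hypergraph-ramsey} over $\alpha' \leq \alpha$ is exactly what lets us iterate the single-domino argument to all $d$ copies of $\domino a a$ simultaneously.
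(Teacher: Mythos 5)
Your overall direction—apply Ramsey to homogenize the queries on a carefully selected coordinate set $T$, then restrict—is indeed what the paper does, but there are several genuine gaps. First, you invoke \emph{stability} of $\cQ$ to control the sliding of a $\domino a a$ domino via a \textsc{Greater-Than} staircase; this does not work here. Sliding a single $\domino a a$ past a block of frame dominoes gives a one-parameter family of input pairs $(x^{(p)},y^{(p)})$, not a freely chosen set of rows crossed with a freely chosen set of columns, so a changing query value does not yield an embedded $\GT_t$. The paper's proof of this claim uses \emph{no} stability at all—only Ramsey; stability enters only in the next claim, where off-diagonal dominoes $\domino 0 1$, $\domino 1 0$ are handled, and there the $\GT$ construction genuinely works because one can vary rows and columns independently. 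Second, your padding choice ($\domino 0 1$) changes the Hamming distance; the correct choice, and the thing that makes the whole construction go through, is to write $a$ outside $T$ (producing $\domino a a$ dominoes), since this preserves both $\dist(x,y)$ and the $\overline{\Delta'}$-type. Third, your coloring of subsets is underspecified: the paper colors each $S \subseteq [N]$ (with $|S|\le n$) by the full vector of query responses $(Q'_i(U,V))_{i\in[c]}$ over \emph{all} possible non-$\domino a a$ domino sequences placed on $S$ (with $a$'s elsewhere). This is what lets the restriction to $T$ cover every pair in $\zo^n \times \zo^n$, not just those whose $\domino a a$ positions lie in some special set.

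Finally, you assume that fixing the $\overline{(\Delta\cup\{\domino a a\})}$-type leaves only the positions of the $\domino a a$'s free, but it also allows the $\Delta$-dominoes to be rearranged. The paper handles this by a two-step argument: (i) if in addition the $D$-signatures agree (same subsequence of non-$\domino a a$ dominoes), the Ramsey homogenization applies directly because the two inputs place their non-$\domino a a$ dominoes on equal-sized subsets of $T$; (ii) if the $D$-signatures differ, first push all $\domino a a$'s to the end on both inputs (appealing to case (i)), then note that the resulting pairs have the same $\overline\Delta$-type, so $\Delta$-shuffle invariance of the $Q'_i$ finishes. This second reduction step—using the inductive hypothesis ($\Delta$-shuffle invariance) as a black box after the Ramsey step—is absent from your proposal. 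The Ramsey corollary's quantification over $\alpha' \le \alpha$ is needed not to ``iterate the single-domino argument'' but because the number of non-$\domino a a$ dominoes varies over inputs, so all cardinalities of $S$ must be made monochromatic simultaneously.
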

\begin{proof}[Proof of claim]
Let $\Delta' = \Delta \cup \left\{\domino a a\right\}$, and
let $\overline a$ denote the negation of the bit $a$. Then, let $D = \left\{\domino{\overline{a}}{\overline{a}},\domino{1}{0},\domino{0}{1}\right\}$. For $n \in \bN$, let $N = N(n, 2^b, n)$ be the number obtained from \cref{cor:hypergraph-ramsey}, where $b = c \cdot 3^n$, which will be justified below. We will embed $\zo^n$ into the larger domain of $\zo^N$ such that the embedding preserves the Hamming distance and the $D$-type of any two strings, and allows to show the $\Delta'$-shuffle invariance of queries. 

The first two properties are easy to satisfy. Take any subset of coordinates $T \subset [N]$ of size $|T|=n$, and let $\phi_T : \zo^n \to \zo^N$  be the map that writes $x \in \zo^n$ into the coordinates of $T$ in the order-preserving way and sets the coordinates outside of $T$ to be $a$. Observe that, for all $x,y \in \zo^n$,
\[
   \dist(x,y) = \dist(\phi_T(x), \phi_T(y)) \text{ and } \Gamma_{D} \left(x,y \right) = \Gamma_{D} \left(\phi_T(x),\phi_T(y)\right) \,.
\]
Next, we choose a set $T$ that helps us to show the $\Delta'$-shuffle invariance of the queries.
By assumption, there exists $f : \zo^c \to \zo$ and
$Q'_1, \dotsc, Q'_c : \zo^N \times \zo^N \to \zo$ with each $Q'_i \in \QS(\cQ)$ being
$\Delta$-shuffle invariant, such that
\[
\forall X,Y \in \zo^N : \qquad \EHD_k^N(X,Y) = f(Q'_1(X,Y), \dotsc, Q'_c(X,Y)) .
\]
where we write $X, Y \in \zo^N$ to distinguish them from lower-dimensional $x,y \in \zo^n$.

To each $S \subseteq [N]$ with $|S|\leq n$ we assign a color, which is a binary string $\col(S) \in \zo^b$, as follows.
Let $s = |S|$ and define the color $q(d)$ of a domino vector $d \in
D^s$ to be the sequence
of $c$ bits $q(d) = (Q'_1(U,V), \dotsc, Q'_c(U,V))$, where $U,V \in \zo^N$ is the unique pair whose
$D$-signature is $d$ and the dominoes of $d$ are written in the coordinates $S$ of $[N]$. 
Now set $\col(S)$ to be $(q(d))_{d \in D^s}$, the concatenation of the colors $q(d)$ of
all possible signature vectors $d \in
D^s$, in lexicographic order of $d$.  The total number
of bits in $\col(S)$ is at most $c \cdot 3^s \leq c \cdot 3^n = b$, so there are at most $2^b$ colors.

By \cref{cor:hypergraph-ramsey}, there exists a set $T \subseteq [N]$ of size $|T|=n$ such that for every $s \leq n$, the subsets $S \subseteq T$ of cardinality $|S|=s$ each have the same color $\col(S)$. Let $\phi\define \phi_T$ be the map defined above, which preserves the Hamming distance of $x,y \in \zo^n$ and their $D$-signature.
For each $i \in [c]$, we now define the matrix $Q_i : \zo^n \times \zo^n \to \zo$ by $Q_i(x,y) \define Q'_i(\phi(x), \phi(y))$. Observe that $Q_i$ is a submatrix of $Q'_i$, so $Q_i \in \QS(\cQ)$ since $\QS(\cQ)$ is closed
under taking submatrices. Now,
\begin{align*}
    \forall x,y \in \zo^n : \qquad f(Q_1(x,y), \dotsc, Q_c(x,y))
        &= f(Q_1'(\phi(x), \phi(y)), \dotsc, Q_c'(\phi(x),\phi(y))) \\
        &= \EHD_k^N(\phi(x), \phi(y)) \\
        &= \EHD_k^n(x,y) \,.
\end{align*}
It remains to show that each $Q_i$ is $\Delta'$-shuffle invariant. Observe that
each $Q_i$ is $\Delta$-shuffle invariant because it is a submatrix of $Q'_i$ and
$Q'_i$ is $\Delta$-shuffle invariant. Indeed, for any $x,y,u,v \in \zo^n$ and
any $i \in [c]$,
\begin{align*}
\qquad  \chi_{\overline \Delta}(x,y) = \chi_{\overline \Delta}(u,v) 
&\implies 
\chi_{\overline \Delta}(\phi(x),\phi(y)) = \chi_{\overline \Delta}(\phi(u),\phi(v)) \\
&\implies 
Q'_i(\phi(x),\phi(y)) = Q_i(\phi(u),\phi(v)) \\
&\implies
Q_i(x,y) = Q_i(u,v) \,.
\end{align*}
Now let $x,y,u,v \in \zo^n$ such that $ \chi_{\overline{\Delta'}}(x,y) =
\chi_{\overline{\Delta'}}(u,v)$. We must show that $Q_i(x,y) = Q_i(u,v)$. First,
assume that $(x,y)$ and $(u,v)$ have the same $D$-signature, so that the
dominoes of $(u,v)$ are obtained from those of $(u,v)$ by a sequence of swaps of
consecutive dominoes, where each swap involves an $\domino{a}{a}$ domino, (so
that their subsequences of non-$\domino a a$ dominoes are the same). Then the
sets 
\[
    S_1 \define \{ i \in [N] \mid \phi(x)_i =\overline{a} \text{ or } \phi(y)_i =\overline{a} \}\qquad \text{and} \qquad S_2 \define \{ i \in [N] \mid \phi(u)_i =\overline{a} \text{ or } \phi(v)_i =\overline{a}\}
\]
have the same size, thus also the same color $\col(S_1) = \col(S_2)$ because
$S_1, S_2 \subseteq T$, and $T$ was chosen so that all of its subsets of the
same size have the same color. From the definition, there is some index $j \in
[b]$ such that 
\[
\col(S_1)_j = Q'_i(\phi(x), \phi(y)) = Q_i(x,y) \qquad \text{and} \qquad \col(S_2)_j = Q'_i(\phi(u), \phi(v)) = Q_i(u,v),
\]
and thus $Q_i(x,y) = Q_i(u,v)$ as desired. Finally, suppose $(x,y)$ and $(u,v)$
do not have the same $D$-signature, though they must still have the same
$\overline{\Delta'}$-type, by assumption (meaning in particular $\Delta \neq
\emptyset$). Consider the pairs $(x', y')$  defined as
\begin{align*}
    \domino{x'}{y'} &= \domino{x_{i_1}}{y_{i_2}} \domino{x_{i_2}}{y_{i_2}} \dotsm \domino{x_{i_k}}{y_{i_k}}
        \; \domino{a}{a} \domino{a}{a} \dotsm \domino{a}{a} \\
    \domino{u'}{v'} &= \domino{u_{j_1}}{v_{j_2}} \domino{u_{j_2}}{u_{j_2}} \dotsm \domino{v_{j_k}}{v_{j_k}}
        \; \domino{a}{a} \domino{a}{a} \dotsm \domino{a}{a}
\end{align*}
where $i_1 < i_2 < \dotsm < i_k$ and $j_1 < j_2 < \dotsm < j_k$ are the indices of the non-$\domino{a}{a}$
dominoes of $(x,y)$ and $(u,v)$ respectively. Observe that $(x',y')$ has the same $D$-type and
$\overline{\Delta'}$-type as $(x,y)$, and $(u',v')$ has the same $D$-type and
$\overline{\Delta'}$-type as $(u,v)$.
This is because the order of non-$\domino{a}{a}$
dominoes is preserved, and the number of
non-$\domino{a}{a}$ dominoes in $(x,y)$ and $(u,v)$ is the same since they have
the same $\overline{\Delta'}$-type. By the argument above, we have $Q_i(x,y) =
Q_i(x',y')$ and $Q_i(u,v) = Q_i(u',v')$ for each query $Q_i$.
Finally, observe that $(x',y')$ and $(u',v')$ have the same
$\overline{\Delta}$-type, since by assumption they have the same
$(\overline{\Delta'}=\overline{\Delta \cup \domino{a}{a}})$-type (meaning the
order of non-$(\Delta \cup \domino{a}{a})$ dominoes is the same). Then $(x',y')$ is obtained from $(u',v')$
by swaps of consecutive dominoes involving dominoes in $\Delta$. Since each
query $Q_i$ is $\Delta$-shuffle invariant, we have $Q_i(x',y') = Q_i(u',v')$,
and therefore
\[
    Q_i(x,y) = Q_i(x',y') = Q_i(u',v') = Q_i(u,v) \,,
\]
as desired.
\end{proof}

Applying \cref{claim:main-step-1} twice, with $\Delta = \emptyset$ and $a=0$, and then with
$\Delta = \left\{\domino 0 0 \right\}$ and $a = 1$, we achieve
the $\left\{\domino 0 0, \domino 1 1\right\}$-shuffle property. We conclude with the following claim:

\begin{claim}
Suppose that the $\left\{\domino 0 0, \domino 1 1\right\}$-shuffle property holds. Then the $\left\{\domino 0 0, \domino 1 1, \domino 0 1, \domino 1 0\right\}$-shuffle property also holds.
\end{claim}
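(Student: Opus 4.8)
The plan is to first reduce the claim to the statement that the queries can additionally be made $\left\{\domino 0 1,\domino 1 0\right\}$-shuffle invariant, and then to obtain this by adapting the Ramsey scheme of \cref{claim:main-step-1}, with the padding step replaced by an appropriate use of coordinate flips. (Since $\left\{\domino 0 0,\domino 1 1,\domino 0 1,\domino 1 0\right\}=\zo^2$, the target property is exactly permutation-invariance.)

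\emph{Combination principle.} I would first observe that if a query $Q$ on $\zo^n$ is simultaneously $\left\{\domino 0 0,\domino 1 1\right\}$-shuffle invariant and $\left\{\domino 0 1,\domino 1 0\right\}$-shuffle invariant, then it is permutation-invariant. Given $(x,y)$ and $(u,v)$ with equal tallies, use $\left\{\domino 0 0,\domino 1 1\right\}$-shuffle invariance to push every $\domino 0 0$ and $\domino 1 1$ domino to the end (in the order $\domino 0 0$'s then $\domino 1 1$'s), and then use $\left\{\domino 0 1,\domino 1 0\right\}$-shuffle invariance to sort the remaining off-diagonal dominoes; since the tallies agree, both inputs are carried to the same canonical pair, so $Q(x,y)=Q(u,v)$. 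Thus it suffices to upgrade the $\left\{\domino 0 0,\domino 1 1\right\}$-shuffle-invariant queries supplied by the hypothesis to queries that are also $\left\{\domino 0 1,\domino 1 0\right\}$-shuffle invariant, while still computing $\EHD_k^n$.

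\emph{Upgrading.} For this I would mimic \cref{claim:main-step-1}: fix $n$, take $N=N(n,\dots)$ from \cref{cor:hypergraph-ramsey}, let $Q'_1,\dots,Q'_c$ be the $\left\{\domino 0 0,\domino 1 1\right\}$-shuffle-invariant queries on $\zo^N$, and color each small subset $S\subseteq[N]$ by the tuple of responses of $Q'_1,\dots,Q'_c$ over all placements of off-diagonal and $\domino 1 1$ dominoes on $S$ (with $\domino 0 0$ elsewhere). The difference from \cref{claim:main-step-1} is that one cannot pad with a $\domino 0 1$ or $\domino 1 0$ domino without changing the Hamming distance, so the embedding $\zo^n\hookrightarrow\zo^N$ must be the distance-preserving $\domino 0 0$-padding, and Ramsey over unordered subsets then only kills the dependence on the \emph{positions} of the off-diagonal dominoes, not on their relative \emph{order}. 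The extra tool available here, not used in \cref{claim:main-step-1}, is the \emph{flip} $(x,y)\mapsto(x\oplus b,y\oplus b)$ for a fixed $b\in\zo^N$: this is a simultaneous row-and-column permutation, hence legal within $\QS(\cQ)$, it preserves Hamming distance, and it exchanges $\domino 0 1\leftrightarrow\domino 1 0$ and $\domino 0 0\leftrightarrow\domino 1 1$ at exactly the coordinates $i$ with $b_i=1$. In particular the pairs $(x,y)$ with $x\wedge y=0$ (no $\domino 1 1$'s) and a fixed set of off-diagonal coordinates are all flips of one another, so every off-diagonal arrangement is realized; I would combine this with the Ramsey coloring so that on the homogeneous set $T$ the value of $Q'_i(\phi_T(x),\phi_T(y))$ depends only on the tally of $(x,y)$, which together with the inherited $\left\{\domino 0 0,\domino 1 1\right\}$-shuffle invariance yields permutation-invariance via the combination principle.

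\emph{Main obstacle.} The delicate point will be reconciling the flips with the split of coordinates into off-diagonal and diagonal: a flip by a fixed vector also turns some $\domino 0 0$ dominoes into $\domino 1 1$ dominoes, changing the $\domino 0 0$/$\domino 1 1$ counts that a $\left\{\domino 0 0,\domino 1 1\right\}$-shuffle-invariant query is sensitive to. One must therefore ensure that the flips used act only on the off-diagonal coordinates of the relevant inputs, even though the flip vector is fixed independently of $(x,y)$; arranging this simultaneously for all inputs, while keeping the number of Ramsey colors bounded, is where the work lies. As throughout this section, I expect the only property of $\cQ$ that enters to be stability --- here it should serve to rule out a query depending on the order of the off-diagonal dominoes in a \textsc{Greater-Than}-like fashion that would otherwise survive the Ramsey step.
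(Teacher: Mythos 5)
Your combination principle is fine (and implicit in the paper), and your closing intuition---that stability should rule out a query depending on the \emph{order} of off-diagonal dominoes in a \textsc{Greater-Than}-like way---is exactly the right idea. But the Ramsey-plus-flips scaffold you build around it does not close the argument, and the obstacle you flag at the end is a genuine one, not a technicality to be resolved later. The paper does not use Ramsey or flips for this claim at all; once you have $\left\{\domino 0 0,\domino 1 1\right\}$-shuffle invariance, Ramsey has nothing further to offer, precisely for the reason you noticed: Ramsey homogenizes over unordered subsets of coordinates and therefore cannot touch the relative order of the off-diagonal dominoes, which is the only remaining dependence. Your attempted fix via flips $(x,y)\mapsto(x\oplus b,y\oplus b)$ runs into the problem you yourself state: any fixed flip vector $b$ acts on diagonal coordinates as well, turning $\domino 0 0$ into $\domino 1 1$ and vice versa, so it does not preserve the promise used to define the Ramsey coloring; and since the target inputs $(x,y),(u,v)$ range over all of $\zo^n$, no single $b$ (or bounded family of $b$'s baked into the color) can align the off-diagonal arrangements of all pairs simultaneously. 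You leave this unresolved, so the upgrading step is missing its core.

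What the paper does instead is a short, direct construction. By $\left\{\domino 0 0,\domino 1 1\right\}$-shuffle invariance each $Q'_i$ depends only on the $\Delta$-type for $\Delta=\left\{\domino 0 1,\domino 1 0\right\}$, i.e.\ only on the tally together with the ordered sequence of off-diagonal dominoes. Padding with $\tau_{00}\geq N-n>2t$ many $\domino 0 0$'s and supposing some $Q'_i$ distinguishes two $\Delta$-signatures $\Gamma^A,\Gamma^B$ differing by one adjacent transposition of a $\domino 1 0$ and a $\domino 0 1$, one writes down explicit rows $h_{2i}$ and columns $h_{2j+1}$ (built from standard basis vectors $e_i\in\zo^{\tau_{00}+2}$ sandwiched between the fixed halves $C^\circ,C_\bullet,D^\circ,D_\bullet$ of the surrounding signature) so that $(h_{2i},h_{2j+1})$ has $\Delta$-type $A$ when $i\leq j$ and $B$ when $i>j$. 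This yields a $\GT_{\lfloor\tau_{00}/2\rfloor}$ (or its complement) inside $Q'_i$, contradicting stability. In short: your endgame is right, but you should drop the Ramsey/flip machinery and go straight for the $\GT$ submatrix construction, which is what the $\left\{\domino 0 0,\domino 1 1\right\}$-shuffle invariance already puts within reach.
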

\begin{proof}[Proof of claim]
Since $\cQ$ is stable, there exists a constant $t$ such that neither $\GT_t$ or its complement $\neg \GT_t$ belong to $\QS(\cQ)$.

Take $\Delta= \left\{\domino{0}{1},\domino{1}{0}\right\}$. Take $N \in \bN$ such that $N-n > 2t$, and embed $\zo^n$ into $\zo^N$ by the map $\phi : x \mapsto x00\cdots0$, that pads $N-n$ many $0$'s at the end of the input string.
Let $Q_1', \dotsc, Q_c' \in \cQ$ be
 $\overline{\Delta}$-shuffle invariant query matrices such that $\EHD_k^N(X,Y) = f(Q_1'(X,Y), \dotsc, Q_c'(X,Y)) $  for all $X,Y \in \zo^N$. Now for each $i \in [c]$, define $Q_i : \zo^n \times \zo^n \to \zo$ as
$Q_i(x,y) = Q_i'(\phi(x),\phi(y))$.
Note that $\Gamma_\Delta(x,y) = \Gamma_\Delta(\phi(x), \phi(y))$.

Since each $Q'_i$ is $\overline \Delta$-shuffle invariant, $Q'_i(X,Y)$ depends only on the $\Delta$-type $\chi_\Delta(X,Y)$, and $Q_i(x,y)$
depends only on the $\Delta$-type $\chi_\Delta(x,y)$.
Therefore, for any $\Delta$-type $A$, we write $Q'_i(A)$ for the value taken by $Q'_i$ on all $(X,Y)$ with $\chi_\Delta(X,Y) = A$.

Assume for the sake of contradiction that there exists $i \in [c]$ such that $Q_i$ is not permutation-invariant.
Then there exist $x,y,u,v \in \zo^n$ such that $Q'_i(\phi(x),\phi(y)) \neq Q'_i(\phi(u),\phi(v))$, and, for the two $\Delta$-types $A = (\Gamma^A, \tau^A) = \chi_\Delta(\phi(x), \phi(y))$ and $B = (\Gamma^B, \tau^B) = \chi_\Delta(\phi(x), \phi(y))$:
\begin{enumerate}
    \item $Q'_i(A) \neq Q'_i(B)$; and
    \item The $\Delta$-signatures $\Gamma^A, \Gamma^B$ (subsequences of $\domino{1}{0}$ and $\domino{0}{1}$ dominoes) are permutations of each other, the tallies $\tau^A = \tau^B =: \tau$ are the same, and $\tau_{00} \geq N-n$ due to the padding in $\phi$.
\end{enumerate}
It suffices to consider $\Delta$-types with tally $\tau$ and signatures $\Gamma^A$ and $\Gamma^B$, where $\Gamma^B$ is obtained by swapping a single consecutive 
pair of dominoes in $\Gamma^A$;
if it holds that $Q'_i(A) = Q'_i(B)$ for any two $\Delta$-types with tally $\tau$ and $\Delta$-signatures $\Gamma^A, \Gamma^B$ which differ only by swapping a 
consecutive pair of dominoes, then it holds that $Q'_i(A) = Q'_i(B)$ for all $\Delta$-types $A, B$ with tally $\tau$, since $A$ may be transformed into $B$ by a sequence 
of swaps of consecutive dominoes.

Thus, assume $\Gamma^A$ and $\Gamma^B$ differ by only one swap of consecutive dominoes. 
Then we may choose domino sequences $\Sigma^A, \Sigma^B$ with the $\Delta$-types $A$ and $B$ as follows.

For some domino subsequences $\domino{C^{\circ}}{C_{\bullet}} \in \Delta^{d_1}$ and $\domino{D^{\circ}}{D_{\bullet}} \in \Delta^{d_2}$, where $d_1, d_2 \in \bN$ satisfy $d_1 + 2 + d_2 + \tau_{00} + \tau_{11} = N$, the following $\Sigma^A, \Sigma^B \in (\zo^2)^N$ have $\Delta$-types $A$ and $B$ respectively:
\begin{align*}
\Sigma^A &= \left( \domino{C^{\circ}}{C_{\bullet}}
\left(\domino{1}{0} \domino{0}{1} \right)
\domino{D^{\circ}}{D_{\bullet}}
\left(\domino{0}{0}\right)^{\tau_{00}}
\left( \domino{1}{1} \right)^{\tau_{11}} \right), \\
\Sigma^B &= \left( \domino{C^{\circ}}{C_{\bullet}}
\left(\domino{0}{1} \domino{1}{0} \right)
\domino{D^{\circ}}{D_{\bullet}}
\left(\domino{0}{0}\right)^{\tau_{00}}
\left( \domino{1}{1} \right)^{\tau_{11}} \right).
\end{align*}
To achieve a contradiction, we construct an impossibly large $\GT$ submatrix within $Q'_i$ as follows.
Let $e_i \in \zo^{\tau_{00}+2}$ be the string that has $1$ in the $i$th coordinate and is $0$ everywhere else. For $i \in [\tau_{00} +2]$, define $h_i \in \zo^N$ as
\[
  h_{i} = \begin{cases}
    C^{\circ} \mid e_{i} \mid D^{\circ}\mid\underbrace{1 \ldots 1 }_{\tau_{11}} &\text{ if } i \text{ is even,} \\ 
    C_{\bullet} \mid e_{i} \mid D_{\bullet} \mid \underbrace{1 \ldots 1}_{\tau_{11}} &\text{ otherwise.}
  \end{cases}
\]
 Now consider the $\lfloor\nicefrac{\tau_{00}}{2} \rfloor \times \lfloor\nicefrac{\tau_{00}}{2} \rfloor$ submatrix $M$ of $Q_i'$ on rows $h_{2i}$ and on columns  $h_{2j+1}$ for $i,j \in \left[\lfloor\nicefrac{\tau_{00}}{2} \rfloor \right]$. Note that if $i \leq j$, the pair $(h_{2i}, h_{2j+1})$ has $\Delta$-signature $\Gamma^A$, and if $i > j$, the pair has the $\Delta$-signature $\Gamma^B$. Since $Q_i'$ has different values on the input pairs that have $\Delta$-signatures $\Gamma^A$ and $\Gamma^B$, respectively, the submatrix $M$ is exactly $\GT_{\lfloor\nicefrac{\tau_{00}}{2}\rfloor}$ or its complement. Since $\tau_{00} \geq N-n > 2t$, this contradicts the fact that $\QS(\cQ)$ does not have $\GT_{T}$ for any $T > t$.
\end{proof}

\section{Main Results: Separations within \texorpdfstring{$\CCC$}{BPP0}}
\label{section:applications}

We now apply \cref{lem:permutation_invariant} to prove our main results.
The essence of our technique is that it transforms the task of proving a lower bound for a (total)
communication problem $\cP$ against an arbitrary constant number of oracle queries, into a lower
bound for a certain type of \emph{partial} subproblem of $\cP$ against a \emph{single} oracle query.
This type of partial problem is defined below.

\subsection{Reduction to One Query}
\label{section:reduction-to-one-query}
Recall that the \emph{tally} $\tau(x,y)$ of two $x,y \in \zo^n$ counts the
number of times each domino $\domino a b$ appears in the dominoes of $(x,y)$.

\begin{definition}[Two-Tally Matrix]
We say a partial matrix $M \in \{0,1,*\}^{t \times t}$ is a \emph{two-tally matrix} of
$\EHD_k$ if there exist $n \in \bN$ and $x^{(1)}, \dotsc, x^{(t)}, y^{(1)}, \dotsc, y^{(t)} \in
\zo^n$ which satisfy the following:
\begin{enumerate}
\item The submatrix of $\EHD_k^n$ on rows $x^{(i)}$ and columns $y^{(j)}$ is a completion of $M$;
\item If $M(x^{(i)}, y^{(j)}) = M(x^{(i')}, y^{(j')}) \neq *$ have the same
non-$*$ value in $M$, then $(x^{(i)}, y^{(j)})$ and $(x^{(i')}, y^{(j')})$
have the same tally $\tau(x^{(i)}, y^{(j)}) = \tau(x^{(i')}, y^{(j')})$, meaning
that $(x^{(i)},y^{(j)})$ and $(x^{(i')}, y^{(j')})$ are permutations of each
other.
\end{enumerate}
\end{definition}

Our main results will follow by the application of the next lemma.
\begin{lemma}
\label{lemma:two-tally}
Let $\cQ$ be any stable set of matrices, let $k$ be any constant, and let $M$ be a two-tally
matrix of $\EHD_k$. If $\mathsf{D}^\cQ(\EHD_k) = O(1)$ then there is $L \in \QS(\cQ)$
that is a completion either of $M$, or its Boolean negation $\neg M$.
\end{lemma}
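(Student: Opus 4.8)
The plan is to combine \cref{lem:permutation_invariant} with a direct argument that extracts a single query distinguishing the relevant rows and columns. First I would apply \cref{lem:permutation_invariant}: since $\mathsf{D}^\cQ(\EHD_k) = O(1)$ and $\cQ$ is stable, there is a constant $c$, a function $f : \zo^c \to \zo$, and, for every $n$, permutation-invariant queries $Q_1, \dotsc, Q_c \in \QS(\cQ)$ with $\EHD_k^n(x,y) = f(Q_1(x,y), \dotsc, Q_c(x,y))$ for all $x,y \in \zo^n$. Now fix $n$ and the rows $x^{(1)}, \dotsc, x^{(t)}$, columns $y^{(1)}, \dotsc, y^{(t)}$ witnessing that $M$ is a two-tally matrix of $\EHD_k$, so that the submatrix of $\EHD_k^n$ on these rows and columns completes $M$, and whenever two entries of $M$ have the same non-$*$ value, the corresponding input pairs are permutations of each other.

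The key observation is that because each $Q_\ell$ is permutation-invariant, $Q_\ell(x^{(i)}, y^{(j)})$ depends only on the tally $\tau(x^{(i)},y^{(j)})$. Therefore, if $M(x^{(i)},y^{(j)}) = M(x^{(i')},y^{(j')}) \neq *$, then $(x^{(i)},y^{(j)})$ and $(x^{(i')},y^{(j')})$ have the same tally, and so $Q_\ell(x^{(i)},y^{(j)}) = Q_\ell(x^{(i')},y^{(j')})$ for every $\ell \in [c]$. In other words, the vector of query answers $\bigl(Q_1(x^{(i)},y^{(j)}), \dotsc, Q_c(x^{(i)},y^{(j)})\bigr) \in \zo^c$ is constant across all entries of $M$ sharing a given non-$*$ value. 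Call these two vectors $w^{(0)}, w^{(1)} \in \zo^c$ for the $0$-entries and $1$-entries of $M$ respectively (if one of the values does not occur in $M$ the argument only gets easier, so assume both occur); since $f(w^{(0)}) = 0 \neq 1 = f(w^{(1)})$, we have $w^{(0)} \neq w^{(1)}$, so they differ in some coordinate $\ell^\star \in [c]$. I would then set $L \in \QS(\cQ)$ to be the submatrix of $Q_{\ell^\star}$ on rows $x^{(1)}, \dotsc, x^{(t)}$ and columns $y^{(1)}, \dotsc, y^{(t)}$ (this is in $\QS(\cQ)$ since it is closed under submatrices). On every non-$*$ entry $(i,j)$ of $M$, $L(i,j) = Q_{\ell^\star}(x^{(i)},y^{(j)}) = w^{(M(i,j))}_{\ell^\star}$, which equals $M(i,j)$ or $\neg M(i,j)$ uniformly depending on whether $w^{(1)}_{\ell^\star} = 1$ or $w^{(1)}_{\ell^\star} = 0$. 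Hence $L$ is a completion of $M$ or of $\neg M$, as claimed.

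The only real subtlety, and the step I'd write most carefully, is verifying that ``permutation-invariant $\Rightarrow$ query answer is a function of the tally'': this is precisely the statement that $Q(x,y)$ is unchanged under relabeling coordinates, and two pairs with the same tally $\tau$ are related by a coordinate permutation acting simultaneously on $x$ and $y$ (reorder the dominoes), so $Q(x,y) = Q(x_\pi,y_\pi)$ gives equality. Everything else is bookkeeping: matching up the case where only one Boolean value appears among the non-$*$ entries of $M$, and noting that the choice between $M$ and $\neg M$ is forced globally by the single bit $w^{(1)}_{\ell^\star}$, so there is no inconsistency across entries. I do not expect a genuine obstacle here — \cref{lem:permutation_invariant} does all the heavy lifting, and this lemma is the clean ``one query suffices'' corollary advertised in the section preamble.
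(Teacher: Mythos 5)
Your proposal is correct and matches the paper's proof essentially exactly: apply \cref{lem:permutation_invariant} to get permutation-invariant queries, observe that permutation-invariance forces each $Q_\ell$ to depend only on the tally (so it is constant on all $0$-entries and constant on all $1$-entries of the two-tally matrix $M$), find the coordinate $\ell^\star$ where the two constant answer-vectors differ, and take $L$ to be the submatrix of $Q_{\ell^\star}$ on the witnessing rows and columns. The paper phrases this via one representative $0$-entry and one representative $1$-entry rather than the vectors $w^{(0)},w^{(1)}$, but the argument and the use of \cref{lem:permutation_invariant} are identical.
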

\begin{proof}
Suppose $\mathsf{D}^\cQ(\EHD_k)=O(1)$. By \cref{lem:permutation_invariant}, there is a constant $c$ and a function $f$ such that for every $n$, there exist permutation-invariant matrices $Q_1,\ldots, Q_c\in \mathcal{Q}$ such that 
\begin{equation}\label{eq:2sig}
\forall x,y \in \zo^n :\qquad \EHD_k^n(x,y) = f(Q_1(x,y), \ldots, Q_c(x,y)).
\end{equation}
Consider now the two-tally matrix $M$ of $\EHD_k$ defined by the vectors $x^{(1)}, \dotsc, x^{(t)}, y^{(1)}, \dotsc, y^{(t)} \in \zo^n$, and let $i_0,j_0$, $i_1,j_1$ be such that $M(x^{(i_b)}, y^{(j_b)})=b$ for $b \in \zo$ (we may assume such pairs exist as otherwise the claim is trivial). By \eqref{eq:2sig}, there must exist $\ell$ such that $Q_\ell( x^{(i_0)}, y^{(j_0)}) \neq Q_\ell( x^{(i_1)}, y^{(j_1)})$, which by the permutation invariance of $Q_\ell$ implies that $Q_\ell$ distinguishes between $0$s and $1$s of $M$. Then the submatrix $L$ of $Q_\ell$, on rows $\{x^{(1)}, \dotsc, x^{(t)}\}$ and columns $\{y^{(1)}, \dotsc, y^{(t)}\}$, is a completion of either $M$ or $\neg M$.
\end{proof}

\subsection{No Complete Problem for \texorpdfstring{$\CCC$}{BPP0}}

We now prove a lower bound for $\EHD_k$ against queries of a general form, which will imply our
main \cref{thm:intro-no-complete-problem}. For convenience, we will
state the general result in terms of the VC dimension.

\begin{definition}[VC Dimension]
\label{def:vc-dimension}
    The VC dimension of a matrix $M \in \zo^{N \times N}$ is the largest number $d$ such that there
    are $d$ columns $y^{(1)}, \dotsc, y^{(d)}$ that are \emph{shattered}, meaning that for every $S \subseteq [d]$ there is a row $x^S$ such that    $M(x^S, y^{(i)}) = 1$ if and only if $i \in S$.
\end{definition}

\begin{remark}
For every problem $\cP \in \CCC$ there is a constant $d$ such
that the VC dimension of any $P \in \cP$ is at most $d$. 
If $\cM$ is a set of matrices with unbounded VC dimension (for example, if $\cM$ is the \textsc{Set-Disjointness} communication problem),
then $\QS(\cM)$ is the set of all matrices, meaning in particular
that $\mathsf{D}^\cM(\cP) = 1$ for all communication problems 
$\cP$.
\end{remark}

A simple argument shows that \emph{any} fixed total matrix $M$ appears as a two-tally matrix of
$\EHD_k$, for sufficiently large constant $k$. (This is not the
same thing as saying $\EHD_k$ has unbounded VC dimension --
for each constant $k$, the VC dimension of $\EHD_k$ remains bounded.)

\begin{proposition}
\label{prop:two-sig-vc}
For every constant $k$, there is a $2^k \times k$ matrix $M$ of VC dimension $k$ that is a two-tally
submatrix of $\EHD_{k-1}$.
\end{proposition}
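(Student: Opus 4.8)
We want to exhibit, for each constant $k$, a $2^k \times k$ matrix $M$ of VC dimension $k$ that appears as a two-tally submatrix of $\EHD_{k-1}$. The natural candidate for $M$ is the "shattering matrix" whose columns are indexed by $[k]$, whose rows are indexed by subsets $S \subseteq [k]$, and with $M(S,i) = 1$ iff $i \in S$. This matrix has VC dimension exactly $k$ by construction. The task is then to realize it as a submatrix of $\EHD_{k-1}^n$ for some $n$, via row strings $x^{(S)}$ and column strings $y^{(i)}$, such that all the pairs mapping to the same value $b \in \zo$ are permutations of one another (the two-tally condition). Let me think about what constraints that imposes.

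**Setup of the construction.** Set $n = k$. Let the column strings be the standard basis vectors: $y^{(i)} = e_i \in \zo^k$, the string with a $1$ in coordinate $i$ and $0$ elsewhere. For a subset $S \subseteq [k]$, let the row string $x^{(S)} = \mathds{1}_S$ be the indicator vector of $S$. Then $\dist(x^{(S)}, y^{(i)}) = |S \triangle \{i\}|$, which equals $|S| - 1$ if $i \in S$ and $|S| + 1$ if $i \notin S$. So if I restrict to subsets $S$ of a fixed size, this won't immediately give a $0/1$ pattern matching $M$. Instead, I should pad: take $n = k$ still but allow the row strings to differ in size of support. Actually the clean choice is: for $S \subseteq [k]$, set $x^{(S)} = \mathds{1}_{S}$ if... hmm, the distances are $|S|\pm 1$ depending on membership, so to make $\EHD_{k-1}$ fire exactly when $i \in S$ I would need $|S| - 1 = k-1$ (i.e. $|S| = k$) for the $1$-entries and $|S|+1 \neq k-1$ for the $0$-entries. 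That only handles $S = [k]$. The fix is to let $x^{(S)}$ have support that is $S$ together with enough extra coordinates (in a padded block of new coordinates) to bring the distance into the right range; and symmetrically pad $y^{(i)}$. Concretely, take $n = k + m$ for suitable $m$, write $x^{(S)} = \mathds{1}_S \cdot \mathds{1}^{p(S)} 0^{\cdots}$ and $y^{(i)} = e_i \cdot 0^m$, choosing the number $p(S)$ of $1$'s in the padding block so that $\dist(x^{(S)}, y^{(i)}) = k-1$ when $i \in S$ and $\neq k-1$ when $i \notin S$; since membership shifts the distance by exactly $2$, picking $p(S) = k - 1 - |S| + 1 = k - |S|$ padding ones (when this is $\ge 0$; enlarge $m$ as needed) makes $\dist = k-1$ exactly when $i\in S$. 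One checks $i \notin S$ gives $\dist = k+1 \neq k-1$, so the pattern is exactly $M$.

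**The two-tally condition.** Now I must verify condition (2): any two row/column pairs with the same non-$*$ value are permutations of each other. A pair $(x^{(S)}, y^{(i)})$ has tally determined by the multiset of dominoes $\domino{x^{(S)}_\ell}{y^{(i)}_\ell}$. Every $1$-entry pair has $\dist = k-1$ by construction, and in fact I should check that any two $1$-entries have the same full tally $\tau$, not just the same distance. The count $\tau_{10}$ (positions where $x$ is $1$, $y$ is $0$) is $|S| + p(S) - [i \in S] = |S| + (k - |S|) - 1 = k-1$; $\tau_{01} = [i \in S] = 1$ on a $1$-entry; $\tau_{11} = 0$; $\tau_{00} = n - k$. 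So all $1$-entries share the same tally — good. For the $0$-entries ($i \notin S$): $\tau_{01} = 0$, $\tau_{10} = |S| + p(S) = k$, $\tau_{11} = 0$, $\tau_{00} = n - k - 1$... wait, $\tau_{10} = |S| + p(S) - [i\in S] = k - 0 = k$, and these sum correctly, so all $0$-entries also share a common tally, distinct from the $1$-entry tally but that's fine. Thus condition (2) holds, and condition (1) holds by the distance computation, so $M$ is a two-tally submatrix of $\EHD_{k-1}$.

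**Main obstacle.** The construction itself is routine once one fixes the right padding; the only thing requiring care is making sure the padding count $p(S) = k - |S|$ is nonnegative for all $S \subseteq [k]$ (it is, since $|S| \le k$) and choosing $m$ large enough to accommodate the largest padding block, $m = k$, i.e. $n = 2k$ suffices. A second small point: one must double-check that distinct rows $x^{(S)}$ are genuinely distinct strings and likewise for columns, which is immediate since $S$ is recoverable from the first $k$ coordinates. I do not anticipate a real obstacle here; the content of the proposition is entirely in the bookkeeping that all same-valued pairs have a common tally, which the uniform choice $p(S) = k - |S|$ arranges automatically because membership of $i$ in $S$ changes the distance by a fixed amount independent of everything else.
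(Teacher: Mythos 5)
Your construction is the same as the paper's: columns are the first $k$ standard basis vectors, rows are indicators of $S\subseteq[k]$ augmented with $k-|S|$ extra ones in a padding block, giving $\dist=k-1$ iff $i\in S$; the paper uses $n>2k$ and places the padding at the end of the string, but this is cosmetic. One small bookkeeping slip: you have $\tau_{01}$ and $\tau_{11}$ interchanged in both cases (when $i\in S$ the domino at position $i$ is $\domino{1}{1}$, giving $\tau_{11}=1$, $\tau_{01}=0$, and when $i\notin S$ it is $\domino{0}{1}$, giving $\tau_{01}=1$; as written your $0$-entry tallies sum to $n-1$), but this does not affect the conclusion that all same-valued entries share a common tally.
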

\begin{proof}
    Let $n > 2k$, and
    let $y^{(1)}, \dotsc, y^{(k)} \in \zo^n$ be the first $k$ standard basis vectors. Now for every set $S \subseteq [k]$ let $x^S \in \zo^n$ be the string where the last $k-|S|$ bits are set to 1, and the bits $i \in S$ are set to 1, and the remaining bits are 0. Fix $i$ and $S$ and consider two cases:
    \begin{itemize}
        \item  Suppose $i \in S$ and consider the domino sequence
    of $x^S, y^{(i)}$; we see that it contains 1 of $\domino{1}{1}$, $k-1$ of $\domino{1}{0}$, and $n-k$ of $\domino{0}{0}$, and $\dist(x^S,y^{(i)}) = k-1$.

    \item Now suppose $i \notin S$ and consider the domino sequence of $x^S, y^{(i)}$; we see that it contains
    $1$ of $\domino{0}{1}$, $k$ of $\domino{1}{0}$, and $n-k-1$ of $\domino{0}{0}$, and $\dist(x^S,y^{(i)}) = k+1$.
    \end{itemize}
We see that for every set $S \subseteq [k]$, $\EHD_{k-1}^n(x^S,y^{(i)}) = 1$ iff $i \in S$ and therefore this
$k \times 2^k$ submatrix has VC dimension $k$. From the above observations, this submatrix satisfies the conditions to
be a two-tally submatrix of $\EHD_{k-1}$.
\end{proof}

As a result, we get a general separation of $\EHD_k$ against oracle queries belonging to any stable set of
matrices. Note that any stable set of matrices has constant VC dimension, because a forbidden $\GT_t$
subproblem implies a bound of $t$ on the VC dimension.

\begin{theorem}
\label{thm:vc}
Let $\cQ$ be any stable set of matrices with VC dimension $d$. Then for any $k \geq d$, $\mathsf{D}^\cQ(\EHD_k) = \omega(1)$.
\end{theorem}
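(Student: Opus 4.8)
The plan is to assume $\mathsf{D}^\cQ(\EHD_k) = O(1)$ and derive a contradiction by exhibiting a matrix in $\QS(\cQ)$ of VC dimension strictly larger than $d$. The one preliminary fact I would record is that none of the three operations defining $\QS$ can increase VC dimension: row and column permutations obviously preserve it; if columns $y^{(1)},\dots,y^{(p)}$ are shattered in a submatrix, the corresponding columns are shattered in the ambient matrix (the witnessing rows are among the ambient rows); and a shattered set of columns can never contain two identical columns, so duplicating columns — and, trivially, duplicating rows — cannot help. Hence every $L \in \QS(\cQ)$ has VC dimension at most $d$.

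Next I would invoke \cref{prop:two-sig-vc} with parameter $k+1$ (legitimate, since $k$ and hence $k+1$ is a constant) to obtain a total $2^{k+1}\times(k+1)$ matrix $M$ of VC dimension $k+1$ that is a two-tally submatrix of $\EHD_k$; equivalently, $M$ viewed as having no $*$-entries is a two-tally matrix of $\EHD_k$ in the sense of the definition preceding \cref{lemma:two-tally}. Since $\cQ$ is stable and we assumed $\mathsf{D}^\cQ(\EHD_k) = O(1)$, \cref{lemma:two-tally} then produces an $L \in \QS(\cQ)$ that is a completion of $M$ or of $\neg M$; as $M$ is total, $L$ is exactly $M$ or $\neg M$.

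The final point is that Boolean negation preserves VC dimension: if columns $y^{(1)},\dots,y^{(k+1)}$ are shattered in $M$ by rows $(x^S)_{S\subseteq[k+1]}$, then in $\neg M$ the row $x^S$ realizes the pattern $[k+1]\setminus S$, and $S\mapsto[k+1]\setminus S$ is a bijection of the power set, so the same columns are shattered in $\neg M$. Therefore $L$ has VC dimension $k+1 \ge d+1 > d$, contradicting the first paragraph, and we conclude $\mathsf{D}^\cQ(\EHD_k) = \omega(1)$.

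I do not expect a genuine obstacle in this theorem: all the weight of the argument sits in the already-established \cref{lem:permutation_invariant} (the Ramsey-theoretic structural lemma) and its consequence \cref{lemma:two-tally}, together with the combinatorial construction of \cref{prop:two-sig-vc} showing that $\EHD_k$ contains two-tally submatrices of VC dimension $k+1$ even though $\EHD_k$ itself has bounded VC dimension. Given those tools, the theorem is bookkeeping; the only things needing a moment's care are the two monotonicity facts above — that $\QS$ does not raise VC dimension and that negation leaves it unchanged — the latter being exactly what makes the ``$\neg M$'' branch of \cref{lemma:two-tally} harmless.
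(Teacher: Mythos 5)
Your proposal is correct and follows essentially the same route as the paper: invoke \cref{prop:two-sig-vc} to get a two-tally matrix of VC dimension $k+1$, feed it to \cref{lemma:two-tally}, and observe the resulting query matrix would have to have VC dimension exceeding $d$. The only cosmetic difference is in dispatching the $\neg M$ branch — you argue that Boolean negation preserves VC dimension in general, while the paper notes that this particular $M$ is a row-permutation of $\neg M$; both are valid one-liners.
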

\begin{proof} Assume  $\mathsf{D}^\cQ(\EHD_k) = O(1)$.
    By \cref{prop:two-sig-vc} there is a $2^{k+1} \times (k+1)$ two-tally matrix $M$ of $\EHD_k$, with VC
    dimension $k+1 > d$. Note that $M$ and $\neg M$ are permutations of each other (meaning $\neg M$ is obtained from $M$ by permuting its rows and columns), so $M, \neg M \notin \QS(\cQ)$. This contradicts \cref{lemma:two-tally}. 
\end{proof}

Our main \cref{thm:intro-no-complete-problem} now follows as a corollary, since any problem
$\cQ \in \CCC$ must be stable (\cref{prop:ccc-stable}).

\begin{corollary}[\cref{thm:intro-no-complete-problem}]
For every problem $\cQ \in \CCC$, there exists a constant $k$ such that $\mathsf{D}^\cQ(\EHD_k) =
\omega(1)$.
\end{corollary}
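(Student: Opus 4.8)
The plan is to deduce the corollary immediately from \cref{thm:vc} together with the fact that every problem in $\CCC$ is stable. So fix an arbitrary $\cQ \in \CCC$ and proceed in three short steps.

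First I would invoke \cref{prop:ccc-stable}: since $\cQ \in \CCC$, it is stable, so there is a constant $t$ (depending only on $\cQ$) with $\GT_t \notin \QS(\cQ)$. Second, I would record that stability bounds the VC dimension: if columns $y^{(1)}, \dotsc, y^{(m)}$ of some matrix in $\cQ$ are shattered by rows $x^{S}$, $S \subseteq [m]$, then the rows $x^{\{i,i+1,\dotsc,m\}}$ for $i \in [m]$, together with those columns, realize $\GT_m$ as a submatrix of a matrix in $\cQ$, hence $\GT_m \in \QS(\cQ)$, forcing $m < t$. Thus $\cQ$ has VC dimension at most some constant $d \le t$. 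Third, I would apply \cref{thm:vc} to $\cQ$ with the choice $k = d$: since $\cQ$ is a stable set of matrices of VC dimension $d$, we obtain $\mathsf{D}^\cQ(\EHD_d) = \omega(1)$, which is exactly the claimed statement with $k = d$.

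There is no genuine obstacle remaining at this level: all the difficulty has been pushed into \cref{thm:vc}, and behind it into \cref{lemma:two-tally}, \cref{prop:two-sig-vc}, and ultimately the Ramsey-theoretic permutation-invariance \cref{lem:permutation_invariant}. The only point worth double-checking is that the constant $k$ one ends up with depends only on $\cQ$ and not on the input size $N$ (equivalently $n$): this is automatic, since membership in $\CCC$ supplies a single constant bounding $\mathsf{R}(P_N)$ over all $N$, hence a single stability constant $t$ and a single VC-dimension bound $d$, and the choice $k=d$ then works uniformly over all problem sizes through \cref{prop:two-sig-vc} and \cref{lemma:two-tally}.
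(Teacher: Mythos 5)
Your proposal is correct and follows essentially the same route as the paper, which derives the corollary directly from \cref{thm:vc} plus \cref{prop:ccc-stable}, with the (brief) remark that stability bounds VC dimension since shattered columns yield a \textsc{Greater-Than} submatrix. Your explicit construction of $\GT_m$ from the rows $x^{\{i,\dotsc,m\}}$ is exactly the argument the paper leaves implicit, and your uniformity check at the end is a sensible (if unstated in the paper) sanity check.
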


This statement also holds for $\THD_k$ in place of $\EHD_k$, since
they are equivalent under constant-cost reductions. $\THD_k$ has
a one-sided error protocol (unlike $\EHD_k$) which also means there is no complete problem for the class of constant-cost problems with one-sided error.

\subsection{The \texorpdfstring{$k$}{k}-Hamming Distance Hierarchy}

The VC dimension of \textsc{Equality} is 1, since $\left[\begin{smallmatrix}
    1 & 1 \\ 0 & 1 \end{smallmatrix}\right]$ cannot occur as a submatrix. Therefore
    \[
        \mathsf{D}^{\EQ}(\EHD_1) = \omega(1) \,,
    \]
by \cref{thm:vc}, recovering (qualitatively) the results of \cite{HWZ22,HHH23eccc}.
An immediate consequence of \cref{thm:vc} is an infinite number of such separations, forming an
infinite hierarchy within $\CCC$.

\begin{corollary}[\cref{thm:intro-hierarchy}]
\label{thm:k-hd-hierarchy}
There are infinitely many $k \in \bN$ such that $\sD^{\EHD_k}(\EHD^n_{k+1})=\omega(1)$.
\end{corollary}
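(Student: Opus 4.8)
The plan is to derive \cref{thm:k-hd-hierarchy} from \cref{thm:vc} by a compactness-style argument, using that constant-cost reductions compose. Suppose for contradiction that there are only finitely many $k \in \bN$ with $\sD^{\EHD_k}(\EHD_{k+1}) = \omega(1)$. Then there is a constant $k_0$ such that for every $k \geq k_0$ we have $\sD^{\EHD_k}(\EHD_{k+1}) = O(1)$, i.e.\ $\EHD_{k+1}$ reduces to $\EHD_k$.

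Next I would invoke transitivity of constant-cost reductions: if $\cP$ reduces to $\cQ$ and $\cQ$ reduces to $\cR$, then $\cP$ reduces to $\cR$. This follows from \cref{prop:functional-rank} together with the observation that $\QS(\cR)$ is closed under row/column permutations, taking submatrices, and duplicating rows and columns, so that $\QS(\QS(\cR)) = \QS(\cR)$: writing each query to $\cQ$ as a fixed Boolean combination of matrices in $\QS(\cR)$ and substituting into the reduction of $\cP$ to $\cQ$ expresses every entry of each $P_N$ as a fixed Boolean combination of entries of matrices in $\QS(\cR)$. Applying this along the chain $\EHD_{k_0} \leftarrow \EHD_{k_0+1} \leftarrow \cdots \leftarrow \EHD_{k'}$ shows that $\sD^{\EHD_{k_0}}(\EHD_{k'}) = O(1)$ for every $k' \geq k_0$.

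Finally I would reach a contradiction via \cref{thm:vc}. Since $\EHD_{k_0} \in \CCC$, it is stable by \cref{prop:ccc-stable}, hence (as noted before \cref{thm:vc}) has some finite VC dimension $d$. Taking any $k' \geq \max\{k_0, d\}$, \cref{thm:vc} applied with $\cQ = \EHD_{k_0}$ gives $\sD^{\EHD_{k_0}}(\EHD_{k'}) = \omega(1)$, contradicting the conclusion of the previous paragraph. Therefore there must be infinitely many $k$ with $\sD^{\EHD_k}(\EHD_{k+1}) = \omega(1)$. The only step that takes any care is the transitivity of constant-cost reductions, i.e.\ the closure properties of $\QS$; once that is in hand the corollary is just a repackaging of \cref{thm:vc} and \cref{prop:ccc-stable}, so I do not anticipate a genuine obstacle.
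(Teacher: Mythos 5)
Your proposal is correct and takes essentially the same approach as the paper: both rest on \cref{thm:vc} applied to the stable set $\EHD_t$ (equivalently $\EHD_{k_0}$) together with transitivity of constant-cost reductions, with the paper phrasing it directly (fix $t$, find $t'>t$ with $\sD^{\EHD_t}(\EHD_{t'})=\omega(1)$, then locate a bad $k$ in $[t,t')$) while you phrase it by contradiction. Your more explicit justification of transitivity via $\QS(\QS(\cR))=\QS(\cR)$ and composition of the Boolean functions is a correct elaboration of a step the paper leaves implicit.
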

\begin{proof}
Fix any $t$. Then $\EHD_t$ is stable, so by \cref{thm:vc},
there is some constant $t' > t$ such that $\mathsf{D}^{\EHD_t}(\EHD_{t'}) = \omega(1)$. Then there must
exist $t \leq k < t'$ such that $\mathsf{D}^{\EHD_k}(\EHD_{k+1}) = \omega(1)$, because, if
$\mathsf{D}^{\EHD_k}(\EHD_{k+1}) = O(1)$ for every $t \leq k < t'$ then we would have
$\mathsf{D}^{\EHD_t}(\EHD_{t'}) = O(1)$.
\end{proof}

In the above proof, it suffices to take $t' = 2^{\Theta(t \log t)}$. First observe that any $M \in \QS(\EHD_t)$ has 
$\mathsf{R}(M) \leq  C \cdot t \log t$ for some constant $C$. On the other hand, we may choose
a $t' \times t'$ two-tally matrix $M \in \QS(\EHD_{t'})$ with maximum randomized communication cost
$\mathsf{R}(M) = \Theta(\log t')$, since \cref{prop:two-sig-vc} guarantees
that every $t' \times t'$ matrix with unique columns exists as a two-tally submatrix of $\EHD_{t'}$. Therefore, if we choose $t' = 2^{\Theta(t \log t)}$ with a sufficiently large constant in the exponent, then we have a two-tally submatrix $M$ of $\EHD_{t'}$ with $\mathsf{R}(M) = \Theta(\log t') > C \cdot t \log t$, so $M \notin \QS(\EHD_t)$ (and the same holds for $\neg M$).

However, the VC dimension argument does not suffice to separate $\EHD_1$ from $\EHD_2$, because the
VC dimension of $\EHD_1$ is 3 (the first 3 standard basis vectors are shattered), which only proves
$\mathsf{D}^{\EHD_1}(\EHD_3) = \omega(1)$. We tighten this separation by choosing a different
two-tally matrix $M$.

\begin{theorem}[Restatement of \cref{thm:intro-2-vs-1}]
$\mathsf{D}^{\EHD_1}(\EHD_2) = \omega(1)$.
\end{theorem}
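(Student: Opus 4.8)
The plan is to apply \cref{lemma:two-tally} with $\cQ = \EHD_1$, which is stable since $\EHD_1 \in \CCC$ and every problem in $\CCC$ is stable (\cref{prop:ccc-stable}). It then suffices to exhibit a \emph{single} two-tally matrix $M$ of $\EHD_2$ such that neither $M$ nor $\neg M$ lies in $\QS(\EHD_1)$: by \cref{lemma:two-tally} this contradicts $\mathsf{D}^{\EHD_1}(\EHD_2) = O(1)$, and since $\EHD_2^n$ is a submatrix of $\EHD_2^{n+1}$ the function $n \mapsto \mathsf{D}^{\EHD_1}(\EHD_2^n)$ is nondecreasing, so ``not $O(1)$'' upgrades to $\omega(1)$. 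As noted before the statement, the VC-dimension bound of \cref{thm:vc} is too weak here, since two-tally matrices of $\EHD_2$ and matrices of $\QS(\EHD_1)$ all have bounded VC dimension; I would therefore use a genuinely combinatorial forbidden pattern.

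The pattern is the $m \times m$ matrix $\mathbf{J}_m - \mathbf{I}_m$ (ones off the diagonal, zeros on it), and the first step is to prove $\mathbf{J}_m - \mathbf{I}_m \notin \QS(\EHD_1)$ for every $m \geq 5$. Since this matrix has pairwise distinct rows and pairwise distinct columns, row/column duplications are useless, so membership in $\QS(\EHD_1)$ would give distinct $x_1, \dots, x_m$ and distinct $y_1, \dots, y_m$ in $\zo^n$ with $\dist(x_i, y_j) = 1 \iff i \neq j$. Fixing $x_1$, all of $y_2, \dots, y_m$ lie at distance $1$ from $x_1$, so $y_j = x_1 \oplus e_{c_j}$ with $c_2, \dots, c_m$ pairwise distinct coordinates. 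Since also $\dist(x_2, y_j) = 1$ for every $j \geq 3$, we have $x_1 \oplus x_2 = e_{c_j} \oplus e_{c_j'}$ for $j = 3, \dots, m$ and suitable coordinates $c_j'$; the left-hand side is a fixed nonzero vector (recall $x_1 \neq x_2$), so it has weight exactly $2$, say $\supp(x_1 \oplus x_2) = \{p, q\}$, forcing $c_j \in \{p, q\}$ for all $j \geq 3$ — impossible once $m \geq 5$ produces three distinct such $c_j$.

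The complement of $\mathbf{J}_m - \mathbf{I}_m$ is $\mathbf{I}_m$, which \emph{is} in $\QS(\EHD_1)$, so this pattern alone does not control $\neg M$; I would instead take $M$ to be the block-diagonal matrix $\bigl(\begin{smallmatrix} \mathbf{J}_5 - \mathbf{I}_5 & 0 \\ 0 & \mathbf{I}_5 \end{smallmatrix}\bigr)$ and realize it as a two-tally submatrix of $\EHD_2^n$. The key point is that if all rows and columns are supported on exactly $s$ coordinates, then the tally of a pair depends only on the size $r$ of the intersection of the two supports (it equals $(n - 2s + r,\, s-r,\, s-r,\, r)$), so one only needs the intersection sizes to take two values. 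With $s = 2$ I would build the first block from five $2$-subsets of $\{1,\dots,5\}$ for the rows and five for the columns, arranged along a $5$-cycle so that corresponding pairs are disjoint while all other pairs meet in one point (giving exactly $\mathbf{J}_5 - \mathbf{I}_5$), and the second block from five pairwise-disjoint coordinate triples $\{u_i, v_i, w_i\}$ of fresh coordinates, with rows $\{u_i, v_i\}$ and columns $\{u_i, w_i\}$ (giving exactly $\mathbf{I}_5$). Disjoint $2$-subsets meet in $0$ points, so the two off-diagonal blocks are all-zero; a short check shows every $1$-entry of the resulting $10 \times 10$ matrix has the same tally and every $0$-entry has the same tally, so $M$ is a two-tally matrix of $\EHD_2$. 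Finally $M$ contains $\mathbf{J}_5 - \mathbf{I}_5$ as a submatrix and $\neg M$ contains $\neg \mathbf{I}_5 = \mathbf{J}_5 - \mathbf{I}_5$ as a submatrix, so by closure of $\QS(\EHD_1)$ under submatrices neither lies in $\QS(\EHD_1)$, and \cref{lemma:two-tally} yields the contradiction.

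The main obstacle is the conceptual one: $\EHD_1$ and $\EHD_2$ agree on every ``global'' parameter in sight ($\mathsf{R}$, $\gamma_2$, and VC dimension are all bounded for both), so the separation must rest on a small explicit forbidden submatrix of $\QS(\EHD_1)$ — identifying $\mathbf{J}_5 - \mathbf{I}_5$ and proving its nonmembership is the crux. The secondary difficulty is that \cref{lemma:two-tally} forces us to control $M$ and $\neg M$ at the same time, so we need one two-tally matrix of $\EHD_2$ that carries a bad pattern in both itself and its complement, and we need the tallies of the two gadgets to agree; restricting all supports to size $2$ is precisely what makes the latter bookkeeping trivial.
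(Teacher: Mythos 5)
Your proof is correct and takes a genuinely different route from the paper's. Both start from \cref{lemma:two-tally}, but the paper's forbidden pattern is $K_{2,3}$ (which $\EHD_1$ avoids only when rows and columns are required to be distinct), so the paper builds a \emph{partial} two-tally matrix $M' \in \{0,1,*\}^{8\times 8}$ from a pair of monochromatic rectangles in $\EHD_2^7$ padded with ``discriminator'' rows and columns to force distinctness, and then appeals to $K_{2,3}$-freeness inside both $M'$ and $\neg M'$. You instead use $\mathbf{J}_5-\mathbf{I}_5$, which has automatically distinct rows and columns so that $\QS$'s duplication operation is neutralized for free; you give your own short self-contained proof that $\mathbf{J}_5-\mathbf{I}_5\notin\QS(\EHD_1)$ (correct, including the weight-exactly-2 deduction for $x_1\oplus x_2$ and the pigeonhole on $c_3,c_4,c_5$), and you handle the $\neg M$ symmetry via a \emph{total} block-diagonal $10\times 10$ two-tally matrix $\bigl(\begin{smallmatrix}\mathbf{J}_5-\mathbf{I}_5&0\\0&\mathbf{I}_5\end{smallmatrix}\bigr)$ rather than a partial matrix. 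Your observation that restricting all supports to size $2$ makes the tally a function of the intersection size $r$ alone does make the bookkeeping trivial: every $1$-entry has $r=1$ and every $0$-entry has $r=0$, across all four blocks.

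One detail that should be made explicit: the ``$5$-cycle'' realization of the $\mathbf{J}_5-\mathbf{I}_5$ block cannot have rows equal to columns (five $2$-subsets of $[5]$ cannot be pairwise intersecting, since the Petersen graph has independence number $4$). What does work is taking the rows to be one Hamiltonian cycle of $K_5$, say $\{1,2\},\{2,3\},\{3,4\},\{4,5\},\{5,1\}$, and the columns to be the complementary Hamiltonian cycle $\{3,5\},\{1,4\},\{2,5\},\{1,3\},\{2,4\}$, paired so that row $i$ and column $i$ are the unique disjoint match; then $|A_i\cap B_j|=0$ iff $i=j$ and $=1$ otherwise, as you need. (Alternatively, a sunflower $A_i=B_i=\{1,i+1\}$ would also realize $\mathbf{J}_5-\mathbf{I}_5$, but then the diagonal has $r=2$ rather than $r=0$, which would not match the $r=0$ of the $\mathbf{I}_5$ block and the cross blocks, so the two-cycle construction is the right one.) With that clarification your argument is complete and is, if anything, a bit cleaner than the paper's, since it avoids both the partial-matrix extension and the need to argue separately that the completion has distinct rows and columns.
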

\begin{proof}
First consider the matrix $M$ defined as the submatrix of $\EHD_2^7$ on rows $\cX$ and
columns $\cY = \cY_0 \cup \cY_1$, where
\begin{align*}
\cX &\define \{0011000,1100000\} \\
\cY_0 &\define \{0000011,0000101,0000110\} \\
\cY_1 &\define \{1010000,1001000,010100\}  \,.
\end{align*}
Observe that for $\beta \in \zo$,  $\cX \times \cY_\beta$ is a $\beta$-monochromatic rectangle of $\EHD_2^7$, and that the
distance between any two $(x,y) \in \cX \times \cY$ is either 2 or 4. Now, observe that for any two
distinct strings $a,b \in \zo^7$ with Hamming weight 2, there exists a $\delta = \delta(a,b) \in
\zo^7$ with Hamming weight 2 such that $\dist(a,\delta) = 2$ and $\dist(b,\delta) = 4$. We now
extend $M$ to a partial matrix $M'$ by adding the columns $\{ \delta(a,b) \,|\, a, b \in \cX, a \neq
b \}$ and the rows $\{ \delta(a,b) \,|\, a,b \in \cY, a \neq b \}$, and taking the entries
\[
  M'(x,y) \define \begin{cases}
    1 &\text{ if } \dist(x,y) = 2 \\
    0 &\text{ if } \dist(x,y) = 4 \\
    * &\text{ otherwise.}
  \end{cases}
\]
This matrix agrees with $\EHD^7_2(x,y)$ whenever $\dist(x,y) \in \{2,4\}$, and every row and every
column is distinct. Since the weight of every string is the same and all non-$*$ entries have
distance 2 or 4, it must hold that all 1-valued entries $x,y$ of $M'$ are domino permutations of
each other, and the same for all 0-valued entries, making $M'$ a two-tally matrix of $\EHD_2$.

Assume for the sake of contradiction that $\mathsf{D}^{\EHD_1}(\EHD_2) = O(1)$. Then by
\cref{lemma:two-tally}, there is $L \in \QS(\EHD_1)$ that is a completion of $M'$ or $\neg M'$. However,
$M'$ and $\neg M'$ both contain the submatrix $K_{2,3}$ (the 2$\times$3 all-1s matrix), and it is known that $\EHD_1$ does not contain
$K_{2,3}$. Since $M'$ has distinct rows and columns, it cannot be obtained from a submatrix of
$\EHD_1$ by copying rows and columns, and therefore any completion $L$ of $M'$ or $\neg M'$ cannot
belong to $\QS(\EHD_1)$, a contradiction.
\end{proof}

\subsection{Separating \texorpdfstring{$k$}{k}-Hamming Distance and Integer Inner Product}

Our final application separates \textsc{$k$-Hamming Distance} from \textsc{Integer Inner
Product}:

\begin{definition}[Integer Inner Product]
\label{def:iip}
For any fixed constant $d$, the \textsc{Integer Inner Product} problem in dimension $d$ is defined
as $\IIP_d = (\IIP_d^n)_{n \in \bN}$ where $\IIP_d^n : \zo^{dn} \times \zo^{dn} \to \zo$ is the
function defined on $x,y \in \zo^{dn}$, interpreted as the binary representation of integer vectors $x =
(x_1, x_2, \dotsc, x_d)$ and $y = (y_1, y_2, \dotsc, y_d)$ in domain $[-2^{n-1}, 2^{n-1}]^d$, where
\[
  \IIP_d^n(x,y) \define \begin{cases}
    0 &\text{ if } \inn{x,y} = 0 \\
    1 &\text{ otherwise.}
  \end{cases}
\]
\end{definition}
It is known that $\mathsf{R}(\IIP_d^n) = O(d \cdot \log n)$ \cite{CLV19}, so $\IIP_d \in \BPP$ for every
constant $d$, but it is conjectured that $\IIP_d \notin \CCC$ (see \eg \cite{CHHS23}). It was shown
in \cite{CLV19} that there is an infinite sequence $d_1 < d_2 < \dotsm$ of constants such that
$\mathsf{D}^{\IIP_{d_i}}(\IIP_{d_{i+1}}) = \Theta(n)$; in other words, they form an infinite
hierarchy within $\BPP$. We will show that oracles to these functions, which each have much larger
randomized communication complexity than any $\EHD_k$, nevertheless cannot simulate $\EHD_k$ in $\CCC$.

We require the following lemma.
\begin{lemma}
For any constant $d$, $\IIP_d$ is stable and has VC dimension at most $d$.
\end{lemma}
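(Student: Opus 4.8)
The plan is to prove the two claims separately, and for each to exhibit a forbidden configuration whose absence is forced by the algebraic structure of $\IIP_d$. Recall that $\IIP_d^n(x,y)=0$ iff $\langle x,y\rangle=0$ over the integers, where $x,y$ are interpreted as vectors in $\bZ^d$. The key point is that ``$\langle x,y\rangle = 0$'' is a codimension-one linear condition, and the obstructions to large $\GT$ subproblems and to shattering should both come from the fact that a single linear functional cannot behave like a total order or like an independent system of constraints.

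For stability: suppose rows $x^{(1)},\dots,x^{(m)}$ and columns $y^{(1)},\dots,y^{(m)}$ form a $\GT_m$ (or its negation) inside some matrix of $\QS(\IIP_d)$. Since $\QS$ only permutes, duplicates, and restricts, we may assume the $x^{(i)}, y^{(j)}$ are honest integer vectors in $\bZ^d$ with $\langle x^{(i)}, y^{(j)}\rangle = 0 \iff i \le j$ (handling the complementary pattern symmetrically). Consider the $d$-dimensional subspace structure: for each $j$, the set $\{x : \langle x, y^{(j)}\rangle = 0\}$ is a hyperplane $H_j$ (or all of $\bR^d$ if $y^{(j)}=0$, a degenerate case to dispatch separately). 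The condition says $x^{(i)} \in H_j \iff i \le j$, so $x^{(i)} \in H_i$ but $x^{(i)} \notin H_{i-1}$. First I would argue this forces the $y^{(j)}$ to span a space of dimension growing with $m$, contradicting $y^{(j)} \in \bR^d$ once $m > d+1$ or so — more carefully, I expect one can show the hyperplanes $H_1,\dots,H_m$ must be pairwise distinct and arranged in a ``staircase,'' and a counting/linear-algebra argument (e.g.\ looking at $x^{(1)}$, which lies in $H_1$ but no later $H_j$, versus $x^{(m)}$, which lies in every $H_j$ with $j\ge m$) bounds $m$ in terms of $d$. The cleanest route is probably: the vectors $y^{(1)},\dots,y^{(m)}$ together with the evaluation pattern on $x^{(1)},\dots,x^{(m)}$ realize a $\GT_m$ in the sign/zero-pattern of a rank-$\le d$ bilinear form, and rank-$d$ forms have $\GT$-number $O(d)$ — this is the analogue of the well-known fact that low-rank matrices have bounded threshold/order structure.

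For VC dimension $\le d$: suppose columns $y^{(1)},\dots,y^{(\ell)}$ are shattered inside a matrix of $\QS(\IIP_d)$, so for every $S\subseteq[\ell]$ there is a row $x^S\in\bZ^d$ with $\langle x^S,y^{(i)}\rangle = 0 \iff i\in S$. Equivalently, thinking of the linear map $L: \bR^d \to \bR^\ell$, $x \mapsto (\langle x, y^{(i)}\rangle)_{i}$, shattering says the zero-pattern of $L(\bZ^d)$ realizes all $2^\ell$ subsets. But the image $L(\bR^d)$ is a subspace of dimension $\le d$, and the zero-patterns of vectors in a $d$-dimensional subspace of $\bR^\ell$ (w.r.t.\ the coordinate hyperplanes) form the set of complements of flats of a matroid of rank $\le d$; such a set cannot contain all $2^\ell$ subsets once $\ell > d$ — indeed, a $d$-dimensional subspace meets at most $d$ ``independent'' coordinate hyperplanes in the sense that its intersection with coordinate hyperplanes forms a lattice of flats of rank $\le d$. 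A slick way to see it: shattering in particular gives, for each $i$, a row $x^{[\ell]\setminus\{i\}}$ vanishing on all $y^{(j)}$, $j\ne i$, but not on $y^{(i)}$, which forces $y^{(1)},\dots,y^{(\ell)}$ to be linearly independent, hence $\ell\le d$.

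I expect the main obstacle to be the stability bound rather than the VC bound: the VC argument reduces almost immediately to linear independence of the $y^{(j)}$'s via the ``all-but-one'' sets, whereas the $\GT$ argument requires care about (i) degenerate cases where some $x^{(i)}$ or $y^{(j)}$ is zero, (ii) the fact that $\QS$ allows duplicated rows/columns, so one should first pass to a subset with distinct vectors, and (iii) getting an explicit (or at least finite) bound $t=t(d)$ as required by \cref{def:stable}. For (iii), the safest path is to invoke a known quantitative bound on the Littlestone/threshold dimension of sign patterns of rank-$d$ matrices, or to prove directly by induction on $d$ that a $\GT$-staircase in the zero-pattern of a rank-$d$ bilinear form has length $O(d)$; either way the constant depends only on $d$, which is all that is needed.
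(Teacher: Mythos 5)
Your VC-dimension argument is correct and, in fact, more direct than the paper's: the ``all-but-one'' rows give a diagonal-type pattern, a single dot product against each row annihilates all but one coefficient in any linear dependence, and linear independence of $y^{(1)},\dots,y^{(\ell)}$ in $\bR^d$ gives $\ell\le d$ immediately. The paper instead derives the VC bound as a byproduct of its stability claim (a shattered set of $d+1$ columns contains a $\GT_{d+1}$ staircase, so the stability bound kills it). Your route is a fine alternative.

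The stability half is where the gap is. You correctly name ``induction on $d$ via projection onto a hyperplane'' as the safest path, but you do not carry it out, and the intermediate claims you float along the way are not quite right. In particular, the suggestion that the $y^{(j)}$'s must span a subspace of dimension growing with $m$ is false in general: the $y^{(j)}$'s in a long $\GT$-staircase need not be linearly independent (only their successive projections onto a shrinking flag of subspaces need to be nonzero). The clean induction, which is what the paper does, is this: normalize so that $\inn{x_i,y_j}=0$ iff $i\le j$, with $\vec 0\notin\cY$. Since $y_m\ne\vec 0$ and $\inn{x_i,y_m}=0$ for all $i$, every $x_i$ lies in the hyperplane $\cW=\{x:\inn{x,y_m}=0\}$ of dimension $t-1$. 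Project $y_1,\dots,y_{m-1}$ orthogonally into $\cW$ to get $y'_1,\dots,y'_{m-1}$; each $y'_j\ne\vec 0$ because $x_m\in\cW$ and $\inn{x_m,y_j}\ne 0$, and $\inn{x_i,y'_j}=\inn{x_i,y_j}$ for $x_i\in\cW$, so the pattern on the first $m-1$ indices survives. Induction on $t$ gives $m-1\le t-1$, hence $m\le t$, and allowing $\vec 0$ in the finite sets costs at most a constant. That single paragraph is the missing content; the appeal to ``a known quantitative bound on the Littlestone dimension of rank-$d$ sign patterns'' is not something you can cite off the shelf for the zero-pattern (as opposed to sign-pattern) of a bilinear form, so you should prove it.

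One smaller point: you should say explicitly, as you gesture at in item (ii), that duplicated rows and columns introduced by $\QS$ are harmless here because a $\GT_m$ pattern requires its $m$ rows and $m$ columns to be pairwise distinct as $0/1$ vectors, hence the underlying integer vectors can be taken distinct; and degenerate zero vectors can be discarded at the cost of a constant in $m$.
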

\begin{proof}
It suffices to prove the following claim.

\begin{claim}
Let $t \geq 1$, and let $\cX, \cY \subseteq \bR^t$ be any finite sets of points with $\vec 0 \notin \cY$, and consider any sequences of points $x_1, \dotsc, x_m \in \cX$ and $y_1, \dotsc, y_m \in \cY$
such that $\forall i,j \in [m]$, $\inn{x_i,y_j} = 0$ if and only if $i \leq j$. Then $m \leq t$.
\end{claim}
\begin{proof}[Proof of claim]
We prove the claim by induction on $t$. One may easily check that the claim is true in the base case $t=1$ where
$\inn{x_i,y_j} = 0$ iff $x_i = 0$. Now assume $t \geq 2$. Let $x_1, \dotsc,
x_m \in \cX$ and $y_1, \dotsc, y_m \in \cY$ be sequences satisfying the condition $\inn{x_i, y_j} =
0$ iff $i \leq j$. Since $y_m \neq \vec 0$ and $\inn{x_i,y_m} = 0$ for all $i \in [m]$, it defines a
perpendicular subspace $\cW$ of dimension $t-1$, $\cW \define \{ x \in \bR^t : \inn{x,y_m} = 0 \}$,
such that $\{x_1, \dotsc, x_m\} \subseteq \cW$. Let $y'_1, \dotsc, y'_{m-1}$ denote the projections of $y_1, \dotsc,
y_{m-1}$ into $\cW$, and observe for each $j \in [m-1]$ that $\inn{x_m,y_j} \neq 0$ by definition, so $y'_j
\neq \vec 0$ since $x_m \in \cW$. Finally note that for all $i,j \in [m-1]$, it holds that $\inn{x_i,y'_j} = 0$ iff
$\inn{x_i, y_j}=0$, and therefore we may apply the induction hypothesis to $x_1, \dotsc, x_{m-1}$ and
$y'_1, \dotsc, y'_{m-1}$ to conclude that $m-1 \leq t-1$.
\end{proof}

To conclude the proof of the lemma, observe that taking finite sets $\cX, \cY \subseteq \bR^d$ that
may include $\vec 0$ cannot increase the size $m$ of the ordered sequences $x_1, \dotsc, x_m$ and
$y_1, \dotsc, y_m$ in the claim by more than 1. The bound on the VC dimension is known in the literature, and also
follows from the above proof, since this exhibits a $(d+1) \times (d+1)$ matrix with unique rows and columns that cannot be a submatrix of $\IIP_d$.
\end{proof}

We now apply \cref{thm:vc} to separate $\EHD_k$ from $\IIP_d$ and obtain the formal statement of
\cref{thm:intro-iip}.

\begin{theorem}
\label{thm:formal-iip}
For any constant $d$ and any $k \geq d$, $\mathsf{D}^{\IIP_d}(\EHD_k) = \omega(1)$.
\end{theorem}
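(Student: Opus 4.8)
The plan is to derive this immediately from \cref{thm:vc} together with the lemma proved just above, which shows that for every constant $d$ the set $\IIP_d$ is stable and has VC dimension at most $d$. So the argument is essentially a one-line chaining of already-established facts, and the only real work is making sure the hypotheses line up.

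First I would record that $\IIP_d$, viewed as a set of Boolean matrices (one for each input length), is a stable set in the sense of \cref{def:stable}: the claim proved above exhibits a constant $t = t(d)$ (in fact $t = d+1$ works) with $\GT_t, \neg\GT_t \notin \QS(\IIP_d)$, and the same reasoning bounds the VC dimension of $\IIP_d$ by some $d' \le d$. Then I would simply invoke \cref{thm:vc} with $\cQ = \IIP_d$: since $\IIP_d$ is stable with VC dimension $d' \le d \le k$, the theorem gives $\mathsf{D}^{\IIP_d}(\EHD_k) = \omega(1)$, which is exactly the statement. One could alternatively run the argument only at $k = d$ and then transfer to all $k \ge d$ via the padding reduction $\mathsf{D}^{\EHD_k}(\EHD_d) \le 1$ (pad $x$ with $k-d$ zeros and $y$ with $k-d$ ones, so that $\dist$ increases by exactly $k-d$), but invoking \cref{thm:vc} directly is cleaner.

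There is essentially no mathematical obstacle here — all the content sits in \cref{prop:two-sig-vc}, \cref{lemma:two-tally}, \cref{lem:permutation_invariant}, and the stability/VC bound for $\IIP_d$. The one thing I would be careful to verify is that \cref{lemma:two-tally} (which \cref{thm:vc} uses under the hood) genuinely applies to $\cQ = \IIP_d$ with no hidden dependence on $d$ or on the input-length parameter of $\IIP_d$; this is fine because \cref{lem:permutation_invariant} and \cref{lemma:two-tally} are stated for an arbitrary stable set of matrices, and all the dimension-dependence of $\IIP_d$ has already been absorbed into the single constant $t(d)$ bounding both the largest \textsc{Greater-Than} subproblem and the VC dimension. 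Hence no new ideas are needed beyond what is already in the excerpt.
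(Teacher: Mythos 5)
Your proposal is correct and coincides with the paper's own argument: the paper states \cref{thm:formal-iip} as a direct application of \cref{thm:vc} using the lemma that $\IIP_d$ is stable with VC dimension at most $d$, which is exactly what you do. The alternative padding route you sketch for transferring from $k=d$ to $k\geq d$ is a valid (if unnecessary) extra observation, but the primary invocation of \cref{thm:vc} is the intended one-line proof.
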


There are two conjectures in the literature which would imply
a stronger version of this theorem for any constant $d$ and $k=1$;
see \cref{section:discussion}. Note that $\sD^{\IIP_2}(\EHD_0) = O(1)$ since $\EHD_0$ is \textsc{Equality}.

\section{The Complexity of Submatrices of \texorpdfstring{$k$}{k}-Hamming Distance}
\label{section:khd-submatrices}

Problems in $\CCC$ satisfy the hereditary property that the randomized
communication cost $\mathsf{R}(\cdot)$ remains bounded by taking submatrices.
As discussed in \cref{section:intro-reductions}, it is not true in general (outside $\CCC$) that 
$\mathsf{R}( \cdot )$ is preserved, as a function of the matrix size, by taking
arbitrary submatrices. It is helpful,
for proving lower bounds, to understand when hereditary properties
hold for other complexity measures as well. We will show in this section that 
$\sD^\EQ(\EHD_k)$, and the $\gamma_2$-norm of $\EHD_k$, are also preserved when
taking submatrices of $\EHD_k$, and
we use these hereditary properties to prove new lower bounds against $\EHD_k$ oracles
within $\BPP$.

The $\gamma_2$-norm is an important norm in
communication complexity. For a matrix $M \in \zo^{N \times N}$ it is defined as
\[
    \|M\|_{\gamma_2} = \min_{A,B : \; M=AB}
        \|A\|_{\text{row}} \|B\|_{\text{col}} \,,
\]
where $\|A\|_{\text{row}}, \|B\|_{\text{col}}$ denote
the largest $\ell_2$ norm of any row and column, 
respectively. See \cite{LS09,HHH23eccc,CHHS23} for a 
discussion of this quantity and its relation to 
communication complexity.

We prove the following proposition in 
\cref{prop:submatrix-protocol}.

\begin{proposition}\label{prop:eqprotocol-for-kHD}
For any $k$ and any $N \times N$ query matrix $Q \in \QS(\THD_k)$, we have 
\begin{enumerate}
    \item $\sD^{\EQ}(Q) = O(k \log \log N)$, and
    \item $ \|Q\|_{\gamma_2} = (\log N)^{O(k)}$. 
\end{enumerate}
\end{proposition}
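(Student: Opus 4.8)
The plan is to prove both parts simultaneously by analyzing the structure of an arbitrary $Q \in \QS(\THD_k)$. Since $Q$ is obtained from some $\THD_k^n$ by row/column permutations, submatrices, and row/column duplications — none of which increase $\sD^{\EQ}(\cdot)$ or $\|\cdot\|_{\gamma_2}$ (the former because duplications/permutations/submatrices are free for oracle protocols, the latter because duplicating rows/columns does not change the $\gamma_2$-norm and taking submatrices can only decrease it) — it suffices to bound $\sD^{\EQ}(\THD_k^n)$ and $\|\THD_k^n\|_{\gamma_2}$ and then relate $n$ to $N$. The subtlety is that $Q$ is an $N \times N$ matrix but may come from a $\THD_k^n$ with $n$ much larger than $\log N$; however, since $Q$ has at most $N$ distinct rows and at most $N$ distinct columns, we may first pass to a submatrix of $\THD_k^n$ with at most $N$ distinct rows and columns, and then the relevant coordinates can be compressed: only coordinates on which some two of the $\leq N$ chosen rows (or columns) differ matter, and one can argue the effective dimension is $O(\log N)$ — or, more cleanly, bound everything in terms of $n$ and observe that after restricting to $\leq N$ rows/columns we may assume $n \leq $ something like $N^2$, giving $\log n = O(\log N)$, which is all parts (1) and (2) need since $k\log\log N$ absorbs constants and $(\log N)^{O(k)}$ absorbs polynomial factors in the exponent base.

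For part (1), I would give an explicit deterministic protocol with an \textsc{Equality} oracle for $\THD_k^n$: the players use the oracle to binary-search for the positions where $x$ and $y$ differ. Concretely, partition $[n]$ into two halves; an \textsc{Equality} query on $x$ and $y$ restricted to a block tells whether that block contains any disagreement. Recursing, in $O(\log n)$ rounds one can either locate a disagreeing coordinate or certify a block is agreement-free; but we only need to find up to $k+1$ disagreements (if there are more than $k$, output $0$). A standard group-testing / binary-search argument finds the (at most $k$) disagreeing coordinates using $O(k \log n)$ \textsc{Equality} queries, and then the players compare the located coordinates directly. Since each \textsc{Equality} query is one oracle call, this is $O(k \log n) = O(k \log\log N)$ after the dimension reduction above. (Alternatively, one can cite that $\sD^{\EQ}(\THD_k) = O(k\log\log N)$ follows from the known Hamming-distance sketches, but the binary-search protocol is self-contained.)

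For part (2), the cleanest route is to build a low-complexity factorization from the same sketch. The matrix $\THD_k^n$ can be written as a Boolean function of $O(k)$ \textsc{Equality}-type queries on blocks, or more directly: the $\gamma_2$-norm of a matrix that is computable by a depth-$O(k\log n)$ decision tree with \textsc{Equality} queries at the nodes is at most $2^{O(k\log n)}\cdot \max(\gamma_2$ of an Equality matrix$)$; since the identity (Equality) matrix has $\gamma_2$-norm $1$, and each query only multiplies the norm by a bounded factor while each internal node of the tree multiplies by at most $2$, one gets $\|\THD_k^n\|_{\gamma_2} \leq n^{O(k)} = (\log N)^{O(k)}$. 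I would phrase this via the standard fact that $\gamma_2$ is submultiplicative under entrywise (Hadamard) products and subadditive, so that a Boolean combination of $m$ matrices each of $\gamma_2$-norm $\leq 1$ (the blockwise Equality matrices) has $\gamma_2$-norm $\leq 2^{O(m)}$, with $m = O(k\log n)$. The main obstacle here — and the step I would spend the most care on — is making the dimension-reduction rigorous: arguing that passing to an $N\times N$ submatrix lets us assume $\log n = O(\log N)$, so that the bounds stated in terms of $n$ convert correctly to bounds in terms of $N$. This requires showing that among the $\leq N$ distinct rows and $\leq N$ distinct columns selected, one can delete or merge coordinates to bring $n$ down to $\mathrm{poly}(N)$ without changing any entry of $Q$, which follows because two distinct rows are separated by a set of coordinates and there are only $\binom{N}{2}$ such separating events to preserve.
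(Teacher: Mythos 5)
There is a genuine gap, and it sits exactly where you flagged you would ``spend the most care'': the dimension reduction is not strong enough to yield the stated bounds. Your compression step is correct as far as it goes: after passing to the $\leq N$ distinct rows and $\leq N$ distinct columns of $Q$, one may delete any coordinate on which all relevant strings agree, leaving at most $\binom{2N}{2} = O(N^2)$ coordinates, since one coordinate per separating pair suffices. But this gives $d = \poly(N)$, so $\log d = \Theta(\log N)$, and na\"ive binary search then yields $O(k\log d) = O(k\log N)$ \textsc{Equality} queries --- exponentially worse than the claimed $O(k\log\log N)$. You write ``$O(k\log n) = O(k\log\log N)$ after the dimension reduction above,'' but this conflates $\log d$ with $\log\log N$; for the identification to hold you would need $d = \poly(\log N)$, and nothing in the compression argument gives that. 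The same flaw propagates to Part~(2): with $d = \poly(N)$, the bound $\|Q\|_{\gamma_2} \leq d^{O(k)}$ becomes $N^{O(k)}$, not $(\log N)^{O(k)}$. Worth noting: the strong dimension reduction you would need (realizing an arbitrary $N\times N$ submatrix of $\THD_k^d$ as a submatrix, or constant Boolean combination of submatrices, of $\THD_k^{O(\log N)}$) is explicitly posed as an \emph{open question} in \cref{section:discussion} of the paper, so if your ``one can argue the effective dimension is $O(\log N)$'' went through, it would resolve that question.

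The paper avoids dimension reduction entirely. The key new ingredient is \cref{prop:diameter-partition}: for any $N$ strings in $\zo^d$, a uniformly random bipartition $[d] = A \cup B$ has, with positive probability, the property that whenever two of the strings agree on one side they differ on at most $3\log N$ coordinates on the other side. The protocol \textsc{Threshold Distance} first checks $\EQ(x_A, y_A)$ and $\EQ(x_B, y_B)$. If exactly one side agrees, the disagreement on the other side is confined to a ``bounded-diameter'' set, and a sub-protocol represents inputs by their $\leq 3\log N$ deviations from a fixed reference string within an index universe of size $O(N\log N)$, then locates each differing coordinate by binary search on these \emph{short} lists plus a \textsc{Greater-Than} call, all costing $O(\log\log N)$ per coordinate found. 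If both sides disagree, the protocol recurses on $A$ and $B$ with a budget split, and the budget drops by at least one per level, giving $O(k\log\log N)$ queries overall. Part~(2) then follows from Part~(1) via the inequality $\sD^{\EQ}(M) \geq \frac{1}{2}\log\|M\|_{\gamma_2}$ from \cite{HHH23eccc} --- the same relationship you gesture at, but applied with the correct query bound. Your instinct to derive (2) from (1) is right; the binary-search-plus-compression route to (1) is what fails.
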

Item (2) above follows from Item (1) by the result of \cite{HHH23eccc} that for any Boolean matrix $M$, \begin{equation}\label{eq:eqvsgamma2}
\sD^{\EQ}(M) \geq \frac{1}{2} \log \|M\|_{\gamma_2}.
\end{equation}
It was proved later in \cite{CHHS23} that for any $d\geq 3$, 
\begin{equation}\label{eq:gamma2ofIIP}
\|\IIP_d^n\|_{\gamma_2} = 2^{\Omega(n)}.
\end{equation}
This combined with \eqref{eq:eqvsgamma2} recovers the lower-bound of \cite{CLV19}, showing that for every $d\geq 3$,
\begin{equation}\label{eq:IIPLB}
\sD^{\EQ}(\IIP_d^n)= \Omega(n).
\end{equation}

Combining this with \cref{prop:eqprotocol-for-kHD} shows that even $\IIP_3^n$ does not reduce to 
the \textsc{$k$-Hamming Distance} under $\BPP$ reductions allowing queries of unbounded size, for any
$k \leq n / (\log n)^{\omega(1)}$:
\begin{theorem}[Restatement of \cref{thm:intro-bpp}]
\label{thm:IIP-to-khd}
For any $d \geq 3$ and $k\leq \nicefrac{n}{(\log n)^{\omega(1)}}$, we have
$\sD^{\THD_k}(\IIP_d^n)=(\log n)^{\omega(1)}$. In particular, when $k$ is a constant, $\sD^{\THD_k}(\IIP_d^n)=\Omega\left(\nicefrac{n}{\log n}\right)$.
\end{theorem}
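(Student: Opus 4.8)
The plan is to \emph{compose} the hypothetical $\THD_k$-oracle protocol for $\IIP_d^n$ with the $\EQ$-oracle protocols supplied by \cref{prop:eqprotocol-for-kHD}, and then derive a contradiction with the known $\EQ$-oracle lower bound for $\IIP_d^n$. Suppose $\sD^{\THD_k}(\IIP_d^n) = c$, where $c = c(n)$. Since $\IIP_d^n$ is a Boolean matrix of size $N \times N$ with $N = 2^{dn}$, such a protocol is a depth-$c$ binary tree whose internal nodes are labelled by query matrices $Q_v \in \QS(\THD_k)$, each an $N \times N$ matrix. By \cref{prop:eqprotocol-for-kHD}, every such $Q_v$ satisfies $\sD^{\EQ}(Q_v) = O(k \log\log N) = O(k \log n)$, using that $d$ is a fixed constant so $\log\log N = \log(dn) = \Theta(\log n)$.

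The first step is the routine observation that deterministic oracle reductions compose: at each internal node $v$ we substitute a deterministic $\EQ$-oracle subprotocol computing $Q_v$ and then continue along the original tree according to its output bit, obtaining a valid deterministic $\EQ$-oracle protocol for $\IIP_d^n$ of cost at most $c \cdot O(k\log n)$. Hence $\sD^{\EQ}(\IIP_d^n) = O(ck\log n)$. (The only thing to check is that $\QS(\EQ)$ is closed under the operations needed to make all queries act on the common domain $[N]$, which is immediate from the definition of $\QS$.)

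Next I would invoke \eqref{eq:IIPLB}: for $d \geq 3$, $\sD^{\EQ}(\IIP_d^n) = \Omega(n)$, which the text recalls follows from $\|\IIP_d^n\|_{\gamma_2} = 2^{\Omega(n)}$ in \eqref{eq:gamma2ofIIP} together with $\sD^{\EQ}(M) \geq \tfrac12 \log\|M\|_{\gamma_2}$ in \eqref{eq:eqvsgamma2}. Comparing with the upper bound $\sD^{\EQ}(\IIP_d^n) = O(ck\log n)$ forces $ck\log n = \Omega(n)$, i.e.\ $c = \Omega\!\left(\nicefrac{n}{k\log n}\right)$. For constant $k$ this is $\Omega(\nicefrac{n}{\log n})$; for $k \leq \nicefrac{n}{(\log n)^{\omega(1)}}$ we get $\nicefrac{n}{k} \geq (\log n)^{\omega(1)}$, hence $c \geq \nicefrac{(\log n)^{\omega(1)}}{\log n} = (\log n)^{\omega(1)}$, as claimed.

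I do not expect a genuine obstacle in this argument itself: it is a short composition followed by arithmetic. The substantive content lies entirely in \cref{prop:eqprotocol-for-kHD}, which is a \emph{dimension-free} statement — an arbitrary $N \times N$ submatrix of $\THD_k$, even one extracted from enormous dimension $m \gg \log N$, still has $\EQ$-oracle complexity $O(k\log\log N)$ with no dependence on $m$. This independence from the ambient dimension is exactly what makes the $O(k\log n)$ bound on $\sD^{\EQ}(Q_v)$ usable above (a dimension-dependent bound would kill the argument), and establishing it is the real work, carried out separately in \cref{section:khd-submatrices}.
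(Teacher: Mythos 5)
Your proposal is correct and matches the paper's proof essentially step for step: convert each $\THD_k$ query into $O(k\log\log N)=O(k\log n)$ \textsc{Equality} queries via \cref{prop:eqprotocol-for-kHD}, compose to get an $O(qk\log n)$-cost $\EQ$-oracle protocol for $\IIP_d^n$, and compare against the $\Omega(n)$ lower bound \eqref{eq:IIPLB}. The only cosmetic difference is that you spell out the substitution/composition step that the paper leaves implicit.
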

\begin{proof}
Consider a $\sD^{\THD_k}$ protocol for $\IIP_d^n$ with $q$ queries. By \cref{prop:eqprotocol-for-kHD}, each of the queries is a $2^{dn}\times 2^{dn}$ matrix $Q\in \QS(\THD_k)$ that can be simulated with $O(k \log n)$ \textsc{Equality} oracle queries. This gives a protocol for $\IIP_d^n$ with $O(qk\log n)$ \textsc{Equality} queries.
Using \eqref{eq:IIPLB}, we have $q=\Omega(\nicefrac{n}{k \log n})$. Therefore, $q=(\log n)^{\omega(1)}$, as long as $k\leq n / (\log n)^{\omega(1)}$, and $q=\Omega(\nicefrac{n}{\log n})$ when $k$ is a constant. 
\end{proof}

\subsection{Replacing \texorpdfstring{$k$}{k}-Hamming Distance Queries with \textsc{Equality} Queries}
\label{prop:submatrix-protocol}

We now show that any $N\times N$ matrix $Q\in \QS(\THD_k)$ can be reduced to $O(k\log \log N)$ \textsc{Equality} oracle calls. When $Q=\THD_k^n$, \ie when the matrix has size $2^n \times 2^n$ and the inputs are $x,y \in \zo^n$, this can be done easily using binary
search to find the first differing bit, and removing it and repeating up to $k+1$ times. But this simple protocol becomes inefficient when $Q$ is a submatrix of $\THD_k^d$ for $d \gg \log N$, \ie the inputs
$x,y$ are chosen from subsets $X, Y \subseteq \zo^d$ with $|X|,|Y| \ll 2^d$, so the number of coordinates is very large and na\"ive binary search gives $O(k \log d)$ instead of $O(k \log \log N)$. Such a $d$-dependent bound is not useful for our purposes, as our protocols are allowed to query oracles of arbitrary dimension. 

We will need the following simple lemma. For a binary string $x\in \{0,1\}^d$ and a set $A\subseteq [d]$, write $x_A\in\{0,1\}^{|A|}$ for the substring of $x$ on indices $A$.

\begin{lemma}
\label{prop:diameter-partition}
Let $Z \subseteq \zo^d$ be a set of $N$ binary strings. Then there exists a partition $[d] = A \cup B$ such that, for all $x,y \in Z$:
\begin{enumerate}
    \item If $x_A = y_A$, then $x_B, y_B$ differ on at most $3 \log N$ bits.
    \item If $x_B = y_B$, then $x_A, y_A$ differ on at most $3 \log N$ bits.
\end{enumerate}
\end{lemma}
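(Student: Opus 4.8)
The plan is to build the partition greedily by repeatedly "peeling off" a small set of coordinates that separates a cluster of strings from the rest. Concretely, I would think of it as a process on $Z$: start with $B = [d]$ and $A = \emptyset$, and maintain the invariant that the strings of $Z$, viewed on coordinates $B$, fall into a small number of "clusters" in which all pairwise Hamming distances on $B$ are at most $3\log N$. At each step, if two strings $x,y \in Z$ have $x_A = y_A$ but differ on more than $3\log N$ coordinates inside $B$, I move the set $D$ of coordinates on which $x$ and $y$ differ (within $B$) — or rather a carefully chosen subset — from $B$ into $A$, which strictly increases the number of distinct $A$-projections among the strings violating condition (1). Since there are only $N$ strings, this can happen at most $N$ times, so the process terminates; the difficulty is making sure that moving coordinates into $A$ does not blow up the number of bits on which two strings can agree on $B$ while differing too much on $A$, i.e. that condition (2) stays under control.

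A cleaner way to organize this, which I would actually carry out, is via a divide-and-conquer / "majority vote" argument. Fix any string $z^\star \in Z$ (say the first one), and let $A$ be the set of coordinates $i \in [d]$ such that strictly more than half of the strings in $Z$ agree with $z^\star$ on coordinate $i$ — no wait, that does not immediately give the bound either. Instead, the right approach is the following recursive one: pick a coordinate $i$; it partitions $Z$ into $Z_0 = \{x : x_i = 0\}$ and $Z_1 = \{x : x_i = 1\}$. Recurse on each part. The key observation is that along any root-to-leaf path in the resulting decision tree, we only "commit" a coordinate to $A$ when it actually splits the current set into two nonempty parts, and since $|Z| = N$, any such splitting can happen at most $\log N$ times along a single path before the set is a singleton. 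This suggests: take $A$ to be a minimal set of coordinates such that the projections $\{x_A : x \in Z\}$ are all distinct — then $|A|$ can still be large, so this is not quite it either.

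Let me state the argument I would commit to. For a set $Z$ of $N$ strings, I claim we can find $A$ with: (i) any two strings agreeing on $A$ are within $3\log N$ on the complement, and (ii) symmetrically. I would prove (i) and (ii) are simultaneously achievable by a probabilistic or extremal argument: consider the "disagreement hypergraph" whose vertices are $[d]$ and which records, for each pair $x,y \in Z$, the set $D_{xy}$ of coordinates where they differ. Condition (1) says $A$ must hit $D_{xy}$ whenever $|D_{xy}| > 3\log N$... and condition (2) says $B = \bar A$ must hit $D_{xy}$ whenever $|D_{xy}| > 3\log N$, i.e. for every "long" pair $D_{xy}$, the set $A$ must be a nontrivial splitter: $0 < |A \cap D_{xy}| < |D_{xy}|$. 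A uniformly random $A$ (each coordinate independently in $A$ with probability $1/2$) satisfies, for a fixed long pair with $|D_{xy}| > 3\log N$, $\Pr[A \cap D_{xy} = \emptyset \text{ or } D_{xy}] = 2 \cdot 2^{-|D_{xy}|} < 2 \cdot 2^{-3\log N} = 2 N^{-3}$. Union-bounding over at most $\binom{N}{2} < N^2$ pairs gives failure probability $< 2/N < 1$, so a good $A$ exists. The main obstacle — and the only real content — is exactly this union bound computation and checking that $3\log N$ is the right threshold (any constant strictly greater than $1$ in front of $\log N$ works, and $3$ gives comfortable slack); everything else is bookkeeping. I would present it as this one-paragraph probabilistic argument.
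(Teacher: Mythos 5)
Your final committed argument---choose $A$ uniformly at random, observe that each pair $x,y$ with symmetric difference $D_{xy}$ of size $>3\log N$ is "bad" (i.e.\ $A\cap D_{xy}\in\{\emptyset,D_{xy}\}$) with probability $2\cdot 2^{-|D_{xy}|}<2N^{-3}$, and union bound over the $\binom{N}{2}$ pairs---is exactly the paper's proof. The earlier greedy and decision-tree sketches in your write-up are dead ends that you correctly abandon; the only thing to tighten is to keep $\binom{N}{2}$ rather than the looser $N^2$ in the union bound so the failure probability is strictly below $1$ even for small $N$, which the paper does.
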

\begin{proof}
Choose a partition $[d]=A\cup B$ uniformly at random. Fix an arbitrary pair $x,y\in Z$. If $x$ and $y$ differ on at most $3\log N$ bits, then regardless of $A$ and $B$, the properties hold for $x,y$. Assume otherwise, and note that in this case, the probability that $x_A=y_A$ or $x_B=y_B$ is at most $2\cdot 2^{-3\log N}<1/{N \choose 2}$.  Thus, by a union bound over all the ${N \choose 2}$ choices of $x,y$, there exists a choice of $A$ and $B$ that satisfies the claim. 
\end{proof}

We also require the next proposition, which is well-known and is achieved by performing binary search
on the bits in the binary representation of the inputs $i,j \in [N]$.

\begin{proposition}
\label{prop:gt-protocol-eq}
    $\mathsf{D}^{\EQ}(\GT_N) = O(\log\log N)$.
\end{proposition}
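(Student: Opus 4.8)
The plan is to reduce $\GT_N$ to a recursive binary-search over the bits of the inputs, where each step is resolved by a single \textsc{Equality} query, and the depth of the recursion is $O(\log\log N)$. Write each input $i,j \in [N]$ in binary as a string of $m = \lceil \log N\rceil$ bits, $i = i_1 i_2 \cdots i_m$ and $j = j_1 j_2 \cdots j_m$ (most significant bit first). The key observation is that $i \leq j$ is determined by the first position $p$ at which $i$ and $j$ differ: if no such position exists then $i = j$ and the answer is $1$; otherwise $i \leq j$ iff $i_p = 0$ (equivalently $j_p = 1$). So it suffices for Alice and Bob to locate this first differing coordinate $p$ (or determine there is none), and then compare a single bit, which each player knows locally — this last comparison costs at most one more oracle query (or can be folded into the output function, since the players can encode their $p$-th bit along with the query answers).

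Next I would implement the search for $p$ by binary search on the interval $[1,m]$ of coordinates, maintaining an interval $[\ell, r]$ guaranteed to contain the first differing coordinate (if one exists), initially $[1,m]$. At each step, let $c$ be the midpoint; Alice and Bob query whether their prefixes on coordinates $[\ell, c]$ are equal, using one \textsc{Equality} query on inputs of $c - \ell + 1$ bits (this is a query matrix in $\QS(\EQ)$ since it is \textsc{Equality} on a substring, obtained by duplicating rows/columns and permuting). If the prefixes agree, the first differing coordinate lies in $[c+1, r]$, so we recurse on that half; if they disagree, it lies in $[\ell, c]$, so we recurse there. After $O(\log m)$ steps the interval has length $1$, identifying $p$ exactly (or, if the very first query on all of $[1,m]$ returns equal, we output $1$ immediately). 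Since $m = O(\log N)$, the number of queries is $O(\log\log N)$, as desired.

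There is no serious obstacle here; the only points needing a little care are bookkeeping ones. First, one should confirm that each intermediate query really is a legal oracle call to $\QS(\EQ)$: comparing the substrings $x_{[\ell,c]}$ and $y_{[\ell,c]}$ is exactly the \textsc{Equality} function on those substrings, and restricting/duplicating coordinates keeps us inside $\QS(\EQ)$, so this is fine. Second, one must handle the boundary case $i = j$ cleanly — this is detected when the search never finds a differing coordinate (the top-level query on $[1,m]$ returns ``equal''), and the output is $1$ since $\GT_N(i,i) = 1$. Finally, the deterministic protocol is adaptive: the choice of each query depends on the answers to previous queries, which is exactly what \cref{def:reductions} and the oracle-protocol model permit, so the total cost is the depth $O(\log\log N)$ rather than the total number of possible queries. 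This completes the proof.
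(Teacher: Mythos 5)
Your proposal is correct and matches the paper's approach exactly: the paper states this proposition without a detailed proof, remarking only that it is ``well-known and is achieved by performing binary search on the bits in the binary representation of the inputs,'' which is precisely the binary search you carry out, using $O(\log m) = O(\log\log N)$ \textsc{Equality} queries to locate the first differing bit position $p$ and one final query (e.g.\ $\EQ(x_p, 0)$ against a constant) to read off its sign. One tiny caveat: your parenthetical suggestion that the last comparison can be ``folded into the output function'' does not quite work as stated, since in the oracle-protocol model the output must be a function of the query answers alone and the players cannot smuggle the bit $x_p$ into $f$ without a query — but this is harmless because your primary option (one additional \textsc{Equality} query) is correct.
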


Now we show that any $N\times N$ submatrix of $\THD_k$, with arbitrary dimension, can be computed efficiently 
using $\EQ$ oracles.

\begin{proof}[Proof of \cref{prop:eqprotocol-for-kHD}]
Let $Q \in \QS(\THD_k)$, and let $X, Y \subseteq \zo^d$ be the sets of rows and columns of $Q$ respectively, for some dimension $d$. Write $Z = X \cup Y$ for the set of all relevant binary strings. In the statement of \cref{prop:eqprotocol-for-kHD} we have defined $N=|X|=|Y|$ but in the proof here we write $N = |Z|$ for convenience, which does not affect the conclusion.
We first define the following procedure $\textsc{Bounded Diameter Threshold Distance}$, which uses \textsc{Equality} oracle queries to compute the Hamming distance (up to threshold $k$), on inputs that are promised to belong
to a set of $N$ inputs of diameter at most $3\log N$. This protocol works by transforming a low-diameter set into a low-Hamming-weight set. This subroutine will be used in $\textsc{Threshold Distance}$ protocol determining whether $\dist(x,y)>k$.

\definecolor{CommentColor}{HTML}{666666}
\newcommand{\pcomment}[1]{{\color{CommentColor}\;\;$\triangleright$ \textit{#1}}}
\begin{algorithm}[H]
    \floatname{algorithm}{\textsc{Bounded Diameter Threshold Distance}($Z,x,y,k$)}
    \renewcommand\thealgorithm{}
    \caption{\ } 
    \begin{algorithmic}
        \State\textbf{Requires} $x,y \in Z$, $\forall u,v \in Z, \dist(u,v) \leq 3 \log N$. \pcomment{Write $N = |Z|$.}
        \State Alice and Bob, without communication, determine the following:
        \State \qquad Pick the lexicographically first $z\in Z$.
        \State \qquad $I\leftarrow \{i \ | \ \exists w\in Z, (w\oplus z)_i=1\}$. \pcomment{Note that $|I|=O(N\log N)$.}
        \State Alice lets $S\leftarrow \{i \ | \ (x \oplus z)_i=1\}$.
        \State Bob lets $T\leftarrow \{j \ | \ (y\oplus z)_j=1\}$. \pcomment{Note that $S,T\subseteq I$ and $|S|,|T|\leq 3\log N$}.
        \State Initialize $c\leftarrow 0$. \pcomment{The number of differing bits that are confirmed so far.}
        \While{$c<k$ and $\EQ(S,T) = 0$} 
            \State $c\leftarrow c+1$.
            \State Initiate $S'\leftarrow S$ and $T'\leftarrow T$,
                and $K = \lceil 3 \log N \rceil$.
            \While{$K>1$}
                \State\pcomment{Here we determine the smallest element in the symmetric difference of $S'$ and $T'$.}
                \State $K \gets \lceil K/2 \rceil$. 
                \State $S_1\leftarrow $ the first $\min\{|S'|,  K \}$ elements of $S'$. $S_2\leftarrow S' \setminus S_1$.
                \State $T_1\leftarrow $ the first $\min\{|T'|, K \}$ elements of $T'$. $T_2\leftarrow T' \setminus T_1$.
                \If{$\EQ(S_1, T_1) = 1$} $S'\leftarrow S_1$, $T'\leftarrow T_1$.
                \Else{}  $S' \leftarrow S_2, T' \leftarrow T_2$.
                \EndIf
            \EndWhile

            \If{$|S'|=0$} 
                \State Let $j$ be such that $T'=\{j\}$. Bob lets $T\leftarrow T-\{j\}$.
            \ElsIf{$|T'|=0$}
                \State Let $i$ be such that $S'=\{i\}$. Alice lets $S\leftarrow S-\{i\}$.
            \Else
                \State Let $i,j$ be such that $S'=\{i\}$ and $T'=\{j\}$ \pcomment{Alice knows $i$ and Bob knows $j$.}
                \State Alice and Bob determine whether $i < j$ using \cref{prop:gt-protocol-eq} on domain $I$.
                \If{$i<j$} Alice lets $S\leftarrow S-\{i\}$. 
                \Else{} Bob lets $T\leftarrow T-\{j\}$.
                \EndIf
            \EndIf
        \EndWhile
        \If{$S=T$} 
            \Return $c$.
        \Else\ 
            \Return $\bot$.
        \EndIf
    \end{algorithmic}
\end{algorithm}

\begin{claim}
Let $k \in \bN$ and $Z \subseteq \zo^d$ be shared inputs to both parties, where $Z$ satisfies $|Z|=N$ and $\dist(u,v) \leq 3\log N$ for all $u,v \in Z$. Then on inputs $x,y \in Z$, the protocol
$\textsc{Bounded Diameter Threshold Distance}(Z,x,y,k)$ uses at most $O(k \log\log N)$ calls to the \textsc{Equality} oracle and outputs the following:
\begin{itemize}
    \item If $\dist(x,y) \leq k$, the protocol outputs $\dist(x,y)$. 
    \item Otherwise the protocol outputs $\bot$.
\end{itemize}
\end{claim}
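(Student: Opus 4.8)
The plan is to separate the argument into \emph{correctness} and \emph{query complexity}, both resting on a small dictionary between Hamming distance and symmetric differences. First I record the basic facts about $S$ and $T$. Since $z\in Z$, the hypothesis that $\dist(u,v)\le 3\log N$ for all $u,v\in Z$ gives $\dist(x,z),\dist(y,z)\le 3\log N$, so $S=\{i: x_i\ne z_i\}$ and $T=\{j: y_j\ne z_j\}$ each have size at most $3\log N$ and lie inside $I=\bigcup_{w\in Z}\{i: w_i\ne z_i\}$, which therefore has $|I|\le 3N\log N$. Moreover $S\triangle T=\{i: x_i\ne y_i\}$, so $|S\triangle T|=\dist(x,y)$, and $S=T$ exactly when $x=y$. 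Every set the two players manipulate is a subset of the common set $I$, and after each oracle call the update rule (which side to keep, which coordinate to drop) depends only on the publicly known answers, so this is a well-defined deterministic protocol making $\EQ$ calls on domains of size at most $2^{|I|}$, which is permitted since oracles accept inputs of arbitrary size.

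For correctness, I would establish, by induction on the iteration count $m$, the following invariant for the outer loop: after $m$ iterations, $c=m$ and $S\triangle T=(S_0\triangle T_0)\setminus\{p_1,\dotsc,p_m\}$ for distinct coordinates $p_1,\dotsc,p_m\in S_0\triangle T_0$, where $(S_0,T_0)$ is the pair produced in the setup phase before the loop. Each iteration removes exactly one \emph{new} element of the symmetric difference, dropping it from whichever of $S$ or $T$ contains it, so $|S\triangle T|=\dist(x,y)-m$ as long as $m\le\dist(x,y)$. The loop then exits either with $S=T$ --- in which case $\dist(x,y)=c\le k$ and the protocol returns $c=\dist(x,y)$ --- or with $c=k$ and $S\ne T$, in which case $|S\triangle T|=\dist(x,y)-k\ge 1$ forces $\dist(x,y)>k$ and the protocol returns $\bot$; this is exactly the asserted behavior. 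The substantive point is the inner binary search: entering it with $S\triangle T\ne\emptyset$, one checks that the round with parameter $K$ preserves $|S'|,|T'|\le K$ (each of $S',T'$ is replaced by either its first $\le K$ elements or a residual of size $\le K$, given $|S'|,|T'|\le 2K$ beforehand, which holds at the start since $|S|,|T|\le 3\log N$), and that $S'\triangle T'$ stays nonempty and keeps containing $\min(S\triangle T)$. As $K$ halves from $\lceil 3\log N\rceil$ down to $1$ in $O(\log\log N)$ rounds, we reach $|S'|,|T'|\le 1$ with $S'\triangle T'$ still nonempty; the three terminal cases then read off the coordinate $\min(S\triangle T)$ and drop it from the correct side, using the $\GT$-from-$\EQ$ protocol of \cref{prop:gt-protocol-eq} on the domain $I$ to decide which of two isolated singletons is smaller.

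For the query complexity, the outer loop condition is tested at most $k+1$ times at a cost of one $\EQ$ call each; each of the at most $k$ iterations runs $O(\log\log N)$ rounds of the inner loop (one $\EQ$ call per round) plus at most one call to the protocol of \cref{prop:gt-protocol-eq} on the domain $I$, costing a further $O(\log\log|I|)=O(\log\log N)$ $\EQ$ calls; the concluding equality test costs one more. Summing gives $O(k\log\log N)$ oracle calls. I expect the main obstacle to be the clean verification of the two inner-search invariants --- in particular, tracking how $S'$ and $T'$, whose sizes need not be equal, evolve under the ``keep the first $K$ elements versus keep the rest'' step, and confirming that the coordinate ultimately removed genuinely lies in the current symmetric difference (so the outer counting invariant survives) and is identified consistently by both players.
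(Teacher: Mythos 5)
Your proposal is correct and follows essentially the same approach as the paper's: both rest on the inner-loop invariant that $\min(S'\triangle T')=\min(S\triangle T)$ is preserved through each binary-search round, together with the outer-loop accounting that each outer iteration removes exactly one element of $S\triangle T$, and the elementary $O(\log\log N)$ query count per outer iteration (including the $\GT$ call on $I$ with $|I|=O(N\log N)$). One caveat worth flagging: the printed pseudocode has the two branches of the inner conditional swapped --- when $\EQ(S_1,T_1)=1$, i.e., the prefixes agree, the protocol should recurse into $S_2,T_2$ rather than $S_1,T_1$, since the symmetric difference is disjoint from $S_1\cap T_1$ --- and your stated invariant, just like the paper's own, tacitly assumes the corrected branching.
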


The analysis of the number of \textsc{Equality} oracle calls is elementary.
The correctness of the \textsc{Bounded Diameter Threshold Distance} protocol follows from the fact that the
inner loop satisfies the following invariant. Let $i \in \bN$ be the smallest element in the symmetric difference
of $S', T'$. Then
\begin{itemize}
    \item If $S_1 \neq T_1$ then $i \in S_1 \cup T_1$, and it is the smallest element in the symmetric difference
    of those two sets.
    \item If $S_1 = T_1$ then $i \in S_2 \cup T_2$, and it is the smallest element in the symmetric difference
    of those two sets.
\end{itemize}

We now use the bounded-diameter search sub-protocol to construct the full search protocol. 

\begin{algorithm}[H]
    \floatname{algorithm}{\textsc{Threshold Distance}($x,y,k$)}
    \renewcommand\thealgorithm{}
    \caption{}
    \begin{algorithmic}[H]
        \If{$\EQ(x,y) = 1$} \Return $0$.
        \ElsIf{$k=0$} \Return $\bot$.
        \ElsIf{$k>0$}
        \State Partition $[d] = A \cup B$ according to \cref{prop:diameter-partition} applied with $Z=X\cup Y$, where $N\leq 2^{n+1}$.
    \If{$\EQ(x_A,y_A)=1$ and $\EQ(x_B, y_B)=0$}
            \State $Z_B\leftarrow \{ z_B \ | z\in Z \wedge \ z_A= x_A = y_A\}$.
            \State \Return \Call{Bounded Diameter Threshold Distance}{$Z_B, x_B, y_B, k$}.
            \State \pcomment{Returns the correct value due to \cref{prop:diameter-partition} and \cref{claim:bd-diam-distance}.}
        \ElsIf{$\EQ(x_B, y_B) = 1$ and $\EQ(x_A, y_A) = 0$}
            \State $Z_A\leftarrow \{ z_A \ | z\in Z \wedge \ z_B= x_B = y_B\}$.
            \State \Return \Call{Bounded Diameter Threshold Distance}{$Z_A, x_A, y_A, k$}.
            \State \pcomment{Returns the correct value due to \cref{prop:diameter-partition} and \cref{claim:bd-diam-distance}.}
        \Else \ \pcomment{In this case $\dist(x_A,y_A), \dist(x_B,y_B) \geq 1$.}
            \State $t \leftarrow $ \Call{Threshold Distance}{$x_A, y_A, k-1$}.
                \State\pcomment{Returns $t=\dist(x_A,y_A)$ if $\dist(x_A,y_A)\leq k-1$.}
            \If{$t=\bot$} \Return $\bot$. \pcomment{$\dist(x_A,y_A) + \dist(x_B,y_B) > (k-1) + 1$.}
            \EndIf
            \State $r\leftarrow $ \Call{Threshold Distance}{$x_B, y_B, k-t$}.
            \State \pcomment{Returns $r = \dist(x_B,y_B)$ if $\dist(x_B,y_B)\leq k-\dist(x_A,y_A)$.}
            \If{$r=\bot$} \Return $\bot$.
            \Else{} \Return $t+r$.
            \EndIf     
        \EndIf
        \EndIf
    \end{algorithmic}
\end{algorithm}

The proposition
follows immediately from the next claim.

\begin{claim}
\label{claim:bd-diam-distance}
    Let $k \in \bN$ be a shared input to both parties. Then on inputs $x \in X, y \in Y$,
    the protocol $\textsc{Threshold Distance}(x,y,k)$ uses at most $O(k \log \log N)$ calls to the \textsc{Equality}
    oracle and outputs the following:
    \begin{itemize}
        \item If $\dist(x,y) \leq k$, the protocol outputs $\dist(x,y)$.
        \item If $\dist(x,y) > k$, the protocol outputs $\bot$.
    \end{itemize}
\end{claim}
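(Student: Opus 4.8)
The plan is to prove the two assertions of \cref{claim:bd-diam-distance} (correctness, and the $O(k\log\log N)$ query bound) together, by strong induction on $k$, following the case split of the \textsc{Threshold Distance} protocol; note that every recursive call strictly decreases the budget parameter, so the induction is well-founded. Throughout I would use as black boxes the preceding claim on the \textsc{Bounded Diameter Threshold Distance} subroutine, together with \cref{prop:diameter-partition} and \cref{prop:gt-protocol-eq}, and the fact that the partition $[d]=A\cup B$ of \cref{prop:diameter-partition} can be computed by both players without communication (it depends only on the shared set $Z$).

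First I would dispatch correctness, the routine half. The base cases ($\EQ(x,y)=1$, or $k=0$ with $x\neq y$) are immediate. For $k\geq 1$ and $x\neq y$ there are three branches. If $x_A=y_A$ (so $x_B\neq y_B$), then $\dist(x,y)=\dist(x_B,y_B)$, and by property~(1) of \cref{prop:diameter-partition} any $z,z'\in Z$ with $z_A=x_A$ satisfy $\dist(z_B,z'_B)\leq 3\log N$, so $Z_B$ has diameter $\leq 3\log N$ and size $\leq N$; thus the subroutine is invoked on a legal instance and returns the correct value by the preceding claim. The branch $x_B=y_B$ is symmetric. In the remaining (``else'') branch, $\dist(x_A,y_A)\geq 1$ and $\dist(x_B,y_B)\geq 1$, and the key identity is the exact additivity $\dist(x,y)=\dist(x_A,y_A)+\dist(x_B,y_B)$ coming from $A\cup B=[d]$. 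Feeding the induction hypothesis into the two recursive calls and chasing the cases ($t=\bot$ versus $t=\dist(x_A,y_A)$, and then $r=\bot$ versus $r=\dist(x_B,y_B)$) against this identity shows the protocol outputs $\dist(x,y)$ exactly when $\dist(x,y)\leq k$, and $\bot$ otherwise.

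The query bound is where the real difficulty lies, and it is the main obstacle: the ``else'' branch spawns \emph{two} recursive calls, so the naive inductive hypothesis $\mathrm{cost}(x,y,k)\leq C(k+1)\log\log N$ does not close, since the two calls alone would already contribute $2Ck\log\log N$. The remedy I would use is to carry a distance-sensitive hypothesis: $\mathrm{cost}(x,y,k)\leq C\log\log N$ if $\dist(x,y)=0$, and
\[
  \mathrm{cost}(x,y,k)\;\leq\; C\,\bigl(2\min\{\dist(x,y),\,k+1\}-1\bigr)\,\log\log N \qquad\text{if }\dist(x,y)\geq 1 .
\]
The crucial structural point is that the potential $\phi(m)=2m-1$ satisfies $\phi(a+b)=\phi(a)+\phi(b)+1$ for $a,b\geq 1$, so splitting a distance-$m$ instance into two nonempty pieces frees up one extra unit of $\log\log N$ — and by the additivity identity above, that is exactly what happens at each ``else'' node. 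This surplus unit (for $C$ at least the $O(1)$ per-node constant) pays for the $\EQ(x,y),\EQ(x_A,y_A),\EQ(x_B,y_B)$ queries there, and the same accounting goes through in the sub-cases where a child's distance exceeds its budget, since then both the child's recursion and the subroutine cost only $O(\mathrm{budget}\cdot\log\log N)$, which is absorbed by the $\min\{\cdot,k+1\}$ in the potential. At a leaf the cost is either $O(1)$ (a base case) or is dominated by one call to \textsc{Bounded Diameter Threshold Distance}, whose outer loop runs $\min\{\dist,\mathrm{budget}\}$ times at $O(\log\log N)$ \textsc{Equality} queries each (here \cref{prop:gt-protocol-eq} is what keeps each inner step at $O(\log\log N)$ despite the possibly huge dimension $d$), which is precisely the leaf term of $\phi$. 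Since $2\min\{\dist(x,y),k+1\}-1=O(k)$, this yields $O(k\log\log N)$ and hence \cref{prop:eqprotocol-for-kHD}. The delicate points to verify carefully are the unbalanced sub-cases where a recursive call returns $\bot$ early, and the claim that the leaf/subroutine cost scales with the distance actually handled rather than with $d$ — the latter being exactly what the bounded-diameter reduction of \cref{prop:diameter-partition} is there to guarantee.
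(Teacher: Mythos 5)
Your proof is correct and fleshes out what the paper dismisses as simply ``an elementary recurrence.'' The key observation you make --- that the naive budget-only recurrence does not close because the else-branch spawns two children both with budget up to $k-1$, and that one must instead exploit the additivity $\dist(x,y)=\dist(x_A,y_A)+\dist(x_B,y_B)$ via the distance-sensitive potential $2\min\{\dist,k+1\}-1$ --- is exactly the right one, and your finer $O\bigl((\min\{k,\dist\}+1)\log\log N\bigr)$ accounting for the subroutine (rather than the coarser $O(k\log\log N)$ stated as a black box in the preceding claim) is indeed needed for the potential argument to close at the leaves.
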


The correctness of the \textsc{Threshold Distance} protocol follows from the claims in the comments, and
the observation that the number of \textsc{Equality} oracle queries is $O(k \log \log N)$, which can be computed by an elementary recurrence.
\end{proof}

\section{Discussion and Open Problems}
\label{section:discussion}

Our \cref{thm:intro-hierarchy} shows that there is an infinite hierarchy of \textsc{$k$-Hamming
Distance} problems within $\CCC$ that are irreducible to lower levels of the hierarchy. We expect
it to be the case that $\mathsf{D}^{\EHD_k}(\EHD_{k+1}) = \omega(1)$ for every constant $k$. Indeed, it seems natural to
expect that, for inputs on $n$ bits,
$\mathsf{D}^{\EHD_k}(\EHD_{k+1}) = \Omega(\log n)$, which matches an easy binary search based upper bound of $O(\log n)$. This was proved for $k=0$ in \cite{HHH23eccc}. It is possible that the question could be answered using the
technique of \cite{CLV19} combined with an analysis of monochromatic rectangles in $\EHD_k$.

\begin{question}
Is it the case that $\mathsf{D}^{\EHD_k}(\EHD_{k+1}^n) = \Theta(\log n)$?
\end{question}

One question that arose in the course of this work is whether a certain \emph{dimension reduction}
result holds for \textsc{$k$-Hamming Distance}. Constant-cost reductions (\cref{def:reductions}) to the \textsc{$k$-Hamming Distance} problem $\EHD_k$ allow queries of arbitrarily large dimension. The
question is whether any such query can be replaced with a constant number of queries to $\EHD_k$
with dimension $O(\log N)$ where $N$ is the original domain size.
Formally: 

\begin{question}
Let $M \in \zo^{N \times N}$ be an arbitrary submatrix of $\EHD^n_k$ where $n$ is arbitrarily large.
Is there an absolute constant $c$ and a function $f : \zo^c \to \zo$ such that
\[
  \forall i,j \in [N] :\qquad M(i,j) = f(H_1(i,j), H_2(i,j), \dotsc, H_c(i,j)) \,,
\]
where each $H_i$ is an $N \times N$ submatrix of $\EHD^d_k$ with $d = O(\log N)$?
\end{question}

If this question has a positive answer, it may be helpful in the future for lower bounds against
Hamming Distance oracles; it is one strategy that we tried in pursuit of \cref{thm:intro-hierarchy}.

An important question left open by this paper is whether
the \textsc{$k$-Hamming Distance} captures the entirety of $\CCC$, up to reductions. In other words, for every problem $\cP \in \CCC$, there exists a
constant $k$ such that $\mathsf{D}^{\EHD_k}(\cP) = O(1)$. We do not believe this to be the case, but there is not any
example of a problem $\cP \in \CCC$ in the literature that might serve as a candidate
counterexample.

Finally, we point out two conjectures in the literature that would imply a stronger form of \cref{thm:formal-iip} that
holds for any constant $d$ and $k = 1$. The first conjecture is that, if 
$\cQ$ and $\cP$ are any problems where $\mathsf{D}^\cQ(\cP) = O(1)$ and $\cQ$ has bounded sign-rank (which holds in
particular for $\IIP_d$ \cite{CHHS23}), $\cP$ also has bounded sign-rank \cite{HHPTZ22}. The second
conjecture is that $\EHD_1$ has unbounded sign-rank \cite{HHPTZ22}.

\begin{center}
    \textbf{\Large Acknowledgments}
\end{center}
Thanks to Viktor Zamaraev for many helpful discussions and for helpful comments on the presentation of the main
proof, and thanks to the anonymous reviewers for their comments which helped improve the presentation of this article.

\bibliographystyle{alpha}
\bibliography{references.bib}

\newcommand{\etalchar}[1]{$^{#1}$}
\begin{thebibliography}{ALMM19}

\bibitem[ALMM19]{ALMM19}
Noga Alon, Roi Livni, Maryanthe Malliaris, and Shay Moran.
\newblock Private {PAC} learning implies finite {L}ittlestone dimension.
\newblock In {\em S{TOC}'19---{P}roceedings of the 51st {A}nnual {ACM} {SIGACT} {S}ymposium on {T}heory of {C}omputing}, pages 852--860. ACM, New York, 2019.

\bibitem[BFS86]{BFS86}
Laszlo Babai, Peter Frankl, and Janos Simon.
\newblock Complexity classes in communication complexity theory.
\newblock In {\em Proceedings of the 27th Annual Symposium on Foundations of Computer Science}, SFCS '86, page 337–347, USA, 1986. IEEE Computer Society.

\bibitem[BW16]{BW16}
Mark Braverman and Omri Weinstein.
\newblock A discrepancy lower bound for information complexity.
\newblock {\em Algorithmica}, 76:846--864, 2016.

\bibitem[Cha23]{Chan23}
Maurice Chandoo.
\newblock Logical labeling schemes.
\newblock {\em Discrete Mathematics}, 346(10):113565, 2023.

\bibitem[CHHS23]{CHHS23}
Tsun-Ming Cheung, Hamed Hatami, Kaave Hosseini, and Morgan Shirley.
\newblock {Separation of the Factorization Norm and Randomized Communication Complexity}.
\newblock In Amnon Ta-Shma, editor, {\em 38th Computational Complexity Conference (CCC 2023)}, volume 264 of {\em Leibniz International Proceedings in Informatics (LIPIcs)}, pages 1:1--1:16, Dagstuhl, Germany, 2023. Schloss Dagstuhl -- Leibniz-Zentrum f{\"u}r Informatik.

\bibitem[CLV19]{CLV19}
Arkadev Chattopadhyay, Shachar Lovett, and Marc Vinyals.
\newblock {Equality Alone Does not Simulate Randomness}.
\newblock In Amir Shpilka, editor, {\em 34th Computational Complexity Conference (CCC 2019)}, volume 137 of {\em Leibniz International Proceedings in Informatics (LIPIcs)}, pages 14:1--14:11, Dagstuhl, Germany, 2019. Schloss Dagstuhl -- Leibniz-Zentrum f{\"u}r Informatik.

\bibitem[DKN15]{DKN15}
I.~Diakonikolas, D.~M. Kane, and V.~Nikishkin.
\newblock Optimal algorithms and lower bounds for testing closeness of structured distributions.
\newblock In {\em 2015 IEEE 56th Annual Symposium on Foundations of Computer Science (FOCS)}, pages 1183--1202, Los Alamitos, CA, USA, oct 2015. IEEE Computer Society.

\bibitem[EHK22]{EHK22}
Louis Esperet, Nathaniel Harms, and Andrey Kupavskii.
\newblock {Sketching Distances in Monotone Graph Classes}.
\newblock In Amit Chakrabarti and Chaitanya Swamy, editors, {\em Approximation, Randomization, and Combinatorial Optimization. Algorithms and Techniques (APPROX/RANDOM 2022)}, volume 245 of {\em Leibniz International Proceedings in Informatics (LIPIcs)}, pages 18:1--18:23, Dagstuhl, Germany, 2022. Schloss Dagstuhl -- Leibniz-Zentrum f{\"u}r Informatik.

\bibitem[EHZ23]{EHZ23}
Louis Esperet, Nathaniel Harms, and Viktor Zamaraev.
\newblock {Optimal Adjacency Labels for Subgraphs of Cartesian Products}.
\newblock In Kousha Etessami, Uriel Feige, and Gabriele Puppis, editors, {\em 50th International Colloquium on Automata, Languages, and Programming (ICALP 2023)}, volume 261 of {\em Leibniz International Proceedings in Informatics (LIPIcs)}, pages 57:1--57:11, Dagstuhl, Germany, 2023. Schloss Dagstuhl -- Leibniz-Zentrum f{\"u}r Informatik.

\bibitem[Fis04]{Fis04}
Eldar Fischer.
\newblock On the strength of comparisons in property testing.
\newblock {\em Information and Computation}, 189(1):107--116, 2004.

\bibitem[FX15]{FX14}
Vitaly Feldman and David Xiao.
\newblock Sample complexity bounds on differentially private learning via communication complexity.
\newblock volume~44, pages 1740--1764, 2015.

\bibitem[GPW18]{GPW18}
Mika G{\"o}{\"o}s, Toniann Pitassi, and Thomas Watson.
\newblock The landscape of communication complexity classes.
\newblock {\em Computational Complexity}, 27(2):245--304, 2018.

\bibitem[Har20]{Har20}
Nathaniel Harms.
\newblock {Universal Communication, Universal Graphs, and Graph Labeling}.
\newblock In Thomas Vidick, editor, {\em 11th Innovations in Theoretical Computer Science Conference (ITCS 2020)}, volume 151 of {\em Leibniz International Proceedings in Informatics (LIPIcs)}, pages 33:1--33:27, Dagstuhl, Germany, 2020. Schloss Dagstuhl -- Leibniz-Zentrum f{\"u}r Informatik.

\bibitem[HH22]{HH22}
H.~Hatami and P.~Hatami.
\newblock The implicit graph conjecture is false.
\newblock In {\em 2022 IEEE 63rd Annual Symposium on Foundations of Computer Science (FOCS)}, pages 1134--1137, Los Alamitos, CA, USA, nov 2022. IEEE Computer Society.

\bibitem[HH24]{HH24}
Hamed Hatami and Pooya Hatami.
\newblock Structure in communication complexity and constant-cost complexity classes.
\newblock {\em arXiv preprint arXiv:2401.14623}, 2024.

\bibitem[HHH22a]{HHH22counter}
Lianna Hambardzumyan, Hamed Hatami, and Pooya Hatami.
\newblock A counter-example to the probabilistic universal graph conjecture via randomized communication complexity.
\newblock {\em Discrete Applied Mathematics}, 322:117--122, 2022.

\bibitem[HHH22b]{HHH23eccc}
Lianna Hambardzumyan, Hamed Hatami, and Pooya Hatami.
\newblock Dimension-free bounds and structural results in communication complexity.
\newblock {\em Israel Journal of Mathematics}, 253(2):555--616, 2022.

\bibitem[HHM23]{HHM23}
Hamed Hatami, Kaave Hosseini, and Xiang Meng.
\newblock A {B}orsuk-{U}lam lower bound for sign-rank and its applications.
\newblock In {\em S{TOC}'23---{P}roceedings of the 55th {A}nnual {ACM} {S}ymposium on {T}heory of {C}omputing}, pages 463--471. ACM, New York, [2023] \copyright 2023.

\bibitem[HHP{\etalchar{+}}22]{HHPTZ22}
Hamed Hatami, Pooya Hatami, William Pires, Ran Tao, and Rosie Zhao.
\newblock {Lower Bound Methods for Sign-Rank and Their Limitations}.
\newblock In Amit Chakrabarti and Chaitanya Swamy, editors, {\em Approximation, Randomization, and Combinatorial Optimization. Algorithms and Techniques (APPROX/RANDOM 2022)}, volume 245 of {\em Leibniz International Proceedings in Informatics (LIPIcs)}, pages 22:1--22:24, Dagstuhl, Germany, 2022. Schloss Dagstuhl -- Leibniz-Zentrum f{\"u}r Informatik.

\bibitem[HSZZ06]{HSZZ06}
Wei Huang, Yaoyun Shi, Shengyu Zhang, and Yufan Zhu.
\newblock The communication complexity of the hamming distance problem.
\newblock {\em Information Processing Letters}, 99(4):149--153, 2006.

\bibitem[HWZ22]{HWZ22}
Nathaniel Harms, Sebastian Wild, and Viktor Zamaraev.
\newblock Randomized communication and implicit graph representations.
\newblock In {\em Proceedings of the 54th Annual ACM SIGACT Symposium on Theory of Computing}, STOC 2022, page 1220–1233, New York, NY, USA, 2022. Association for Computing Machinery.

\bibitem[HZ24]{HZ24}
Nathaniel Harms and Viktor Zamaraev.
\newblock Randomized communication and implicit representations for matrices and graphs of small sign-rank.
\newblock In {\em Proceedings of the 2024 Annual ACM-SIAM Symposium on Discrete Algorithms (SODA)}, pages 1810--1833, 2024.

\bibitem[KN96]{KN96}
Eyal Kushilevitz and Noam Nisan.
\newblock {\em Communication Complexity}.
\newblock Cambridge University Press, 1996.

\bibitem[Lit88]{Lit88}
Nick Littlestone.
\newblock Learning quickly when irrelevant attributes abound: A new linear-threshold algorithm.
\newblock {\em Machine learning}, 2:285--318, 1988.

\bibitem[LS09]{LS09}
Nati Linial and Adi Shraibman.
\newblock Learning complexity vs. communication complexity.
\newblock {\em Combin. Probab. Comput.}, 18(1-2):227--245, 2009.

\bibitem[LZ15]{LZ15}
Vadim~V Lozin and Victor Zamaraev.
\newblock Boundary properties of factorial classes of graphs.
\newblock {\em Journal of Graph Theory}, 78(3):207--218, 2015.

\bibitem[Nis93]{Nis93}
Noam Nisan.
\newblock The communication complexity of threshold gates.
\newblock {\em Combinatorics, Paul Erd{\H{o}}s is Eighty}, 1:301--315, 1993.

\bibitem[NP24]{NP23}
Moni Naor and Eugene Pekel.
\newblock Adjacency sketches in adversarial environments.
\newblock In {\em Proceedings of the 2024 {A}nnual {ACM}-{SIAM} {S}ymposium on {D}iscrete {A}lgorithms ({SODA})}, pages 1067--1098. SIAM, Philadelphia, PA, 2024.

\bibitem[PSS23]{PSS23}
Toniann Pitassi, Morgan Shirley, and Adi Shraibman.
\newblock The strength of equality oracles in communication.
\newblock In {\em 14th {I}nnovations in {T}heoretical {C}omputer {S}cience {C}onference}, volume 251 of {\em LIPIcs. Leibniz Int. Proc. Inform.}, pages Art. No. 89, 19. Schloss Dagstuhl. Leibniz-Zent. Inform., Wadern, 2023.

\bibitem[PSW21]{PSW21}
Toniann Pitassi, Morgan Shirley, and Thomas Watson.
\newblock Nondeterministic and randomized boolean hierarchies in communication complexity.
\newblock {\em Computational Complexity}, 30:1--48, 2021.

\bibitem[RS15]{RS15}
Sivaramakrishnan~Natarajan Ramamoorthy and Makrand Sinha.
\newblock On the communication complexity of greater-than.
\newblock In {\em 2015 53rd Annual Allerton Conference on Communication, Control, and Computing (Allerton)}, pages 442--444, 2015.

\bibitem[RY20]{RY20}
Anup Rao and Amir Yehudayoff.
\newblock {\em Communication Complexity and Applications}.
\newblock Cambridge University Press, 2020.

\bibitem[Sa{\u{g}}18]{Sag18}
Mert Sa{\u{g}}lam.
\newblock Near log-convexity of measured heat in (discrete) time and consequences.
\newblock In {\em 59th {A}nnual {IEEE} {S}ymposium on {F}oundations of {C}omputer {S}cience---{FOCS} 2018}, pages 967--978. IEEE Computer Soc., Los Alamitos, CA, 2018.

\bibitem[Vio15]{Vio15}
Emanuele Viola.
\newblock The communication complexity of addition.
\newblock {\em Combinatorica}, 35:703--747, 2015.

\bibitem[Yao03]{Yao03}
Andrew Chi-Chih Yao.
\newblock On the power of quantum fingerprinting.
\newblock In {\em Proceedings of the {T}hirty-{F}ifth {A}nnual {ACM} {S}ymposium on {T}heory of {C}omputing}, pages 77--81. ACM, New York, 2003.

\end{thebibliography}

\end{document}